\newcommand{\writelabel}[2]{
    \node (TT) at (#1) {#2};
}
\newcommand{\addtext}[1]{#1}
\newcommand{\addsmalltext}[1]{
    \tiny{#1}
}
\newcommand{\addtinytext}[1]{
    \tiny{#1}
}
\newenvironment{radiusgraph}[0]{
    \begin{center}
    \begin{scriptsize}
    \begin{tikzpicture}[
        scale=0.5,
        metric-node/.style={fill, circle, inner sep=0pt, minimum size=2.5pt},
        request-node/.style={fill, diamond, inner sep=0pt, minimum size=4pt, color=trd_main},
        empty-node/.style={circle, inner sep=0pt, minimum size=2.5pt},
        full-node/.style={fill, circle, inner sep=0pt, minimum size=2.5pt},
    ]
}{
    \end{tikzpicture}
    \end{scriptsize}
    \end{center}
}
\newcommand{\definereferencepoints}[1]{
    \coordinate (N2) at ($(#1)+(-45:1.5cm)$);
    \coordinate (N3) at ($(#1)+(-2,-2.5)$);
    \coordinate (N4) at ($(#1)+(-2,-3)$);
}
\newcommand{\drawcircles}[2]{
    \draw [draw=snd_main, fill=snd_fill, fill opacity=0.4] (N1) circle (#1cm);
    \draw [draw=fst_main, fill=fst_fill, fill opacity=0.4] (N2) circle (#2cm);
}
\newcommand{\showradiuses}[0]{
    \draw [style={->}, color=snd_main] (N1) --+(70:2cm) node[midway, right] {$\rho_x$};
    \draw [style={->}, color=fst_main] (N2) --+(160:1cm) node[midway, below] {$\rho_z$};
}
\newcommand{\connectrequests}[0]{
    \draw (N1) -- (N2);
}
\newcommand{\showfirstrequest}[0]{
    \node [request-node, label={[shift={(0.1,-0.05)}] above left:$r_1$}] (R1) at (N1) {};
}
\newcommand{\showsecondrequest}[0]{
    \node [request-node, label={[shift={(-0.1,-0.05)}] above right:$r_2$}] (R2) at (N2) {};
}
\newcommand{\drawtimeline}[1]{
    \draw [->] ($(#1)$) -- ($(#1) + (4,0)$);
    \node at ($(#1) + (4.05,-0.3)$) {\tiny{$t$}};
    
    \foreach \hl in {0,1,2,3} {
        \draw ($(#1) + (\hl,-0.06)$) -- ($(#1) + (\hl,0.06)$);
        \node at ($(#1) + (\hl,-0.35)$) {\tiny{\hl}};
    }
}
\newcommand{\showcurrenttime}[2]{
    \node [metric-node] (T) at ($(#1)+(#2,0)$) {};
}
\newcommand{\drawborder}[2]{
    \draw[rounded corners=4pt, color=gray] (#1) rectangle (#2);
}
\newcommand{\drawplotsegment}[9]{
    \draw [dashed, gray] ($(#1)+(#5,0)$) -- ($(#1)+(#5,#8)$);
    \node (P) at ($(#1)+(#5,#8+0.6)$) {$#9$};
    
    \draw [color=#2] ($(#1)+(#4,#3)$) -- ($(#1)+(#5,#3)$);
    \ifthenelse{\isodd{#6}}{\node [full-node, color=#2] (S) at ($(#1)+(#4,#3)$) {};}{}
    \ifthenelse{\isodd{#7}}{\node [empty-node, draw=#2, fill=white] (E) at ($(#1)+(#5,#3)$) {};}{}
}
\newcommand{\drawsimpleplotsegment}[7]{
    \draw [color=#2] ($(#1)+(#4,#3)$) -- ($(#1)+(#5,#3)$);
    \ifthenelse{\isodd{#6}}{\node [full-node, color=#2] (S) at ($(#1)+(#4,#3)$) {};}{}
    \ifthenelse{\isodd{#7}}{\node [empty-node, draw=#2, fill=white] (E) at ($(#1)+(#5,#3)$) {};}{}
}
\newenvironment{timers}[0]{
    \begin{center}
    \begin{scriptsize}
    \begin{tikzpicture}[
        scale=0.5,
        request-node/.style={fill, diamond, inner sep=0pt, minimum size=4pt, color=trd_main},
    ]
}{
    \end{tikzpicture}
    \end{scriptsize}
    \end{center}
}
\newcommand{\drawarrivaltimeline}[9]{
    \foreach \start/\end/\drawstart in {#9}{
        \draw [rounded corners=2pt, fill=#8, draw=none] ($(#1)+(\start,0.2)$) rectangle ($(#1)+(\end,-0.2)$);
    }

    \draw ($(#1)+(0,-0.15)$) -- ($(#1)+(0,0.15)$);
    \node (L) at ($(#1)+(#2+0.5,0)$) {$#3$};
    
    \foreach \mark/\arrivaltime/\labtype/\point/\number in {#4}{
        \ifthenelse{
            \isodd{\mark}
        }{
            \node [request-node] (M) at ($(#1) + (\arrivaltime,0)$) {};
        }{
            \draw [trd_main, thick] ($(#1)+(\arrivaltime,-0.12)$) -- ($(#1)+(\arrivaltime,0.12)$);
        }
        \node (R) at ($(#1) + (\arrivaltime,0.6)$) {};
    }
    
    \foreach \start/\end/\varname/\varid/\rtype/\rpoint/\rnumber/\isdone in {#5}{
        \ifthenelse{
            \isodd{\isdone}
        }{
            \draw [stealth-stealth] ($(#1)+(\start,0)$) -- ($(#1)+(\end,0)$) node [midway, above] {\ifthenelse{\equal{\rtype}{0}}{$\varname_{\varid}$}{\ifthenelse{\equal{\rtype}{1}}{$\varname_{\varid}^{\rnumber}$}{$\varname_{\varid}^{\rpoint,\rnumber}$}}};
        }{
            \draw [stealth-] ($(#1)+(\start,0)$) -- ($(#1)+(\end,0)$) node [midway, above] {\ifthenelse{\equal{\rtype}{0}}{$\varname_{\varid}$}{\ifthenelse{\equal{\rtype}{1}}{$\varname_{\varid}^{\rnumber}$}{$\varname_{\varid}^{\rpoint,\rnumber}$}}};
            \draw ($(#1)+(\end,-0.12)$) -- ($(#1)+(\end,0.12)$);
        }
    }
    
    \draw [stealth->] ($(#1)+(#6,0)$) -- ($(#1)+(#2,0)$);
    
    \foreach \start/\end/\drawstart in {#9}{
        \ifthenelse{
            \isodd{\drawstart}
        }{
            \draw [stealth-stealth, #7] ($(#1)+(\start,0)$) -- ($(#1)+(\end,0)$);
        }{
            \draw [-stealth, #7] ($(#1)+(\start,0)$) -- ($(#1)+(\end,0)$);
        }
    }
    
    \node (D) at ($(#1) + (#2-0.7,0.6)$) {$\cdots$};
}
\newcommand{\markrequests}[4]{
    \draw [->] (#1) -- +(#2,0);
    \draw ($(#1)+(0,-0.15)$) -- ($(#1)+(0,0.15)$);
    \node (SV) at ($(#1)+(#2+0.5,0)$) {$#3$};
    
    \foreach \mark/\arrivaltime/\labtype/\point/\number in {#4}{
        \ifthenelse{
            \isodd{\mark}
        }{
            \node [request-node] (M) at ($(#1) + (\arrivaltime,0)$) {};
        }{
            \draw [trd_main, thick] ($(#1)+(\arrivaltime,-0.12)$) -- ($(#1)+(\arrivaltime,0.12)$);
        }
        \node (R) at ($(#1) + (\arrivaltime,0.6)$) {\ifthenelse{\isodd{\labtype}}{$r_{\point}^{\number}$}{$r_{\number}$}};
    }
}
\newcommand{\drawgrayarrivaltimeline}[5]{
    \draw [->, gray] (#1) -- +(#2,0);
    \draw [gray] ($(#1)+(0,-0.15)$) -- ($(#1)+(0,0.15)$);
    \node (SV) at ($(#1)+(#2+0.5,0)$) {$#3$};
    
    \foreach \mark/\arrivaltime in {#4}{
        \ifthenelse{
            \isodd{\mark}
        }{
            \node [request-node] (M) at ($(#1) + (\arrivaltime,0)$) {};
        }{
            \draw [trd_main, thick] ($(#1)+(\arrivaltime,-0.12)$) -- ($(#1)+(\arrivaltime,0.12)$);
        }
        \node (R) at ($(#1) + (\arrivaltime,0.6)$) {};
    }
    
    \foreach \start/\end/\varname/\varnum in {#5}{
        \draw [stealth-stealth] ($(#1)+(\start,0)$) -- ($(#1)+(\end,0)$) node [midway, above] {$\varname_{\varnum}$};
    }
    
    \node (D) at ($(#1) + (#2-0.7,0.6)$) {$\cdots$};
}
\newcommand{\drawreferencelines}[3]{
    \foreach \arrival in {#3}{
        \draw [dotted, gray] ($(#1)+(\arrival,0)$) -- ($(#1)+(\arrival,#2)$);
    }
}
\newcommand{\markrequestsandvariables}[6]{
    \draw ($(#1)+(0,-0.15)$) -- ($(#1)+(0,0.15)$);
    \node (SV) at ($(#1)+(#2+0.5,0)$) {$#3$};
    
    \foreach \mark/\arrivaltime/\number in {#4}{
        \ifthenelse{
            \isodd{\mark}
        }{
            \node [request-node] (M) at ($(#1) + (\arrivaltime,0)$) {};
        }{
            \draw [trd_main, thick] ($(#1)+(\arrivaltime,-0.12)$) -- ($(#1)+(\arrivaltime,0.12)$);
        }
        \node (R) at ($(#1) + (\arrivaltime,-0.6)$) {$r_{\number}$};
    }
    
    \foreach \start/\end/\varname/\rtype/\varid/\rid in {#5}{
        \draw [stealth-stealth] ($(#1)+(\start,0)$) -- ($(#1)+(\end,0)$);
        \node (L) at ($(#1)+(\start+1/2*\end-1/2*\start,0.5)$) {\ifthenelse{\isodd{\rtype}}{$\varname_{\varid}^{\rid}$}{$\varname_{\varid}$}};
    }
    
    \draw [stealth->] ($(#1)+(#6,0)$) -- ($(#1)+(#2,0)$);
    
    \node (D) at ($(#1) + (#2-0.7,0.6)$) {$\cdots$};
}
\newcommand{\drawstoppedarrivaltimeline}[9]{
    \foreach \start/\end/\drawstart in {#9}{
        \draw [rounded corners=2pt, fill=#8, draw=none] ($(#1)+(\start,0.2)$) rectangle ($(#1)+(\end,-0.2)$);
    }

    \draw ($(#1)+(0,-0.15)$) -- ($(#1)+(0,0.15)$);
    
    \foreach \mark/\arrivaltime/\labtype/\point/\number in {#4}{
        \ifthenelse{
            \isodd{\mark}
        }{
            \node [request-node] (M) at ($(#1) + (\arrivaltime,0)$) {};
        }{
            \draw [trd_main, thick] ($(#1)+(\arrivaltime,-0.12)$) -- ($(#1)+(\arrivaltime,0.12)$);
        }
        \node (R) at ($(#1) + (\arrivaltime,0.6)$) {};
    }
    
    \foreach \start/\end/\varname/\varid/\rtype/\rpoint/\rnumber/\isdone in {#5}{
        \ifthenelse{
            \isodd{\isdone}
        }{
            \draw [stealth-stealth] ($(#1)+(\start,0)$) -- ($(#1)+(\end,0)$) node [midway, above] {\ifthenelse{\equal{\rtype}{0}}{$\varname_{\varid}$}{\ifthenelse{\equal{\rtype}{1}}{$\varname_{\varid}^{\rnumber}$}{$\varname_{\varid}^{\rpoint,\rnumber}$}}};
        }{
            \draw [stealth-] ($(#1)+(\start,0)$) -- ($(#1)+(\end,0)$) node [midway, above] {\ifthenelse{\equal{\rtype}{0}}{$\varname_{\varid}$}{\ifthenelse{\equal{\rtype}{1}}{$\varname_{\varid}^{\rnumber}$}{$\varname_{\varid}^{\rpoint,\rnumber}$}}};
            \draw ($(#1)+(\end,-0.12)$) -- ($(#1)+(\end,0.12)$);
        }
    }
    
    \foreach \start/\end/\drawstart in {#9}{
        \ifthenelse{
            \isodd{\drawstart}
        }{
            \draw [stealth-stealth, #7] ($(#1)+(\start,0)$) -- ($(#1)+(\end,0)$);
        }{
            \draw [-stealth, #7] ($(#1)+(\start,0)$) -- ($(#1)+(\end,0)$);
        }
    }
}
\definecolor{fst}{RGB}{105,190,40}
\colorlet{fst_main}{fst!90!black}
\colorlet{fst_fill}{fst!10}
\definecolor{snd}{RGB}{13,152,186}
\colorlet{snd_main}{snd!90!black}
\colorlet{snd_fill}{snd!10}
\definecolor{add}{RGB}{177,148,216}
\colorlet{add_main}{add!90!black}
\colorlet{add_fill}{add!10}
\definecolor{trd}{RGB}{255,210,0}
\colorlet{trd_main}{trd!90!black}
\def\E{\mathbb{E}}
\def\R{\mathcal{R}}
\def\S{\mathcal{S}}
\newcommand{\optim}{{\rm OPT}}
\newcommand{\algor}{{\rm ALG}}
\newcommand{\cc}{{\rm CC}}
\newcommand{\dc}{{\rm DC}}
\newcommand{\roe}{{\rm RoE}}
\newcommand{\prob}{\mathbb{P}}
\newcommand{\greedy}{{\rm Greedy}}
\newcommand{\rad}{{\rm Radius}}
\newtheorem{definition}{Definition}
\newcommand{\metricspace}{\mathcal{M}}
\newcommand{\setofpoints}{\mathcal{X}}
\newcommand{\anypoint}{x}
\newcommand{\distancesymbol}{d}
\newcommand{\distance}[2]{\distancesymbol(#1,#2)}
\newcommand{\location}[1]{\ell(#1)}
\newcommand{\arrival}[1]{t(#1)}
\newcommand{\request}[1][]{r_{#1}}
\newcommand{\waitingparam}[1][]{\lambda_{#1}}
\newcommand{\radius}[1]{\rho_{#1}}
\newcommand{\positivesubset}[1]{#1^{+}}
\newcommand{\realnum}{\mathbb{R}}
\newcommand{\expdistr}[1]{{\rm Exp}(#1)}
\newtheorem{theorem}[definition]{Theorem}
\newtheorem{lemma}[definition]{Lemma}
\newtheorem{observation}[definition]{Observation}
\newtheorem{proposition}[definition]{Proposition}
\newtheorem{corollary}[definition]{Corollary}
\newtheorem{claim}[definition]{Claim}
\title{Online matching with delays and stochastic arrival times}
\author{Mathieu Mari\thanks{LIRMM, University of Montpellier, Montpellier, France. mari.mathieu.06@gmail.com}, Michał Pawłowski\thanks{MIMUW, University of Warsaw and IDEAS NCBR, Warsaw, Poland. michal.pawlowski196@gmail.com}, Runtian Ren\thanks{Institute of Computer Science, University of Wrocław, Wrocław, Poland. renruntian@gmail.com}, Piotr Sankowski\thanks{MIMUW, University of Warsaw and MIM Solutions, Warsaw, Poland. piotr.sankowski@gmail.com}\thanks{A preliminary version appeared in the Proceedings of the 22nd International Conference on Autonomous Agents and Multi-agent Systems (AAMAS) 2023 pp. 976–984.}}
\date{}
\begin{document}

\maketitle

\begin{abstract}
This paper presents a new research direction for the Min-cost Perfect Matching with Delays (MPMD), a problem introduced by Emek et al.\ (STOC'16). 
In the original version of this problem, we are given an $n$-point metric space, where requests arrive in an online fashion. 
Our goal is to minimize the matching cost for an even number of requests. 
However, contrary to traditional online matching problems, a request does not have to be paired immediately at the time of its arrival. 
Instead, the decision of whether to match a request can be postponed for time $t$ at a delay cost of $t$. 
For this reason, the goal of the MPMD is not only to minimize the distance cost of the generated matching but to minimize the overall sum of distance and delay costs. 
Interestingly, it is proved that in the standard case of the adversarially generated requests, no online algorithm can achieve a competitive ratio better than $O(\log n / \log \log n)$ (Ashlagi et al., APPROX/RANDOM'17).

Here we consider a stochastic version of the MPMD problem where the input requests follow a Poisson arrival process. 
For such problem, we show that the above lower bound can be improved by presenting two deterministic online algorithms which, in expectation, are constant competitive, i.e., the ratio between the expected costs of the output matching and the optimal offline solution is bounded by a constant. 
The first one is a simple greedy algorithm that matches any two requests once the sum of their delay costs exceeds their connection cost, i.e., the distance between them. 
The second algorithm builds on the tools used to analyze the first one in order to obtain even better performance guarantees. 
This result is rather surprising as the greedy approach cannot achieve a competitive ratio better than $O(m^{\log 1.5 + \varepsilon})$ in the adversarial model, where $m$ denotes the number of agents. 
Finally, we prove that it is possible to obtain similar results for the general case when the delay cost follows an arbitrary positive and non-decreasing function, for the asymmetric distance case, as well as for the MPMD variant with penalties to clear pending requests.
\end{abstract}


\section{Introduction}
Imagine players logging into an online platform to compete against each other in a two player game. 
The platform needs to pair them in a way that maximizes the overall satisfaction from the gameplay. 
Typically, a player prefers to be matched with someone with similar gaming skills. 
Thus, the platform has to consider the experience gap when pairing two players. 
This skill level difference is referred to as the \emph{connection cost}. 
Additionally, once logged in, a player can tolerate some waiting time to be matched --- this is why the platform can postpone the pairing decision in the hope of a better matching to be found (i.e., the login of another player with similar skills). 
Nonetheless, the waiting time for each player has its limits. 
A player may become unsatisfied if their gaming request has been ignored for too long. 
This time gap between logging into the platform and joining a gaming session is referred to as the \emph{delay cost}. 
The platform's goal is to pair all the online players into sessions, such that the total connection cost plus the total delay cost produced is minimized.

The above is an example of an online problem called Min-cost Perfect Matching with Delays (MPMD) \cite{emek2016online}. 
It has drawn researchers attention recently \cite{emek2016online, azar2017polylogarithmic, ashlagi2017min, bienkowski2017match, bienkowski2018primal, liu2018impatient, azar2020deterministic, azar2021min} due to many real-life applications ranging from Uber rides, dating platforms, kidney exchange programs etc. 
Formally, the problem of MPMD is defined as follows. 
The input is a set of $m$ requests arriving at arbitrary times in a metric space $\metricspace = (\setofpoints, \distancesymbol)$ equipped with a distance function $d$.
Here, $m$ is an even integer, and $\setofpoints$ denotes the set of points in $\metricspace$. 
Each request $\request$ is characterized by its \emph{location} $\location{\request} \in \setofpoints$ and \emph{arrival time} $\arrival{\request} \in \positivesubset{\realnum}$.
When two requests $\request$ and $\request'$ are matched into a pair at time $t \ge \max\{\arrival{\request}, \arrival{\request'}\}$, a \emph{connection cost} $\distance{\location{\request}}{\location{\request'}}$ plus a \emph{delay cost} $(t - \arrival{\request}) + (t - \arrival{\request'})$ is incurred.
The target is to minimize the total cost produced by the online algorithm for matching all the requests into pairs.

Previously, the MPMD problem was studied in an adversarial model where an online adversary generated the requests at different times in the given metric space $\metricspace$.
Under this adversarial model, no online algorithm can achieve a constant competitive ratio:
\begin{itemize}
    \item[-] if the metric is known in advance, the current best competitiveness is $O(\log n)$ (here $n$ denotes the number of points in the metric) \cite[Theorem 3.1]{azar2017polylogarithmic} and no online algorithm can achieve competitive ratio better than $\Omega(\log n / \log \log n)$ \cite[Theorem 1]{ashlagi2017min};
    \item[-] if the metric is not known in advance, the current best competitiveness is $O(m^{\log 1.5 + \varepsilon}/ \varepsilon)$ (with $\varepsilon > 0$), achieved tightly by a deterministic online greedy algorithm \cite[Theorem 1]{azar2020deterministic}. 
\end{itemize}

In fact, it is often too pessimistic to assume no stochastic information on the input is available. 
Again, consider the example of matching gaming requests. 
The online gaming platform has all the historical data and can estimate the arrival frequency of the players with each particular skill level on an hourly basis. 
Therefore, it is reasonable to assume that the gaming requests follow some stochastic distribution. 
Depending on the time of day, though, there may be more or fewer players logging in. 
However, if we divide the timeline into small intervals, it is reasonable to assume that within each of them, the distribution is regular and the requests are mutually independent (since the players don't know each other).
Based on these observations, the following question can be naturally stated: 
{\em in the case when stochastic information on the input is available, can we devise online algorithms for MPMD with better performance guarantees?}

In this paper, we provide an affirmative answer to the question above. 
We consider a stochastic online version of MPMD, by assuming that the requests arrive following a Poisson arrival process. 
To be more precise, the waiting time between any two consecutive requests arriving at any metrical point $\anypoint$, follows an exponential distribution $\expdistr{\waitingparam[\anypoint]}$ with parameter $\waitingparam[\anypoint] \ge 0$.
Under such a model, the goal of the platform is to minimize the expected cost produced by an algorithm $\algor$ to deal with a random input sequence consisting of $m$ requests.
To evaluate the performance of our algorithms on stochastic inputs, we use the {\em ratio of expectations}, that corresponds to the ratio of the expected cost of the algorithm to the expected cost of the optimal offline solution (see Definition \ref{def:roe}).


\paragraph{Our contribution.}
We prove that the performance guarantee obtained in the Poisson arrival model is significantly better compared with the current best competitiveness obtained in the adversarial model.
More specifically, we show that an intuitive \emph{Greedy} algorithm, which matches any two requests immediately when their total delay cost reaches their distance, achieves a constant ratio of expectations.
\begin{restatable}{theorem}{greedyapprox}
\label{main:greedy}
For MPMD in the Poisson arrival model, the Greedy algorithm achieves a ratio of expectations of $16 / (1 - e^{-2})$.
\end{restatable}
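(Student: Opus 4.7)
The argument splits into three parts: a deterministic per-pair upper bound on Greedy, an expected lower bound on the optimal offline cost, and a coupling that matches these via a common ``Poisson radius'' quantity.

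First, I will exploit the defining property of the algorithm: Greedy matches $r$ and $r'$ at the first instant $t$ for which $(t-\arrival{r})+(t-\arrival{r'})=\distance{\location{r}}{\location{r'}}$, so the delay cost of a pair produced by Greedy equals its connection cost. Consequently the total cost of $\ouralg$ is at most $2\sum_{\{r,r'\}}\distance{\location{r}}{\location{r'}}$, where the sum ranges over the pairs in Greedy's matching. This reduces the task to controlling the expected sum of matching distances produced by Greedy.

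Second, for each request $r$ located at $\anypoint$ I will introduce a ``Poisson radius'' $\radius{r}$, defined implicitly so that the expected number of Poisson arrivals landing within $\closedball{\anypoint}(\radius{r})$ over a time window of length $\radius{r}$ equals a suitable constant (the natural choice being $2$, which is where the $e^{-2}$ factor ultimately enters). Relying on the independence of the per-point Poisson processes and the memorylessness of $\expdistr{\waitingparam[\anypoint]}$, I will show that the expected distance at which Greedy matches $r$ is $O(\radius{r})$; combined with the first step this gives $\expval{}{\ouralg}\le 8\sum_r \radius{r}$ (in expectation over the input).

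Third, I will prove a matching lower bound $\expval{}{\optalg}\ge \tfrac{1-e^{-2}}{2}\sum_r \radius{r}$. The intuition is that, in order to match $r$, any offline solution must either traverse distance $\Omega(\radius{r})$ to reach a partner or wait a time $\Omega(\radius{r})$; the chance that a suitable nearby partner exists at a strictly smaller scale is at most the probability that a Poisson random variable of mean $2$ is zero, namely $e^{-2}$. A careful charging, restricted to balls $\closedball{\location{r}}(\radius{r}/2)$ whose expected populations are under control, will make each request pay an independent $\Omega(\radius{r})$ contribution to $\optalg$. Dividing the two bounds yields the claimed ratio $16/(1-e^{-2})$.

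The main obstacle will be the second step. Greedy's decisions depend on the entire history, and a single request can be a latent partner for many later arrivals, introducing correlations that are delicate to untangle. My plan to overcome this is to fix a reference request $r$, condition on the history up to $\arrival{r}$, and apply the strong Markov property of the Poisson process to argue that the future is an independent collection of $\expdistr{\waitingparam[\anypoint]}$ processes; then the time and location at which Greedy's matching rule fires for $r$ can be stochastically dominated by the hitting time of a Poisson process whose integrated intensity is exactly the quantity that defines $\radius{r}$. A symmetric argument treats the case in which $r$ is matched with an earlier unpaired request, using the fact that any such request's residual ``patience'' is likewise a clean function of the Poisson rates in its ball.
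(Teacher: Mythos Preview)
Your three-part structure matches the paper's: upper-bound Greedy by a radius sum, lower-bound $\optim$ by a radius sum, take the ratio. But two steps have genuine gaps.

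\textbf{Step 1 has the inequality backwards.} When a request $r'$ arrives and some pending $r$ already satisfies $\arrival{r'}-\arrival{r}\ge \distance{\location{r}}{\location{r'}}$, Greedy fires immediately and the pair's delay strictly exceeds its connection cost. What holds in general is connection $\le$ delay, hence $\greedy(\sigma)\le 2\cdot(\text{delay cost})$, not $\le 2\cdot(\text{connection cost})$. The paper (Claim~\ref{claim:greedy_connected_at_most_delay}) therefore bounds the delay, and this is what you should control in Step~2.

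\textbf{Step 2 is missing the combinatorial lemma that makes the bound go through.} You propose to stochastically dominate the firing time of $r$ by the hitting time of the Poisson process restricted to $\overline{B}(\anypoint,\radius{\anypoint})$, but this fails as stated: the next arrival in that ball may come before $r$ has accumulated delay $\radius{\anypoint}$, and Greedy need not match it to $r$. The paper's argument is two-phase and deterministic: first wait until time $\arrival{r}+\radius{\anypoint}$; if $r$ is still pending then, the \emph{next} arrival $r'$ in $\overline{B}(\anypoint,\radius{\anypoint})$ is necessarily matched with $r$, because if Greedy instead paired $r'$ with some other pending $r''$ the triangle inequality would force $r$ and $r''$ to have been matched earlier --- a contradiction. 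This yields $w(r)\le\radius{\anypoint}+Y$ with $Y\sim\expdistr{\lambda(\overline{B}(\anypoint,\radius{\anypoint}))}$, and the radius definition gives $\E[Y]\le\radius{\anypoint}$, so $\E[w(r)]\le 2\radius{\anypoint}$. Your strong-Markov step supplies the independence of $Y$ from the past, but not this deterministic matching fact; the ``symmetric argument for earlier requests'' is unnecessary. The paper also separates out at most $|\setofpoints|$ ``late'' requests for which no such $r'$ ever arrives; these contribute an additive $O(|\setofpoints|\cdot d_{\max})$ term that disappears in the limit defining the ratio of expectations.

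Your lower-bound sketch is close to Lemma~\ref{lemma:lowerboundingscheme}: for $r$ on $\anypoint$, the cheapest partner costs at least $\min\{W^-,W^+,\radius{\anypoint}\}$ where $W^{\pm}\sim\expdistr{\lambda({B}^\circ(\anypoint,\radius{\anypoint}))}$ look backward and forward in time; the $e^{-2}$ enters via $\E[\min\{Z,a\}]=(1-e^{-2\mu a})/(2\mu)$ for $Z\sim\expdistr{2\mu}$ together with $\mu a\le 1$ from the radius definition. No half-radius balls are needed.
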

To prove this theorem, we apply the following strategy. 
We first notice that the connection cost of a Greedy solution is at most its delay cost. 
Thus, it becomes the core of the proof to upper bound the delay cost. 
For this purpose, in Section \ref{section:algorithms}, we define the \emph{radius} $\rho_x \ge 0$ for each metric point $x$. Such a radius depends on the parameters of the problem and roughly corresponds to the expected delay time for matching the requests located on $x$. 
Then, we show how to use the radius to lower bound the cost of the optimal offline solution.
Intuitively, we prove that a request located on $x$ is in expectation responsible for a total cost of $\Omega(\rho_x)$. 
At this point, it is worth emphasizing once again that in the adversarial model when the metric is not known in advance the current best known competitive ratio is $\Omega(m^{\log \frac{3}{2} + \varepsilon})$) (see the counter example in \cite{azar2020deterministic}, Appendix A).

This notion of radius suggests another potential algorithm for MPMD with stochastic inputs. 
Indeed, when a new request $r$ arrives on a point $x$, we know that this request will wait for a time $O(\rho_x)$ in average before being matched by the Greedy algorithm. 
In particular, $r$ will be matched with another request that is at distance $O(\rho_x)$. 
Therefore, if at the time of the $r$'s arrival, there is another pending\footnote{By pending we mean that at that time, the request is still unmatched by the algorithm.} request $r'$ that is at distance less than $\rho_x$, why not match these two requests directly? 
In Section \ref{section:algorithms}, we formalize this intuition and design an algorithm called \emph{Radius}. 
Thanks to these anticipated pairings, the performance ratio is improved by a factor of 2.  
\begin{restatable}{theorem}{radiusapprox}
\label{main:radius}
For MPMD in the Poisson arrival model, the Radius algorithm achieves a ratio of expectations of $8 / (1 - e^{-2})$.
\end{restatable}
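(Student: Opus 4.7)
The plan is to follow the skeleton of the proof of Theorem \ref{main:greedy} and harvest a factor of two from the anticipated pairings. I would partition the pairs produced by Radius on any realization of the Poisson process into two classes: the \emph{anticipated} pairs, formed when a newly arrived request at a point $x$ finds a pending partner within distance $\rho_x$ and is matched instantly, and the \emph{Greedy-style} pairs, formed when the accumulated delay of two pending requests reaches their pairwise distance. For each pair I would bound its total charged cost (connection plus delay of both endpoints) by a multiple of the radii at the two endpoints. For Greedy-style pairs this reduces to the bound already used in the Greedy analysis, since the matching rule forces delay equal to connection and both are controlled by $\rho$ via the same waiting-time estimates. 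For anticipated pairs $(r,r')$ with $r$ arriving last at point $x$, the connection cost is at most $\rho_x$ by construction, the delay of $r$ is zero, and the accumulated delay of $r'$ is dominated by the Greedy threshold that would otherwise have fired, which in turn is controlled by $\rho_{\ell(r')}$.

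Next I would invoke, without modification, the lower bound on $\mathbb{E}[\optim]$ proved in Section \ref{section:algorithms} for Theorem \ref{main:greedy}: each request located at $x$ contributes in expectation $\Omega(\rho_x)$ to the optimal offline cost, with the precise factor $(1-e^{-2})$ appearing in the denominator of the final ratio. This bound is independent of the algorithm and depends only on the Poisson arrival rates and the definition of the radii, so it transfers verbatim from the Greedy setting to the Radius setting.

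The main obstacle, and the source of the factor-of-two improvement, lies in accounting correctly for the delays in anticipated matches. In the Greedy analysis each pair contributes total cost $2\, d(r,r')$ because delay equals connection, and the analysis effectively charges $\Theta(\rho)$ twice per pair, once to each endpoint. In Radius, the delay of the arriving endpoint of an anticipated pair vanishes, so the per-pair charge effectively halves. Making this rigorous requires verifying that anticipated matchings do not inflate the accumulated delays of the requests that remain pending --- informally, matching a newly arrived request immediately cannot prolong the wait of any other request. Once this monotonicity-type claim is in place, combining the improved per-pair upper bound with the unchanged lower bound on $\mathbb{E}[\optim]$ yields the target ratio $8 / (1 - e^{-2})$.
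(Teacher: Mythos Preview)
Your proposal rests on a misreading of the Radius algorithm. Radius does \emph{not} have a Greedy-style phase in which growing balls eventually touch: the radii $\rho_x$ are fixed at the outset, and every non-terminal pair is formed \emph{at the arrival time} of one of its two endpoints, when the new request lands inside a pending request's ball or the two fixed balls overlap (lines 5--8 of Algorithm~\ref{pseudocode-radius-original}). Consequently your partition into ``anticipated'' and ``Greedy-style'' pairs is vacuous: the second class is empty, and your argument that the delay of the older endpoint is ``dominated by the Greedy threshold that would otherwise have fired'' refers to a mechanism that is not part of the algorithm. You also miss the third class of pairs, those formed by the arbitrary matching of the residual pending set after the last arrival (line~12); these contribute the additive $\frac{1}{2}|\setofpoints|\,d_{\max}$ term and must be bounded separately.

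The paper's route is more direct and does not pass through any comparison with Greedy. For the upper bound (Lemma~\ref{lemma:radius_upperbound}): the connection cost of each non-terminal pair $(r,r')$ is at most $\rho_{\ell(r)}+\rho_{\ell(r')}$ by the matching criterion itself (so summing gives $\sum_{r\in\sigma}\rho_{\ell(r)}$, not the $\rho_x$-only bound you wrote); the delay of any request $r$ is at most the waiting time until the \emph{next} arrival in $\overline{B}(\ell(r),\rho_{\ell(r)})$, which by Proposition~\ref{proposition:subset_exponential_waiting_time} and Observation~\ref{observation:lowerboundingscheme} has expectation at most $\rho_{\ell(r)}$; and there are at most $|\setofpoints|/2$ terminal pairs, each of connection cost at most $d_{\max}$. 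Summing gives $2m\sum_x \frac{\lambda_x}{\lambda(\setofpoints)}\rho_x$ plus a constant. You are correct that the lower bound on $\E[\optim]$ (Lemma~\ref{lemma:lowerboundingscheme}) carries over unchanged; dividing and letting $m\to\infty$ yields $8/(1-e^{-2})$. The factor-of-two saving over Greedy comes not from ``one endpoint has zero delay'' but from the per-request delay bound dropping from $2\rho_x$ to $\rho_x$ (no need to first accumulate $\rho_x$ of delay before the ball can catch an arrival) together with the connection cost being bounded by $\rho_x$ directly rather than via the delay.
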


Finally, we show how to adjust the Greedy and the Radius algorithms to deal with other variants of the MPMD problem in a way that preserves constant performance ratio. 
In Section \ref{sec:general_delay}, we look at the generalization of the problem where a request can be delayed for a time $t$ at a cost $f(t)$, where $f$ is a given positive and non-decreasing function. 
We show that, unless $f$ is such that the expected cost of the optimal offline solution is infinite, our algorithms achieve constant performance ratios, where the constants only depend on the delay cost function $f$. 
In Section \ref{section:mpmdfp}, we consider the variant of MPMD where we are allowed to clear pending requests for a fixed penalty cost. 

\paragraph{Related work.}
The MPMD problem was introduced by Emek et al.\ \cite{emek2016online}. 
In their paper, they proposed a randomized online algorithm that achieves a competitive ratio of $O(\log^2 n + \log \Delta)$, where $n$ is the number of points of the metric space and $\Delta$ is the aspect ratio.  
Later, Azar et al.\ \cite{azar2017polylogarithmic} improved the competitive ratio to $O(\log n)$, thereby removing the dependence of $\Delta$ in the competitive ratio. 
Both of these papers randomly embed the metric space into a tree of distortion $O(\log n)$, and then propose online algorithms on tree metrics. 
In the adversarial model, this bound is essentially tight, since Ashlagi et al.\ \cite{ashlagi2017min} showed that any randomized algorithm achieves a competitive ratio of $\Omega(\log n / \log \log n)$. 
Note that the above results assume that the $n$-point metric is given in advance. 
When the metric is not known in advance, Bienkowski et al.\ proposed a $O(m^{2.46})$-competitive online greedy algorithm \cite{bienkowski2017match} and a $O(m)$-competitive online algorithm based on the primal-dual method \cite{bienkowski2018primal}, where $m$ denotes the number of requests released. 
Azar and Jacob-Fanani \cite{azar2020deterministic} later proposed a $O(m^{\log 1.5 + \varepsilon}/ \varepsilon)$-competitive greedy algorithm, which is currently the best deterministic online algorithm.
In the special case of a two-points metric, Emek et al.\ \cite{emek2019minimum} proposed a 3-competitive deterministic greedy algorithm, and He et al.\cite{he2023randomized} proposed a 2-competitive randomized online algorithm. 
Deryckere and Umboh \cite{deryckere2023online} studied online matching with set delay, where the delay cost at any given time is an arbitrary function of the set of pending requests.

Another line of work considered a bipartite variant of MPMD, i.e., the Min-cost Bipartite Perfect Matching with (linear) Delays (MBPMD), where each request can be either red or blue, and only two requests of different colors can be matched into a pair.
For MBPMD, Ashlagi et al.\ \cite{ashlagi2017min} presented two algorithms achieving a competitive ratio of $O(\log n)$ --- the first is an adaptation of Emek et al.'s \cite{emek2016online} algorithm to the bipartite case, and the second is an adaptation of the algorithm proposed by Azar et al.\ \cite{azar2017polylogarithmic}.  
Besides, Ashlagi et al.\ \cite{ashlagi2017min} presented a lower bound of $\Omega(\sqrt{\log n / \log \log n})$ on any randomized algorithm.
Recently, inspired from the Robust Matching (RM) algorithm proposed by Raghvendra \cite{raghvendra2016robust, nayyar2017input} for the minimum cost bipartite perfect matching (MBPM) problem, Kuo \cite{kuo2024online} proposed a better $O(m^{0.5} \cdot \log^2 m)$-competitive online algorithm, which is currently the best deterministic online algorithm for MBPMD. 

The MPMD and MBPMD problems have been investigated in the more general case when any request can be delayed for a duration $t$ at a cost $f(t)$. 
Liu et al.\ \cite{liu2018impatient} considered the case when $f$ is a convex function, and established a lower bound $\Omega(n)$ on the competitive ratio of any deterministic algorithm for Convex-MPMD. 
Specifically, this bound is obtained for an $n$-point uniform metric and a delay function of the form $f(t) = t^{\alpha}$ for $\alpha > 1$. 
In this case, they presented a deterministic algorithm that achieves a competitive ratio of $O(n)$. 
In the case when $f$ is a concave function, Azar et al.\ \cite{azar2021min} gave a $O(1)$-competitive (resp.\ $O(\log n)$-competitive) deterministic online algorithms for MPMD and MBPMD for a single-point metric (resp.\ any metric). 

Other classical online problems have been also considered under such delay setting, such as online service \cite{azar2017online, bienkowski2018online, azar2019general, touitou2023improved}, multi-level aggregation \cite{bienkowski2016online, buchbinder2017depth, carrasco2018online, azar2019general, bienkowski2021new, le2023power, mari2024online}, facility location \cite{bienkowski2021online,azar2019general,azar2020beyond}, bin packing \cite{azar2019price, epstein2021bin}, set cover \cite{azar2020set,touitou2021nearly, le2023power} and many others \cite{melnyk2021online, gupta2020caching, azar2020beyond, touitou2021nearly, chen2022online, kakimura2023deterministic, kawase2024online}.

One stochastic online (weighted) matching problem \cite{collina2020dynamic, aouad2020dynamic, kakimura2021dynamic, kessel2022stationary, baumler2023superiority} can be seen as a deadline variant of our problem.
That is, each request of a particular type $i$ arrives with a Poisson arrival rate $\lambda_i$; after arrival, this request departs with a Poisson rate $\mu_i$. 
Matching a type $i$ request with a type $j$ request creates a value $v_{ij}$ and the target is to maximize the total value of the matching solution produced online.
Collina et al. \cite{collina2020dynamic} proposed a randomized online algorithm based on linear programming, which achieves a ratio of expectations of 1/8. 
Aouad and Sarita{\c{c}} \cite{aouad2020dynamic} proposed better algorithm with ratio-of-expectation of $(1 - e^{-1}) / 4$. 
Kessel et al. \cite{kessel2022stationary} studied the bipartite version of the problem and B{\"a}umler et al. \cite{baumler2023superiority} studied the special case to maximize the number of pairs produced. 
Finally, we remark that matching is a huge topic, drawing attentions from both the theory and real applications perspectives since the seminal work of Edmonds \cite{edmonds1965maximum, edmonds1965paths}. 
In recent years, motivated by job market, kidney exchanges etc, many other online matching results have also been conducted, e.g., \cite{farhadi2022generalized, goyal2022secretary, boehmer2022proportional, cho2022two, brilliantova2022fair, ma2022group, kamiyama2020stable, kawase2020approximately, zhou2019robust, kern2019generalized, aziz2017stable, brubach2017attenuate, pini2014stable}.
Different from MPMD, these works assume that the matching decision must be made immediately at request arrival. 
To the best of our knowledge, we are the first to consider MPMD in the stochastic arrival model. 


\paragraph{Paper Organization.}
We first introduce all the necessary preliminaries in Section \ref{section:preliminaries}. 
Next, we present the details of how Greedy and Radius algorithms work in Section \ref{section:algorithms}.
In Sections \ref{section:lowerbound} - \ref{section:proof_greedy}, we provide the proofs for the lemmas mentioned in Section \ref{section:algorithms} correspondingly. 
In Section \ref{sec:general_delay}, we consider the general case where the delay cost follows an arbitrary positive and non-decreasing function.
In Section \ref{section:mpmdfp}, we consider a variant of MPMD called MPMDfp, where it is allowed to clear a request by paying a penalty. 
In Section \ref{section:asymmetric}, we consider an asymmetric distance case where the distance cost between two requests located at $x, y$ is $(d(x, y)+d(y, x)) / 2$.
Finally, in Section \ref{section:conclusion}, we mention some concluding remarks and open questions. 

\section{Preliminaries}
\label{section:preliminaries}
\paragraph{Problem statement.}
A \emph{metric space} $\metricspace = (\setofpoints, \distancesymbol)$ is a set of points $\setofpoints$ equipped with a distance function $\distancesymbol: \setofpoints \times \setofpoints \to \positivesubset{\realnum}$ that satisfies the triangle inequality. 
The input for the MPMD problem consists of a sequence $\sigma$ of $m$ requests ($m$ being an even integer), where each request $\request \in \sigma$ is characterized by its \emph{location} $\location{\request} \in \setofpoints$ and \emph{arrival time} $\arrival{\request} \in \realnum^+$ (w.l.o.g., suppose that no two requests arrive at the same time). 
Now, given any solution for an input sequence $\sigma$, let $M$ denote the set of paired requests (i.e., the perfect matching generated for $\sigma$), and let $s(\request) \ge t(\request)$ denote the moment when a request $\request$ is matched. 
Note that if $\request$ and $\request'$ are matched into a pair, i.e., $(\request, \request') \in M$, we have $s(\request) = s(\request')$. 
Using this notation, the total cost of a solution $(M, s)$ is the sum of its \emph{delay cost} and its \emph{connection cost} defined as follows. 
The delay cost produced by the solution is the sum of the delay costs $s(\request) - \arrival{\request}$ incurred for each request $\request$. 
Similarly, the connection cost is the sum of distances between all the paired requests, i.e., $\sum_{(r, r') \in M} d(\ell(r), \ell(r'))$.

Let $\optim(\sigma)$ denote the minimum cost of a feasible solutions for $\sigma$. 
Notice that it corresponds to a minimum weight perfect matching for $\sigma$, where the weight of an edge $(r, r') \in \sigma \times \sigma$ is given by $d(\ell(r),\ell(r')) + |t(r) - t(r')|$. Indeed, for any pair ($r, r'$) produced by the optimal solution it holds that $s(r) = s(r') = \max\{t(r), t(r')\}$. 
This observation implies that the optimal offline solution can be computed in polynomial time of the number of requests.
In this paper, we are interested in the design of \emph{online} algorithms for the problem: the decision of matching a pair $(r,r')$ at time $t$ only depends on $\{r \in \sigma: t(r) \le t\}$, and this decision is irrevocable. 

\paragraph{Stochastic model.}
In the stochastic version of MPMD, the goal is to design an online algorithm that processes a sequence of requests arriving at ``random moments'', instead of being generated by an online adversary. 
To formalize the notion of random arrival times, we use the Poisson arrival process: given any point $\anypoint \in \setofpoints$, we assume that the requests arrive at $\anypoint$ with a Poisson arrival rate $\waitingparam[\anypoint] > 0$. 
Recall that an exponential variable $X \sim \expdistr{\waitingparam}$ with parameter $\lambda>0$ has a probability density function $f_{\waitingparam}(t) = \waitingparam e^{-\waitingparam t}$ for $t \ge 0$ and expectation $\E[X] = 1 / \waitingparam$. 
The exponential distribution may be viewed as a continuous counterpart of the geometric distribution, which describes the number of Bernoulli trials necessary for a discrete process to change state (here, observing a new request on a given point). 
The exponential distribution is used for instance in physics to model the time until a radioactive particle decays, or in queuing theory to model the time it takes for an agent to serve the request of a customer. 

\begin{definition}[distributed Poisson arrival model]
We say that a (random) \emph{requests sequence} $\sigma$ follows distributed Poisson arrival model if the \emph{waiting time} between any two consecutive requests arriving at the same point $\anypoint \in \setofpoints$ follows an \emph{exponential distribution} with parameter $\waitingparam[\anypoint] > 0$ and the variables representing waiting times are mutually independent.
\end{definition}

In this paper, when we say that $\sigma$ is \emph{a random request sequence of length} $m$, for some integer $m$, we mean that $\sigma$ consists of $m$ requests, and that the arrival times of the requests follow the above distributed Poisson arrival model. 
In this context we measure the performance of our algorithms using \emph{ratio of expectations}:
\begin{definition} \label{def:roe}
We say that an algorithm $\algor$ for MPMD has a ratio of expectations $C \ge 1$, if 
\begin{equation*}
    \overline{\lim_{m \to \infty}} \frac{\E_{\sigma}^m[\algor(\sigma)]}{\E_{\sigma}^m[\optim(\sigma)]} \le C,
\end{equation*}
where $\algor(\sigma)$ (resp.\ $\optim(\sigma)$) denotes the cost of $\algor$ (resp.\ an optimal offline solution) on the request sequence $\sigma$, and $\E_{\sigma}^m[\algor(\sigma)]$ (resp.\ $\E_{\sigma}^m[\optim(\sigma)]$) denotes the expected cost of $\algor$ (resp.\ $\optim$) on a random sequence $\sigma$ consisting of $m$ requests generated by the Poisson arrival process. 
\end{definition}

Note that there are other criteria to measure the performance of online algorithms in the stochastic input model, such as \emph{expectation of ratio}, defined as the expected value of the ratio $\algor(\sigma)/\optim(\sigma)$ over all random inputs $\sigma$ (see e.g. \cite{garg2008stochastic}).\footnote{To clarify, the criteria for stochastic input model, no matter ratio of expectations or expectation of ratio, is ``weaker'' than the classic competitive ratio, which is used to evaluate the performance of an online algorithm in the adversarial model. 
This is because, each online optimization problem can be interpreted as a game between the online algorithm and the adversary: the adversary releases a sequence of requests to ``challenge'' the online algorithm, and the online algorithm has to make decisions during input being revealed incrementally.
In the classic adversarial online model, the adversary can release the vicious requests based on the decisions made by the online algorithm, for the purpose of making the online algorithm produce ``bad'' results compared with the optimal offline solution. 
However, in the stochastic model, the adversary has to obey some statistic rules (such as Poisson property on the requests arrivals), and hence is not able to release the online requests arbitrarily. As a result, for any online problem, the performance ratio of an online algorithm in the stochastic model, is typically better than the competitive ratio of this algorithm in the adversarial model.} 

We now present a more analysis-friendly version of the Poisson arrival model, referred to as the \emph{centralized} model. 
To design this equivalent process, we exploit the two well-known properties of the exponential distribution. 

\begin{proposition}[memoryless property] \label{proposition:memoryless}
    If $X$ is an exponential variable with parameter $\waitingparam$, then for all $s,t\ge 0$, we have
    \begin{equation*}
        \prob\left(X > s + t \ |\  X > s\right) = \prob(X > t) = e^{-\waitingparam t}.
    \end{equation*}
\end{proposition}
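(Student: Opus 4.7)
The plan is to prove the memoryless property by a direct computation using the definition of conditional probability and the closed-form expression for the survival function of the exponential distribution. The statement is a classical fact, so I don't expect any serious obstacle; the main thing is to organize the three ingredients cleanly.

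First, I would establish the survival function formula. For $X \sim \expdistr{\waitingparam}$ and any $u \ge 0$, integrating the density yields
\begin{equation*}
    \prob{}{X > u} = \int_u^{\infty} \waitingparam e^{-\waitingparam v}\, dv = e^{-\waitingparam u}.
\end{equation*}
This takes care of the second equality in the statement, $\prob{}{X > t} = e^{-\waitingparam t}$, and it also provides the two values I will need for the ratio.

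Next, I would apply the definition of conditional probability to the event $\{X > s+t\} \cap \{X > s\}$. Since $t \ge 0$, the event $\{X > s+t\}$ is contained in $\{X > s\}$, so the intersection collapses to $\{X > s+t\}$. Hence
\begin{equation*}
    \prob{}{X > s+t \mid X > s} \;=\; \frac{\prob{}{X > s+t}}{\prob{}{X > s}} \;=\; \frac{e^{-\waitingparam (s+t)}}{e^{-\waitingparam s}} \;=\; e^{-\waitingparam t},
\end{equation*}
which matches $\prob{}{X > t}$ from the first step. This chain of equalities gives exactly the two asserted identities and completes the proof. The only subtlety worth flagging is the containment $\{X > s+t\} \subseteq \{X > s\}$, which relies on the hypothesis $t \ge 0$; everything else is formal manipulation.
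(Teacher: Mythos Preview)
Your proof is correct; it is the standard direct computation using the survival function $\prob(X>u)=e^{-\lambda u}$ together with the containment $\{X>s+t\}\subseteq\{X>s\}$ for $t\ge 0$. The paper itself does not supply a proof of this proposition, treating it as a well-known property of the exponential distribution, so there is nothing to compare against.
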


\begin{proposition} \label{proposition:minimum}
    Given $n$ independent exponential variables $X_i \sim \expdistr{\lambda_i}$ for $i \in \{1, 2, \dots, n\}$, let $Z := \min\{X_1, X_2, \dots, X_n\}$ and let $\lambda := \sum_{i = 1}^n \lambda_i$. It holds that
    \begin{tasks}[style=enumerate](3)
        \task $Z \sim \expdistr{\lambda}$,
        \task $\prob(Z = X_i) =  \lambda_i / \lambda$,
        \task $Z \perp \{Z = X_i\}$,
    \end{tasks}
    where $\perp$ denotes independence.
\end{proposition}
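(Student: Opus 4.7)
The plan is to verify each of the three claims by direct computation, exploiting independence of the $X_i$'s and the closed form of the exponential density/survival function. All three parts are classical facts about exponential variables; the only mild subtlety is part (c), which requires computing a joint distribution rather than a marginal.

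For part (a), I would compute the survival function of $Z$. Since $Z > t$ is equivalent to $X_i > t$ for all $i$, independence gives
\begin{equation*}
    \prob(Z > t) = \prod_{i=1}^n \prob(X_i > t) = \prod_{i=1}^n e^{-\lambda_i t} = e^{-\lambda t},
\end{equation*}
so $Z$ is exponential with parameter $\lambda$.

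For part (b), the event $\{Z = X_i\}$ is the event that $X_i$ is strictly smaller than all the other $X_j$'s (ties have probability zero for continuous variables). Conditioning on $X_i = s$ and integrating against its density,
\begin{equation*}
    \prob(Z = X_i) = \int_0^\infty \lambda_i e^{-\lambda_i s} \prod_{j \neq i} \prob(X_j > s)\, ds = \int_0^\infty \lambda_i e^{-\lambda s}\, ds = \frac{\lambda_i}{\lambda}.
\end{equation*}

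For part (c), I would establish independence by showing that the joint probability factorizes. For any $t \ge 0$, the same computation as above but with the integration starting at $t$ yields
\begin{equation*}
    \prob(Z > t,\ Z = X_i) = \int_t^\infty \lambda_i e^{-\lambda s}\, ds = \frac{\lambda_i}{\lambda} \cdot e^{-\lambda t} = \prob(Z = X_i) \cdot \prob(Z > t),
\end{equation*}
which together with part (a) gives the desired independence of the continuous variable $Z$ and the discrete event $\{Z = X_i\}$. The main (and only mild) obstacle is just making sure that the joint computation in part (c) is written carefully enough to distinguish it from the marginal computation in part (b); beyond that the proof is a short routine exercise using the exponential density and independence.
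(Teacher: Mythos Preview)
Your proof is correct. The paper states this proposition as a well-known property of exponential distributions and does not include a proof; your direct computation via survival functions and conditioning on $X_i = s$ is the standard argument and is entirely sound.
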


To construct the centralized model, we first assign to each point $\anypoint$ a clock with a timer that goes off at the moment determined by an exponential variable with parameter $\waitingparam[\anypoint]$. 
We define the timers to set themselves immediately after they ring and start counting the next exponentially-distributed time interval. 
Here, we assume that all timers are mutually independent and independent of the history prior to the time they were set. 
Given this setup, we say that a request arrives at point $\anypoint$ every time its timer goes off.

\begin{figure}[ht]
\input{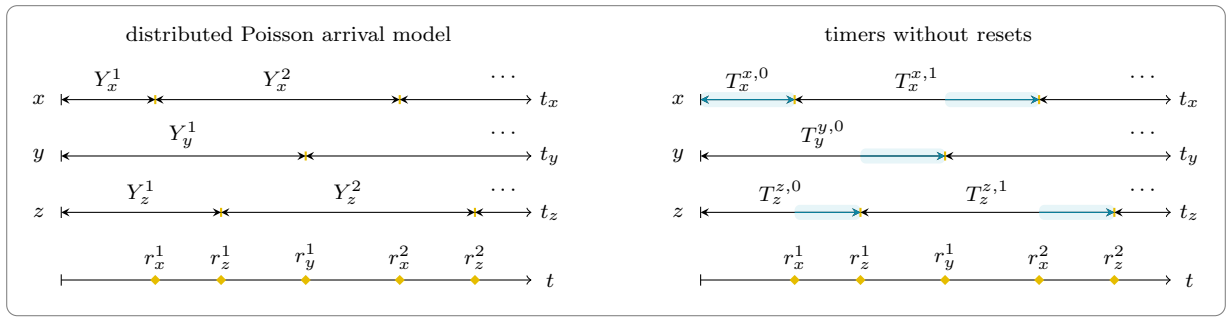}
\vspace{-16pt}
\caption{Example showing the correspondence between distributed Poisson arrival model and exponential timers (without resets). The graph on the right highlights in blue the waiting times between the consecutive arrivals from the perspective of the whole metric space.}
\label{figure:distributed_Poisson}
\end{figure}

It is easy to notice that as of now, we have only rephrased the description of the previous model (see Figure \ref{figure:distributed_Poisson}, where $Y_x^{i+1}$ and $T_x^{x, i+1}$ both represent the waiting time between the $i$-th and the $(i+1)$-th request arriving at $x$). 
Timers, though, allow us to reset them at any point (i.e. stop the current timer and set a new one), obtaining an equivalent stochastic process. 
Indeed, by the memoryless property, resetting running timers is equivalent to letting them continue to run. 
Using this insight, we can prove that the following arrival model is equivalent to the one already presented.

\begin{definition}[centralized Poisson arrival model] \label{definition:centralized_poisson_arrival}
    We say that a (random) \emph{requests sequence} $\sigma$ follows centralized Poisson arrival model if the \emph{waiting time} between any two consecutive requests in the given metric space follows an \emph{exponential distribution} with parameter $\waitingparam(\setofpoints) := \sum_{\anypoint \in \setofpoints} \lambda_\anypoint$ and each time a request arrives, the probability of it appearing at point $\anypoint$ equals $\waitingparam[\anypoint] / \waitingparam(\setofpoints)$. 
    We assume that the waiting times and requests location choices are all mutually independent.
\end{definition}

To get a better understanding of how this model relates to timers with resets, see Figure \ref{figure:centralized_Poisson}. 
The graph on the left shows the centralized model where the waiting time between appearances of the $i$-th and the $(i+1)$-th requests is determined by the realization of variable $Y_{i+1} \sim \expdistr{\waitingparam(\setofpoints)}$. 
On the right, we set the timers $T_\anypoint^i$ for each point $\anypoint$ independently and wait for their minimum to go off. 
By Proposition \ref{proposition:minimum}, the minimum follows an exponential distribution with parameter $\expdistr{\waitingparam(\setofpoints)}$ and has the same appearance distribution over points as the centralized model.

\begin{figure}[ht]
\input{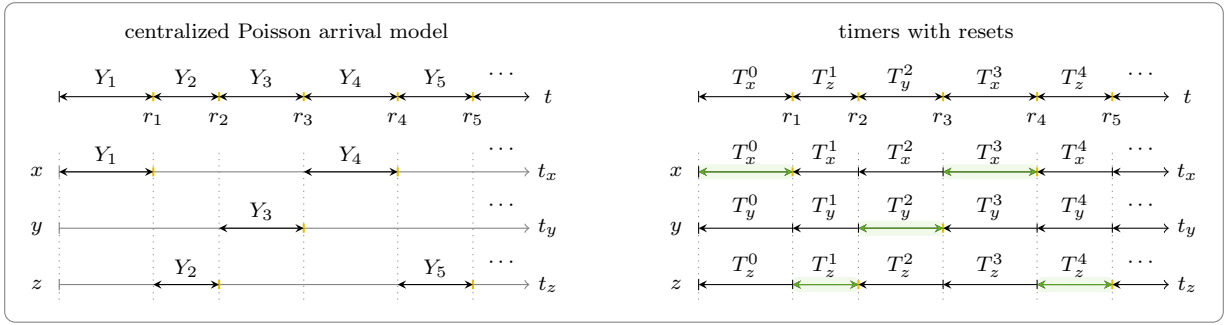}
\vspace{-16pt}
\caption{Example showing the correspondence between the centralized Poisson arrival model and timers with resets. On the right, a double-headed arrow represents a timer that went off, while a single-headed arrow means that we had to reset a timer.}
\label{figure:centralized_Poisson}
\end{figure}

Notice that since both models are equivalent, it gives us another way of looking at the stochastic process we work with --- it is sufficient to define an arrival rate for the whole metric space and adjust the requests appearance distribution over the points.

\section{Constant competitive algorithms}
\label{section:algorithms}
In this section, we introduce two deterministic online algorithms for the MPMD problem: Greedy and Radius. 
We formally define the radius of each metric point which is used to design the Radius algorithm. 
We present the upper bounds on the expected cost of our algorithms (Lemmas \ref{lemma:greedy_upperbound} and \ref{lemma:radius_upperbound}) and the lower bound on the expected cost of the optimal offline solution (Lemma  \ref{lemma:lowerboundingscheme}). 
We give an overview of the techniques used to obtain these bounds. 
Finally, with these Lemmas, we prove Theorems \ref{main:greedy} and \ref{main:radius}. 

\subsection{The Greedy algorithm}
First, let us present a simple greedy algorithm. 
Its strategy is pretty straightforward: once the total waiting time of any two pending requests exceeds the distance between them, Greedy immediately matches them together. 
It is easy to show that this algorithm is well-defined. 
Indeed, since the metric space $\metricspace$ is bounded (as it contains a finite number of points), the waiting time of the last request is bounded by the diameter of $\metricspace$. 
Together with the assumption that the input sequence $\sigma$ has an even number of requests, it proves that Greedy outputs a perfect matching on $\sigma$.
Notice that this algorithm works more generally in the online adversarial model, and additionally that it does not require to know the metric space or the exponential parameters in advance. 
For a more formal description of this greedy procedure, see the pseudo-code of Algorithm \ref{pseudocode:greedy_original}.\\

\RestyleAlgo{boxruled}
\LinesNumbered
\SetAlgoVlined
\begin{algorithm}
\caption{Greedy}
\label{pseudocode:greedy_original}
\KwIn{A sequence $\sigma$ of requests.}
\KwOut{A perfect matching of the requests.}
\For{any time $t$}{
    \If{
        there exist pending requests $\request, \request'$ such that $(t - t(\request)) + (t - t(\request')) \ge \distance{\location{\request}}{\location{\request'}}$
    }{
        match them into a pair with ties broken arbitrarily.
    }
}
\end{algorithm}
\vspace{-6pt}

\begin{figure}[ht]
\input{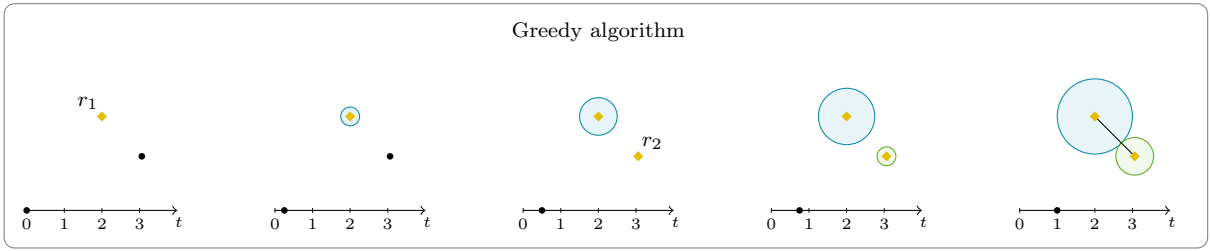}
\vspace{-16pt}
\caption{An example of how Greedy works on a sequence of two requests arriving at times 0 and 0.5 in a 2-point metric space with the distance between the points equal to 1.5.}
\label{figure:greedy_algorithm}
\end{figure}

To better understand the algorithm, we can look at its geometrical interpretation. 
Here, when a request $\request$ appears at some point $\anypoint$, a ball centred at $x$ starts growing with a uniform rate as time passes by. 
The radius of this ball represents the delay cost incurred due to leaving $\request$ unmatched. 
Hence, once two balls intersect, the pending requests located at their centres are paired. 
Figure \ref{figure:greedy_algorithm} shows an example of such a process.

The remaining part of this subsection presents a sketch of how to prove the constant ratio of expectations for Greedy (Theorem \ref{main:greedy}). 
First, we observe that for each request served by this algorithm, its connection cost does not exceed its delay cost. 
Thus, if we find the upper bound for the latter, we will be able to estimate the total expected cost of the matching generated by Greedy on a request sequence $\sigma$. 
To do so, let us focus on finding the expected delay cost of a single request $\request$ arriving at some point $\anypoint \in \setofpoints$. 
We say that it is matched with a close request if the distance between them is bounded by some threshold $\radius{\anypoint}$ that we will refer to as a radius. 
For now, it suffices to know that this value depends on the arrival location $\anypoint$ of $\request$ and will be defined later. 
To introduce formally the radius, we use the following notation for closed and open balls.

\begin{definition} \label{definition:balls}
    For each point $\anypoint \in \setofpoints$, let $\overline{B}(\anypoint,u)$ (resp.\ ${B}^\circ(\anypoint,u)$) denote the set of metric points $y \in \setofpoints$ with a distance no more than (resp.\ strictly less than) $u$ from $\anypoint$.
\end{definition}

The next part of the analysis heavily depends on whether there exists a request arriving after $r$ at any point in $\overline{B}(\anypoint,\radius{\anypoint})$ or not. 
When the latter happens, we call $r$ a late request and upper bound the cost of serving it by the highest value possible --- the sum of the metric space diameter and the expected waiting time for the next request to arrive. 
Although the estimation may seem exaggerated, it can be proved that only a few such requests exist. 
For the first case, when a close request arrives after $r$, with the right choice of $\radius{\anypoint}$, the expected cost of serving $\request$ can be upper bounded by a constant times the radius. 
We define radius as follows. For any subset of points $\mathcal{S}\subseteq \setofpoints$, we denote $\lambda(\mathcal{S}) := \sum_{\anypoint \in \mathcal{S}} \lambda_x$. 

\begin{definition} \label{definition:radius}
    For each point $\anypoint \in \setofpoints$, define its radius $\radius{\anypoint}$ as the minimum value $u \ge 0$, such that 
    \begin{equation*}
        \frac{1}{\lambda(\overline{B}(\anypoint, u))} \le u.
    \end{equation*}
\end{definition}
The idea behind it is that it balances the relationship between the diameter of $\overline{B}(\anypoint, u)$ and the expected waiting time between consecutive request arrivals within the points of this ball. 
Indeed, using the information from the preliminaries, one can show that the latter is equal to the left-hand side of the inequality. 
Finally, we note that the radius is well-defined as the function $u \mapsto 1 / \lambda(\overline{B}(x,u))$ is non-increasing and thus $\radius{\anypoint} \in (0, 1 / \lambda_{\anypoint}]$. 
See Figure \ref{figure:definingradius} for a pictorial example.

\begin{figure}[ht]
\input{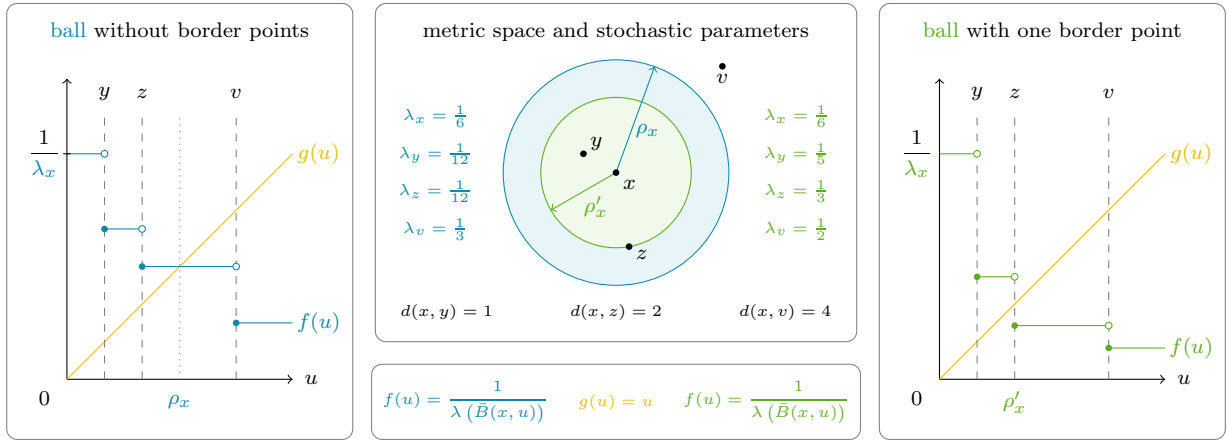}
\vspace{-16pt}
\caption{When determining the radius for some point $\anypoint$, two cases may occur. First, we have an equality in Definition \ref{definition:radius}, meaning that the plots of $f(u) = 1 / \lambda\left(\overline{B}(x,u)\right)$ and $g(u) = u$ intersect explicitly (see the graph on the left). Second, the value of $f(u)$ drops below $g(u)$ when approaching some point on the border of the ball (see the graph on the right).}
\label{figure:definingradius}
\end{figure}

By the radius definition, we have the following observation.
\begin{observation} \label{observation:lowerboundingscheme}
    Given any point $\anypoint \in \setofpoints$,
    \begin{equation*}
        \frac{1}{\lambda({B}^\circ(\anypoint, \radius{\anypoint}))}  \ge \radius{\anypoint} \ge \frac{1}{\lambda(\overline{B}(\anypoint, \radius{\anypoint}))}.
    \end{equation*}
\end{observation}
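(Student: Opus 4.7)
The plan is to derive both inequalities directly from the extremal characterization in Definition~\ref{definition:radius}, using monotonicity together with the finiteness of $\mathcal{X}$. Throughout, let $f(u) := 1/\lambda(\overline{B}(x,u))$. A point $y$ enters the closed ball $\overline{B}(x,u)$ exactly when $u$ reaches $d(x,y)$, so $u \mapsto \lambda(\overline{B}(x,u))$ is non-decreasing and right-continuous as a function of $u$; consequently $f$ is non-increasing and right-continuous. By Definition~\ref{definition:radius}, $\rho_x = \inf S$ for $S := \{u \ge 0 : f(u) \le u\}$.

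For the right inequality $\rho_x \ge 1/\lambda(\overline{B}(x,\rho_x))$, the only thing to show is that the infimum is attained, i.e.\ $\rho_x \in S$. Because $f$ is non-increasing, $S$ is upward-closed, so $S = [\rho_x, \infty)$ or $S = (\rho_x, \infty)$. Taking $u \to \rho_x^+$ along points of $S$, right-continuity of $f$ at $\rho_x$ gives $f(\rho_x) = \lim_{u \to \rho_x^+} f(u) \le \lim_{u \to \rho_x^+} u = \rho_x$, hence $\rho_x \in S$, which is exactly the right inequality.

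For the left inequality $1/\lambda(B^\circ(x,\rho_x)) \ge \rho_x$, I would use minimality: $f(u) > u$ for every $u < \rho_x$. Here finiteness of $\mathcal{X}$ is essential. Let $d^* := \max\{d(x,y) : y \in \mathcal{X},\ d(x,y) < \rho_x\}$; this max is attained (the set is finite and nonempty, since $d(x,x) = 0 < \rho_x$) and satisfies $d^* < \rho_x$. For any $u \in (d^*, \rho_x)$ the closed ball $\overline{B}(x,u)$ contains exactly the points at distance at most $d^*$ from $x$, which is precisely $B^\circ(x,\rho_x)$. Thus $1/\lambda(B^\circ(x,\rho_x)) = f(u) > u$ on the whole interval $(d^*, \rho_x)$, and letting $u \to \rho_x^-$ yields the desired bound.

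The main subtlety — and the reason why the two sides feature different balls — is that $\rho_x$ may coincide with a jump of $u \mapsto \lambda(\overline{B}(x,u))$, caused by points lying exactly at distance $\rho_x$ from $x$. This is the situation illustrated on the right panel of Figure~\ref{figure:definingradius}, where both inequalities become strict in opposite directions. Right-continuity of $f$ handles this jump correctly on the closed-ball side, while the left-limit argument above avoids the jump altogether by working on the strictly smaller open ball $B^\circ(x, \rho_x)$.
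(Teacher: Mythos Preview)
Your proof is correct and fills in precisely the details that the paper leaves implicit: the paper presents this as an immediate observation from Definition~\ref{definition:radius} (noting only that $u \mapsto 1/\lambda(\overline{B}(x,u))$ is non-increasing and pointing to Figure~\ref{figure:definingradius} for the two cases), without a written proof. Your argument via right-continuity for the closed-ball inequality and the left-limit along $u \in (d^*, \rho_x)$ for the open-ball inequality is exactly the natural way to make this rigorous, and it matches the paper's intended reasoning.
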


\noindent Here, we present both the upper and the lower bound on the radius, as one of them is needed to lower bound the expected cost of the optimal offline solution, and the second one is required to upper bound the expected cost of our  algorithms.

To conclude, let us state the upper bound on the expected cost produced by Greedy.

\begin{restatable}{lemma}{lemmagreedy}
\label{lemma:greedy_upperbound}
For MPMD in the Poisson arrival model, the expected cost produced by the Greedy algorithm, over all random sequences consisting of $m$ requests, satisfies
\begin{equation*}
    \E_{\sigma}^m[\greedy(\sigma)] \le \left( 4m \sum_{\anypoint \in \setofpoints} \frac{\lambda_x}{\lambda(\setofpoints)} \cdot \radius{\anypoint}\right)+2|\setofpoints| \cdot \left(d_{\max}+\frac{1}{\lambda(\setofpoints)}\right).
\end{equation*}
where $d_{\max} := \displaystyle\max_{x,y \in \setofpoints} d(x,y)$ is the diameter of the metric space. 
\end{restatable}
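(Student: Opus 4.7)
The starting observation is that Greedy matches each pair $(r,r')$ at the first moment $s$ where $(s-t(r))+(s-t(r')) = d(\ell(r),\ell(r'))$, so the connection cost of every pair equals the sum of its two delays. Summing over pairs yields $\greedy(\sigma) = 2\sum_r (s(r) - t(r))$, reducing the task to an upper bound on the expected total delay.

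I partition the requests via a \emph{lateness} criterion. Call $r$ arriving at $x$ \emph{late} if no subsequent request in $\sigma$ arrives inside $\overline{B}(x, \rho_x)$, and \emph{non-late} otherwise. Since $x \in \overline{B}(x, \rho_x)$, a later arrival at $x$ itself already disqualifies $r$ from being late, so at each point at most the last request is late; hence there are at most $|\setofpoints|$ late requests.

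For a non-late $r$ at $x$ with arrival time $t_0$, let $r'$ denote the first subsequent arrival in $\overline{B}(x, \rho_x)$, at time $t_1$, and note that $d(\ell(r),\ell(r')) \le \rho_x$ (since $\ell(r) = x$). I will show that the delay of $r$ is at most $(t_1 - t_0) + \rho_x$. Indeed, at $t^\star := t_1 + \rho_x$ the pairing condition $(t^\star - t_0) + (t^\star - t_1) \ge 2\rho_x \ge d(\ell(r),\ell(r'))$ holds, so by $t^\star$ either $r$ has already been matched with some request or both $r$ and $r'$ are still pending and Greedy pairs them. The subtle case---$r'$ being paired by Greedy with a different request before $t^\star$ while $r$ stays pending---is the main technical hurdle, and is handled by chaining the argument to the next close arrival after $r'$, exploiting the memoryless property. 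By the Poisson structure conditional on $r$'s arrival, $t_1 - t_0$ is exponential with rate $\lambda(\overline{B}(x, \rho_x)) \ge 1/\rho_x$ (Observation \ref{observation:lowerboundingscheme}), so $\E[t_1 - t_0] \le \rho_x$ and hence the expected delay of any non-late request at $x$ is at most $2\rho_x$.

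In the centralized Poisson model each of the $m$ requests lands at $x$ independently with probability $\lambda_x/\lambda(\setofpoints)$, so the expected total non-late delay is at most $2m \sum_x (\lambda_x/\lambda(\setofpoints))\rho_x$. For the at most $|\setofpoints|$ late requests I use a crude trivial bound: after a late request arrives, the next arrival anywhere in the metric occurs in expected time at most $1/\lambda(\setofpoints)$ (Proposition \ref{proposition:minimum}), and Greedy then matches the late request within additional time at most $d_{\max}$ since distances are bounded by the diameter; thus each late request contributes at most $d_{\max} + 1/\lambda(\setofpoints)$ to the expected delay. Multiplying the total delay by 2 (for the connection cost) yields the stated bound. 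The principal obstacle is the chaining argument in the third paragraph, which must carefully account for the interference of other pending requests that may be matched with $r'$ before $r$ itself gets paired.
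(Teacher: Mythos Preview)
Your overall strategy is the paper's: bound $\greedy(\sigma)$ by twice the total delay, split requests into ``late'' and ``non-late'', and bound each part. The gap is in your pointwise bound on the delay of a non-late request. You claim that if $r$ arrives at $x$ at time $t_0$ and $r'$ is the first subsequent arrival in $\overline{B}(x,\rho_x)$, at time $t_1$, then $w(r)\le (t_1-t_0)+\rho_x$. This is false. Take three collinear points $x,y,z$ at positions $0,2,3$ with $\rho_x=2$, so $\overline{B}(x,\rho_x)=\{x,y\}$. Let $r_1$ arrive at $z$ at time $0$, $r=r_2$ at $x$ at time $0.1$, and $r'=r_3$ at $y$ at time $0.2$; your bound predicts $w(r)\le 2.1$. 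But Greedy matches $r_1$ with $r_3$ at time $0.6$ (the pair at distance $1$ triggers first), leaving $r_2$ alone; if the only later arrival is $r_4$ at $z$ at time $5$, then $w(r_2)=4.9$. The ``chaining'' you propose does not rescue this: after $r'$ is stolen there may be no further arrival in $\overline{B}(x,\rho_x)$ at all, yet $r$ is still non-late by your definition, so the memoryless property has nothing to act on.

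The paper repairs this with one change: it looks for the first arrival $r'$ in $\overline{B}(x,\rho_x)$ \emph{after time $t(r)+\rho_x$} rather than after $t(r)$ (and declares $r$ late only if $r$ is still pending at $t(r)+\rho_x$ and no such $r'$ exists). The extra $\rho_x$ of buffer guarantees that when $r'$ appears, $t(r')-t(r)\ge\rho_x\ge d(\ell(r),\ell(r'))$, so the Greedy criterion for $(r,r')$ is satisfied \emph{instantly at $t(r')$}. A two-line triangle-inequality argument then rules out $r'$ being stolen: if Greedy instead matched $r'$ with some pending $r''$, then
\[
(t(r')-t(r))+(t(r')-t(r''))> d(\ell(r),\ell(r'))+d(\ell(r'),\ell(r''))\ge d(\ell(r),\ell(r'')),
\]
so $(r,r'')$ would have triggered strictly earlier, a contradiction. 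This yields the clean pointwise bound $w(r)\le \rho_x+Y^{\text{nice}}_r$ with $\E[Y^{\text{nice}}_r]\le\rho_x$. The identical reordering---wait the buffer $d_{\max}$ first, \emph{then} take the next arrival anywhere---is also needed for your late-request bound; your order (next arrival first, then $d_{\max}$) suffers from the same stealing issue.
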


\noindent The last term of the right-hand side describes the cost of serving the late requests. 
The first term represents the standard expected cost of serving requests and is proportional to the length of the sequence $\sigma$. 
We prove this lemma in Section \ref{section:proof_greedy}. 

\subsection{The Radius algorithm}
In this subsection, our goal is to improve the performance guarantees of the Greedy algorithm on stochastic inputs.  
For this purpose, we design a Radius algorithm that calculates the radii upfront and uses this information to serve the requests better. 
The main idea here is to match any two requests whenever the closed balls of their locations (with radii defined as in Definition \ref{definition:radius}) overlap.

In the geometrical interpretation, whenever a request arrives at some point $\anypoint$, the algorithm directly sets its ball to be $\overline{B}(\anypoint,\radius{\anypoint})$. 
Hence, once a request $\request$ appears, if its location belongs to the closed ball of any pending request $\request'$, then the two are matched\footnote{Notice that there exists at most one such request $\request'$. Otherwise, if at the moment of its arrival, $\request$ belonged to the closed balls of two requests $\request'$ and $\request''$, their balls would intersect. Thus, they should have been paired before, which leads to a contradiction.}.  
Otherwise, if there exists another request $\request''$ within the distance of $\radius{\location{\request}} + \radius{\location{\request''}}$ from $\request$'s location, $\request$ can be matched with any such request. 
Finally, if no request satisfies the above conditions, $\request$ is temporarily left unmatched. 
See the pseudo-code shown in Algorithm \ref{pseudocode-radius-original} for a precise description of Radius. 
Notice that since Radius calculates the radii, it needs to know the metric space $(\setofpoints,d)$ and the exponential parameters $\{\lambda_x\}_{x \in \setofpoints}$. 
This is not an heavy requirement, since in the case of stochastic inputs, by the Law of large numbers, one can learn in constant time $O(1/\min_{x\in\setofpoints}\lambda_x)$ an arbitrarily good estimate of the arrival rates. 

\RestyleAlgo{boxruled}
\LinesNumbered
\SetAlgoVlined
\begin{algorithm}
\caption{Radius}
\label{pseudocode-radius-original}
\KwIn{A sequence $\sigma=(r_1,\dots, r_m)$ of requests, the arrival rate of each metric point.}
\KwOut{A perfect matching of the requests.}
Compute the radius $\radius{\anypoint}$ for each point $\anypoint \in \setofpoints$\ (Definition \ref{definition:radius})\;
$P \gets$ the set of pending requests, initially empty\;
\For{$i=1$ to $m$}{
    let $t=t(r_i)$ denote the arrival time of the $i$-th request $r_i$\;
    \If{there exists a pending request $r'\in P$ such that $d(\ell(r_i), \ell(r'))\le \rho_{\ell(r')}$}{
    match $r_i$ and $r'$ together, and remove $r'$ from $P$.
    }
    \ElseIf{there exists a pending request $r'\in P$ such that $d(\ell(r_i), \ell(r'))\le \rho_{\ell(r')}+\rho_{\ell(r_i)}$}{
    match $r_i$ and $r'$ together, breaking ties arbitrarily, and remove $r'$ from $P$.
    }
    \Else{
    add $r_i$ in $P$
    }
}
    \If{$P\neq \emptyset$}{
    match all requests in $P$ arbitrarily
    }
\end{algorithm}

It turns out that using the radius information directly in the algorithm leads to a better ratio of expectations. 
Below we present an upper bound on the expected cost of the Radius solution.

\begin{restatable}{lemma}{lemmaradius} \label{lemma:radius_upperbound}
    For MPMD in the Poisson arrival model, the expected cost produced by the Radius algorithm, over all random sequences consisting of $m$ requests, satisfies
    \begin{equation*}
        \E_{\sigma}^m[\rad(\sigma)] \le \left(2m \sum_{\anypoint \in \setofpoints} \frac{\lambda_x}{\lambda(\setofpoints)} \cdot \radius{\anypoint}\right) + \frac{1}{2} \cdot |\setofpoints| \cdot d_{\max}.
    \end{equation*}
    where $d_{\max} := \displaystyle\max_{x,y \in \setofpoints} d(x,y)$ is the diameter of the metric space. 
\end{restatable}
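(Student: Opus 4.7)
My plan is to bound the expected cost of the Radius algorithm by controlling the total delay cost and the total connection cost separately. The analysis rests on one key structural property: whenever a request is added to the pending set $P$, the condition~2 check guarantees that its closed ball $\overline{B}(\ell(r), \rho_{\ell(r)})$ is disjoint from those of all other pending requests, i.e.\ $d(\ell(r), \ell(r')) > \rho_{\ell(r)} + \rho_{\ell(r')}$ for every pair $r, r' \in P$ at every time. Since all radii are strictly positive, pending requests occupy pairwise distinct points, hence $|P| \le |\setofpoints|$ at all times and at most $|\setofpoints|/2$ leftover pairs remain at the end. A crucial consequence, via the triangle inequality, is that whenever a new request arrives at a point $z$, there is at most one pending $r \in P$ whose closed ball contains $z$; in particular, if $r$ is pending at $x$ and any later request lands in $\overline{B}(x, \rho_x)$, the algorithm pairs them immediately through condition~1.

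For the delay cost, I would introduce for each request $r$ with $\ell(r) = x$ an auxiliary random variable $\tau_r$, defined on a conceptual extension of the Poisson process past the $m$-th request, equal to the time from $t(r)$ until the first subsequent arrival falling in $\overline{B}(x, \rho_x)$. By the memoryless property together with Proposition~\ref{proposition:minimum}, $\tau_r \sim \expdistr{\lambda(\overline{B}(x, \rho_x))}$, and Observation~\ref{observation:lowerboundingscheme} yields
\begin{equation*}
    \expval{}{\tau_r \mid \ell(r) = x} \;=\; \frac{1}{\lambda(\overline{B}(x, \rho_x))} \;\le\; \rho_x.
\end{equation*}
I would then argue that the realised delay $s(r) - t(r)$ never exceeds $\tau_r$: if $r$ is matched during the main loop, the structural observation above forces this to occur by time $t(r) + \tau_r$ at the latest; if $r$ is leftover, then no arrival of $\sigma$ fell in $\overline{B}(x, \rho_x)$ between $t(r)$ and $t(r_m)$ (else $r$ would have been matched), so scheduling the leftover matches at time $t(r_m)$ yields a delay of at most $t(r_m) - t(r) < \tau_r$. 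Summing over the $m$ requests and using $\prob(\ell(r) = x) = \lambda_x / \lambda(\setofpoints)$ from the centralized Poisson model produces $\expval{}{\text{total delay}} \le m \sum_{x \in \setofpoints} \frac{\lambda_x}{\lambda(\setofpoints)} \rho_x$.

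For the connection cost, the matching rule ensures every main-phase pair $(r, r')$ satisfies $d(\ell(r), \ell(r')) \le \rho_{\ell(r)} + \rho_{\ell(r')}$, so summing over such pairs and taking expectations bounds the total main-phase connection cost by $m \sum_x \frac{\lambda_x}{\lambda(\setofpoints)} \rho_x$. Each of the at most $|\setofpoints|/2$ leftover pairs contributes at most $d_{\max}$ in connection cost, yielding an additive $\frac{1}{2} |\setofpoints| d_{\max}$. Adding the delay and connection estimates gives the claimed inequality. The main technical hurdle I anticipate is the treatment of leftover requests without introducing the $1/\lambda(\setofpoints)$ correction appearing in the Greedy bound: this is precisely where the disjointness property plays a decisive role, letting one dominate the delay of a leftover request almost surely by an exponential random variable of mean at most $\rho_x$, instead of by a pessimistic quantity depending on the diameter or on the global arrival rate.
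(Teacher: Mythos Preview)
Your proposal is correct and follows essentially the same approach as the paper's proof in Section~\ref{section:proof_radius}: the paper likewise splits the connection cost into nice edges (bounded by $\sum_r \rho_{\ell(r)}$) and at most $|\setofpoints|/2$ late edges (each costing at most $d_{\max}$), and bounds the delay of each request by the waiting time until the next arrival in its closed ball, using the same structural observation that pending balls are pairwise disjoint. The only cosmetic difference is that the paper defines the dominating variable as $Y_r := \min\{t(r_m)-t(r),\ \text{first in-ball arrival within }\sigma\}$ and then bounds $\E[Y_r]\le \E[W_x]$, whereas you couple with an extended Poisson process to get $\tau_r\sim\expdistr{\lambda(\overline{B}(x,\rho_x))}$ directly; both routes yield the same bound $\rho_x$ via Observation~\ref{observation:lowerboundingscheme}.
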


The formal proof of this lemma can be found in Section \ref{section:proof_radius}. 
Here, we conclude the algorithm description with an example illustrated in Figure \ref{figure:radius_algorithm}.

\begin{figure}[ht]
\input{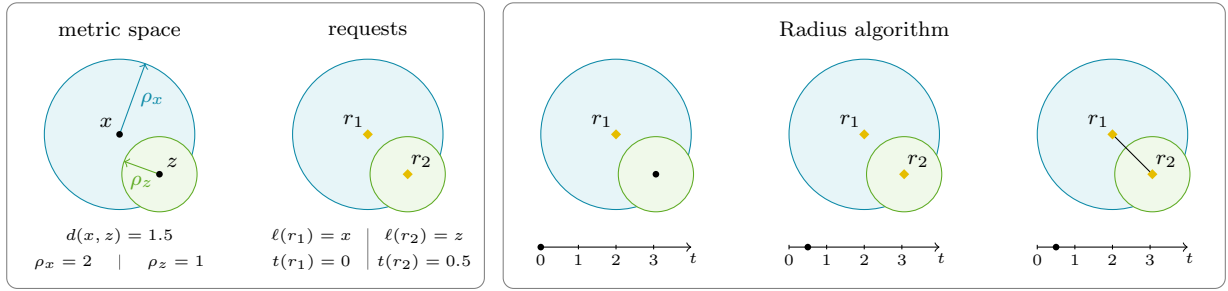}
\vspace{-16pt}
\caption{An example of how Radius works on a sequence of two requests.}
\label{figure:radius_algorithm}
\end{figure}

\subsection{Lower bound on the optimal offline matching}
It remains to present an overview of the lower bounding scheme for the optimal offline solution of the MPMD problem. 
Having such a result will enable us to find the performance ratio for the two algorithms introduced before and show that they both achieve constant ratio of expectations.

The crucial part of the lower bounding process is to analyze each request $\request$ in a sequence $\sigma$ separately and observe that two situations can happen when $\request$ is not matched immediately. 
On one hand, $\request$ can be matched early with some distant request, thus, paying a high connection cost. 
On the other hand, it can wait for a closer request to arrive and pay a higher delay cost. 
A similar situation takes place when $\request$ is paired at the moment of its arrival with an older request. 
The only difference then is that we go through the timeline in the opposite direction.

Let us set the threshold for a request to be considered close to $\request$ as the radius of $\request$'s arrival location, i.e., $\radius{\location{\request}}$. 
Then, the expected cost of serving $\request$ can be upper bounded by the expected value of the minimum of three things. 
The first one is the cost of matching $\request$ with the latest request that has arrived in ${B}^\circ(\anypoint,\radius{\anypoint})$ before $\request$. 
The second is equal to the cost of matching $\request$ with the earliest request arriving after it at any point in this ball. 
Finally, the third one is just the radius $\radius{\anypoint}$ as it is the lower bound for the connection cost outside the ball. 
When we use the stochastic assumption to compute this minimum, we obtain the following.
\begin{restatable}{lemma}{lowerbound} \label{lemma:lowerboundingscheme}
    For MPMD in the Poisson arrival model, the expected cost of the optimal offline solution, over all random sequences consisting of $m$ requests, satisfies
    \begin{equation*}
        \E_\sigma^m[\optim(\sigma)] \ge m \cdot \frac{1-e^{-2}}{4} \sum_{\anypoint \in \setofpoints}\frac{\lambda_x}{\lambda(\setofpoints)} \cdot \rho_x.
    \end{equation*}
\end{restatable}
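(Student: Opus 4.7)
The plan is to derive a deterministic, request-by-request lower bound on the cost of any perfect matching of $\sigma$, and then take expectation using Poisson thinning together with the radius inequality of Observation \ref{observation:lowerboundingscheme}. The starting point is that any perfect matching $M$ has cost $\tfrac{1}{2}\sum_{r\in\sigma}c(r)$, where $c(r)$ denotes the cost of the edge of $M$ incident to $r$, namely $d(\ell(r),\ell(r'))+|t(r)-t(r')|$ for the partner $r'$ of $r$. I will lower bound each $c(r)$ using only the Poisson activity in a neighborhood of $r$; the same bound then applies to $\optim(\sigma)$.

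Fix a request $r$ with $x:=\ell(r)$ and $t:=t(r)$, and let $T^-_r$ (resp.\ $T^+_r$) be the gap from $t$ back to the most recent (resp.\ forward to the next) request located inside $B^\circ(x,\rho_x)$, with the convention $+\infty$ if no such request exists. If the partner $r'$ lies outside $B^\circ(x,\rho_x)$ then $c(r)\ge d(\ell(r),\ell(r'))\ge \rho_x$; otherwise $r'$ sits inside the open ball at a time different from $t$, giving $c(r)\ge|t(r)-t(r')|\ge \min(T^-_r,T^+_r)$. Both cases yield the deterministic bound $c(r)\ge \min(\rho_x,T^-_r,T^+_r)$, hence
$$\optim(\sigma)\;\ge\;\tfrac{1}{2}\sum_{r\in\sigma}\min\bigl(\rho_{\ell(r)},T^-_r,T^+_r\bigr).$$

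To take expectation, I use Poisson thinning: the requests landing in $B^\circ(x,\rho_x)$ form a Poisson process on the time axis with rate $\mu_x:=\lambda(B^\circ(x,\rho_x))$. Conditioning on a given request $r$ being located at $x$ (so contributing to this thinned sub-process), the memoryless property and time-reversibility of a stationary Poisson process give that $T^-_r$ and $T^+_r$ are independent $\expdistr{\mu_x}$ variables, so $\min(T^-_r,T^+_r)\sim\expdistr{2\mu_x}$ and
$$\E\bigl[\min(\rho_x,T^-_r,T^+_r)\bigr]=\int_0^{\rho_x}e^{-2\mu_x s}\,ds=\frac{1-e^{-2\mu_x\rho_x}}{2\mu_x}.$$
By Observation \ref{observation:lowerboundingscheme} we have $\mu_x\rho_x\le 1$, so $2\mu_x\rho_x\in(0,2]$; since $y\mapsto(1-e^{-y})/y$ is non-increasing on $(0,\infty)$, the right-hand side is at least $\tfrac{(1-e^{-2})\rho_x}{2}$.

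Finally, in the centralized model each of the $m$ requests is independently at location $x$ with probability $\lambda_x/\lambda(\setofpoints)$, so
$$\E_\sigma^m[\optim(\sigma)]\;\ge\;\tfrac{m}{2}\sum_{x\in\setofpoints}\tfrac{\lambda_x}{\lambda(\setofpoints)}\cdot\tfrac{(1-e^{-2})\rho_x}{2}\;=\;m\cdot\tfrac{1-e^{-2}}{4}\sum_{x\in\setofpoints}\tfrac{\lambda_x}{\lambda(\setofpoints)}\cdot\rho_x,$$
as required. The step requiring the most care is the Palm-style claim that, conditional on a request of the thinned process occurring at time $t$, the backward and forward gaps to the next thinned arrivals are independent $\expdistr{\mu_x}$ variables; this is where stationarity and the memoryless property are essential. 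Boundary effects (when $r$ is the first or last request in the ball's sub-process, so that $T^-_r$ or $T^+_r$ equals $+\infty$) only strengthen the inequality and so cause no issue.
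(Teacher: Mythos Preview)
Your proof is correct and follows essentially the same route as the paper: both lower bound $\optim(\sigma)$ by $\tfrac{1}{2}\sum_r \min\{\rho_{\ell(r)},T^-_r,T^+_r\}$, identify $\min(T^-_r,T^+_r)$ as an $\expdistr{2\lambda(B^\circ(x,\rho_x))}$ variable, and apply Observation~\ref{observation:lowerboundingscheme} to finish. The only stylistic difference is that the paper makes your Palm-style/boundary-effect remark fully rigorous by explicitly extending $\sigma$ to a bi-infinite sequence $\overline{\sigma}$ and using $c(\overline{\sigma},r_i)\le c(\sigma,r_i)$, which is exactly the formal content of your observation that setting $T^\pm_r=+\infty$ at the boundary only helps.
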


\noindent We present a detailed proof of this lemma in Section \ref{section:lowerbound}.

\subsection{Proofs of the main theorems}
Finally, we prove the two main theorems stated in the introduction.

\greedyapprox*
\begin{proof}
Using the upper bound on Greedy from Lemma \ref{lemma:greedy_upperbound} and the lower bound on the optimal offline solution presented in Lemma \ref{lemma:lowerboundingscheme}, we obtain
\begin{equation*}
\begin{split}
    \overline{\lim_{m \to \infty}} \dfrac{\E_{\sigma}^m[\algor(\sigma)]}{\E_{\sigma}^m[\optim(\sigma)]} &\ \leq\ \lim_{m \to \infty} \dfrac{\left( 4m \sum_{\anypoint \in \setofpoints} \dfrac{\lambda_x}{\lambda(\setofpoints)} \cdot \radius{\anypoint}\right)+2|\setofpoints| \cdot \left(d_{\max}+\dfrac{1}{\lambda(\setofpoints)}\right)}{m \cdot \dfrac{1-e^{-2}}{4} \sum_{\anypoint \in \setofpoints}\dfrac{\lambda_x}{\lambda(\setofpoints)} \cdot \rho_x} \\[4pt]
    &\ =\ \dfrac{16}{1-e^{-2}} + \lim_{m \to \infty} \dfrac{1}{m} \cdot \dfrac{2|\setofpoints| \cdot \left(d_{\max}+\dfrac{1}{\lambda(\setofpoints)}\right)}{\dfrac{1-e^{-2}}{4} \sum_{\anypoint \in \setofpoints}\dfrac{\lambda_x}{\lambda(\setofpoints)} \cdot \rho_x}\ =\ \dfrac{16}{1-e^{-2}},
\end{split}
\end{equation*}
which concludes the proof.\footnote{Note that the assumption of finite points in the given metric, i.e. $|X| < \infty$, is necessary to prove the theorem.}
\end{proof}

\noindent Analogously, if we refer to Lemma \ref{lemma:radius_upperbound}, we can prove the following.

\radiusapprox*

\section{Lower bound on the optimal offline matching}
\label{section:lowerbound}
In this section, we prove Lemma \ref{lemma:lowerboundingscheme} (restated below). 

\lowerbound*

\noindent The main idea of the proof goes as follows. 
To obtain a lower bound on the expected cost of the optimal matching over a request sequence $\sigma$, we analyze each element of $\sigma$ separately. 
First, we observe that for each request, the sum of its connection and delay cost is at least equal to the cost of connecting it to its cheapest neighbor in $\sigma$ (in terms of connection + delay cost). 
Then, the core of the proof (Claim \ref{claim:expected_cost_x}) consists of showing that, in expectation, this cost is at least a constant times the radius of the corresponding point. \vspace{6pt}

\noindent Given any input sequence $\sigma$ and any request $\request \in \sigma$, we define the \emph{minimum total cost of $r$ in $\sigma$} as
\begin{equation*}
    c(\sigma, r) := \min_{r' \in \sigma, r' \neq r} \left\{d(\ell(r),\ell(r')) + |t(r) - t(r')|\right\}.
\end{equation*}

\begin{claim} \label{claim:opt_tilde}
    For any input sequence $\sigma$ it holds that $\optim(\sigma) \ge \frac{1}{2}\sum_{\request \in \sigma} c(\sigma,r)$. 
\end{claim}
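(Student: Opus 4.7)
The plan is very short: I would unpack the structure of the optimal offline matching and then bound each edge's cost from below by the two endpoints' minimum-cost values.

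First, I would recall the observation already made earlier in the paper: for any pair $(r,r')$ produced by the offline optimum, it is never beneficial to wait beyond the later arrival, so $s(r)=s(r')=\max\{t(r),t(r')\}$. Consequently the total contribution of the pair to $\optim(\sigma)$ equals exactly $d(\ell(r),\ell(r')) + |t(r)-t(r')|$, and $\optim(\sigma)$ is the sum of these quantities over the edges of the optimal matching $M^\star$.

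Next, for any fixed pair $(r,r') \in M^\star$, the quantity $d(\ell(r),\ell(r')) + |t(r)-t(r')|$ is by definition at least $c(\sigma,r)$ (since $r'$ is a candidate for the minimum defining $c(\sigma,r)$), and by symmetry at least $c(\sigma,r')$. Averaging the two inequalities gives
\[
d(\ell(r),\ell(r')) + |t(r)-t(r')| \;\ge\; \tfrac{1}{2}\bigl(c(\sigma,r) + c(\sigma,r')\bigr).
\]

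Finally, summing this inequality over all edges of $M^\star$ turns the right-hand side into $\tfrac{1}{2}\sum_{r \in \sigma} c(\sigma,r)$ (each request appears in exactly one pair), while the left-hand side sums to $\optim(\sigma)$. This yields the desired bound. There is no real obstacle here; the only subtlety is making explicit that $\optim(\sigma)$ really equals the sum of the matching weights $d+|t-t'|$, which was already justified in the preliminaries section.
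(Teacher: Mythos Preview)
Your proposal is correct and follows essentially the same argument as the paper: bound each matched pair's cost $d(\ell(r),\ell(r'))+|t(r)-t(r')|$ from below by both $c(\sigma,r)$ and $c(\sigma,r')$, average, and sum over the matching. The only difference is that you spell out explicitly why $\optim(\sigma)$ equals the sum of these edge weights, which the paper takes for granted from the preliminaries.
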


\begin{proof}
Given any pair $(r,r')$ that gets matched by $\optim$, its connection cost is $d(\ell(r),\ell(r'))$ and its delay cost equals $|t(r)-t(r')|$.
By definition, we have $c(\sigma, r) \le d(\ell(r),\ell(r')) + |t(r)-t(r')|$ as well as $c(\sigma,r')\le d(\ell(r),\ell(r')) + |t(r)-t(r')|$, which gives $d(\ell(r),\ell(r')) + |t(r)-t(r')| \ge \frac{c(\sigma,r) + c(\sigma,r')}{2}$. 
Finally, we obtain the claim by summing over all matched pairs in $\optim(\sigma)$. 
\end{proof}

Before formally stating Claim \ref{claim:expected_cost_x}, we need the following two results. 

\begin{restatable}{proposition}{subsetdistribution} \label{proposition:subset_exponential_waiting_time}
    Let $\sigma = (r_1,r_2,\ldots)$ be an infinite sequence of requests generated by the centralized Poisson process and ordered by their arrival times. 
    Then, for any point $x\in\setofpoints$ and any index $i \geq 1$, the distribution of the waiting time for a next request to arrive after $r_i$ at some point in a set $S \subseteq \setofpoints$, $x \in S$, follows an exponential distribution with parameter $\lambda(S)$.
\end{restatable}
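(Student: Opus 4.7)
The plan is to work in the distributed (timer-based) formulation of the Poisson arrival model, whose equivalence with the centralized model is established earlier in Section \ref{section:preliminaries}. In this formulation, each point $y \in \setofpoints$ has its own independent exponential timer with rate $\lambda_y$, and a request appears at $y$ precisely when its timer rings (after which the timer is immediately reset, independently of everything else).

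First, I would condition on the full history $\history_i$ up to and including the moment $t(r_i)$: the arrival times, locations and identities of $r_1, \dots, r_i$. Under this conditioning, exactly one timer --- the one at $\ell(r_i)$ --- has just rung and been reset to a fresh $\expdistr{\lambda_{\ell(r_i)}}$ variable, while each timer at a point $y \neq \ell(r_i)$ is a running timer that has not rung in the interval $[s_y, t(r_i)]$, where $s_y$ is the time at which that timer was last set. By the memoryless property (Proposition \ref{proposition:memoryless}), the residual time until the timer at $y$ next rings, conditioned on $\history_i$, is distributed as $\expdistr{\lambda_y}$ and is independent of $\history_i$ and of all the other residual times.

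Next, the waiting time for the next request after $r_i$ to arrive at some point of $S$ is, by definition, the minimum of the residual times of the timers attached to points in $S$. These residuals form a collection of independent exponential variables $\{W_y\}_{y \in S}$ with $W_y \sim \expdistr{\lambda_y}$. Applying Proposition \ref{proposition:minimum} gives that $\min_{y \in S} W_y \sim \expdistr{\sum_{y \in S} \lambda_y} = \expdistr{\lambda(S)}$, which is precisely the claim. Since the conditional distribution given $\history_i$ is $\expdistr{\lambda(S)}$ and does not depend on $\history_i$, the unconditional distribution is the same.

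The only delicate point is justifying that conditioning on "$r_i$ is the $i$-th arrival" does not bias the residual lifetimes. This is why it is important to phrase the argument in the distributed model rather than the centralized one: in the centralized model, conditioning on an arrival having just occurred and on its location mixes information across all timers, whereas in the distributed model the timers are, by construction, mutually independent, so the memoryless property applies to each coordinate separately and yields the desired product structure on the residuals. The hypothesis $x \in S$ ensures that $S$ is non-empty and $\lambda(S) > 0$, so the resulting exponential distribution is well defined.
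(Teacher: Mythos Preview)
Your proposal is correct and follows essentially the same approach as the paper's proof: both pass to the distributed timer model, invoke the memoryless property to argue that at time $t(r_i)$ the residual waiting times are fresh independent exponentials, and then identify the waiting time for the next arrival in $S$ as an $\expdistr{\lambda(S)}$ variable. The only cosmetic difference is that the paper first merges all points of $S$ into a single ``super-point'' with rate $\lambda(S)$ and then applies the do-not-reset trick to that single timer, whereas you keep the timers separate and apply Proposition~\ref{proposition:minimum} at the end; these are two orderings of the same computation.
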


\begin{proof}
By Definition \ref{definition:centralized_poisson_arrival} we know that the waiting times between the consecutive requests arrivals in $\sigma$ follow an exponential distribution with parameter $\lambda(\setofpoints)$ and are independent from the requests location choices. 
They are also mutually independent by the arrival model's definition.

Notice that, if we look at any point $y \in \setofpoints$, the probability that the $k$-th request, $k \ge 1$, arrives at $y$ is $\lambda_y / \lambda(\setofpoints)$. 
Thus, from the perspective of $y$, the exact distribution of the arrival rates $\lambda$ of the other points does not matter --- it only needs to know the value of $\lambda(\setofpoints\setminus\{y\})$ to compute its chances of being chosen. 
This means that to find the waiting time for a next request to arrive at any point in $\S$, we can merge all the points in $\S$ into one new point $s$ with the parameter $\lambda(\S)$.

What remains is to recall that the centralized Poisson process was equivalent to the model where we used timers and reset them every time a request arrived. However, we could also decide not to reset a timer. Doing so for the $s$'s timer implies that the waiting time between $t(r_i)$ and the arrival of the next request at $s$ has an exponential distribution with parameter $\lambda(\S)$.
\end{proof}

\noindent To get a better understanding of the above analysis, see Figure \ref{figure:subset_distribution}. We also present an elementary proof of this proposition, based only on the centralized Poisson model, in Appendix \ref{app:subset_distribution_elementary}.

\begin{figure}[ht]
\input{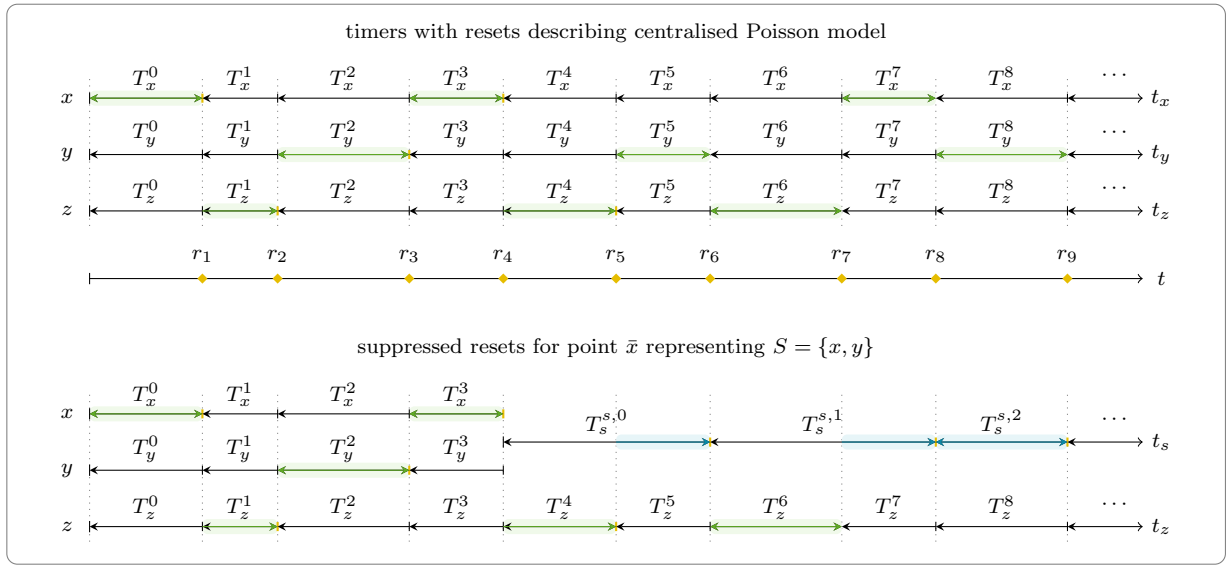}
\vspace{-16pt}
\caption{An example showing the main idea presented in the proof of Proposition \ref{proposition:subset_exponential_waiting_time}. Here, to analyze the waiting time between consecutive arrivals on points $x$, $y$, we define new timers $T_s^{s,i}$ for the sum of these point.}
\label{figure:subset_distribution}
\end{figure}

\begin{claim} \label{claim:exponential-2}
    Given any $a > 0$ and an exponential variable $Y \sim \expdistr{\mu}$, $\E[\min\{Y, a\}] = \frac{1 - e^{-\mu a}}{\mu a} \cdot a$.
\end{claim}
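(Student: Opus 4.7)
The plan is to compute $\E[\min\{Y,a\}]$ directly using the tail-sum representation of the expectation of a nonnegative random variable, since $\min\{Y,a\}\ge 0$ almost surely. Specifically, I would write
\[
\E[\min\{Y,a\}] \;=\; \int_0^\infty \prob(\min\{Y,a\} > t)\, dt,
\]
and then observe the simple set identity $\{\min\{Y,a\} > t\} = \{Y > t\}\cap\{a > t\}$. Since $a$ is a deterministic positive constant, the indicator $\mathbf{1}[a > t]$ just truncates the integration range to $[0,a)$, and on that range $\prob(Y > t) = e^{-\mu t}$ by the definition of the exponential distribution with parameter $\mu$.

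The remaining calculation is then just
\[
\E[\min\{Y,a\}] \;=\; \int_0^a e^{-\mu t}\, dt \;=\; \frac{1-e^{-\mu a}}{\mu} \;=\; \frac{1-e^{-\mu a}}{\mu a}\cdot a,
\]
which is exactly the claimed identity after multiplying numerator and denominator by $a$ to match the form stated in the claim.

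There is essentially no obstacle here; the only thing to be a little careful about is the algebraic rewriting at the end so that the answer is expressed in the factored form $\frac{1-e^{-\mu a}}{\mu a}\cdot a$, as the claim is phrased, rather than the more natural $\frac{1-e^{-\mu a}}{\mu}$. An alternative I could present, if a referee prefers a density-based argument, is to split $\E[\min\{Y,a\}] = \int_0^a t\,\mu e^{-\mu t}\,dt + a\cdot\prob(Y\ge a)$ and use integration by parts on the first term; this yields the same answer but is slightly less elegant than the tail-integral approach, so I would stick with the latter.
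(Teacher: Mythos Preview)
Your proof is correct. Interestingly, the paper takes precisely the density-based route you mention as your alternative: it writes
\[
\E[\min\{Y,a\}] = \int_0^a t\,\mu e^{-\mu t}\,dt + \int_a^\infty a\,\mu e^{-\mu t}\,dt
\]
and evaluates both integrals directly. Your tail-integral approach is shorter and avoids the antiderivative bookkeeping entirely, since $\prob(Y>t)=e^{-\mu t}$ is already in closed form; the paper's version requires computing $\left[-(t+\tfrac{1}{\mu})e^{-\mu t}\right]_0^a$ for the first piece. Both are elementary and yield the same result, but your main argument is the cleaner of the two.
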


\begin{proof}
We simply calculate the expected value obtaining
\begin{equation*}
    \E[\min\{Y, a\}]
    = \int_0^a t \cdot \mu e^{-\mu t} dt + \int_a^\infty a\cdot \mu e^{-\mu t} dt
    = \left[-(t+\frac{1}{\mu}) e^{-\mu t}\right]_0^a + \left[-ae^{-\mu t}\right]_a^\infty
    = \frac{1 - e^{-\mu a}}{\mu a} \cdot a,
\end{equation*}
which ends the proof.
\end{proof}

Now, let us present the core component needed to prove Lemma \ref{lemma:lowerboundingscheme}.

\begin{claim} \label{claim:expected_cost_x}
    Given a sequence $\sigma$, we order the requests in $\sigma = (r_1, \dots, r_m)$ according to their arrival times. 
    Then, for any point $x \in \setofpoints$ and any index $i \in \{1, \dots, m\}$, the expected minimum cost of the $i$-th request $r_i$ in a random sequence $\sigma$, assuming that $r_i$ is located on $x$, is
\begin{equation*}
    \E_\sigma^m[c(\sigma, r_i) \mid \ell(r_i) = \anypoint] \ge \frac{1-{e}^{-2}}{2} \cdot \radius{\anypoint}.
\end{equation*}
\end{claim}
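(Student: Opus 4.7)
The plan is to give a pointwise lower bound on $c(\sigma, r_i)$ and then evaluate its expectation using the exponential waiting-time structure of the Poisson arrival model. First, I would observe that for any other request $r' \in \sigma$, either $\ell(r') \notin B^\circ(x, \rho_x)$, in which case the connection part alone gives $d(\ell(r_i), \ell(r')) \ge \rho_x$, or $\ell(r') \in B^\circ(x, \rho_x)$, in which case $d(\ell(r_i), \ell(r')) + |t(r_i) - t(r')| \ge |t(r_i) - t(r')|$. Minimizing over $r'$ yields
\begin{equation*}
c(\sigma, r_i) \;\ge\; \min\bigl\{\rho_x,\; \min(T^+, T^-)\bigr\},
\end{equation*}
where $T^+$ (resp.\ $T^-$) is the waiting time from $t(r_i)$ to the nearest request after (resp.\ before) $r_i$ whose location belongs to $B^\circ(x, \rho_x)$, with the convention $+\infty$ if no such request exists.

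Second, I would analyze the joint distribution of $(T^+, T^-)$ conditionally on $\ell(r_i) = x$. Since locations in the centralized Poisson model are i.i.d.\ and independent of arrival times, thinning the arrival process onto $B^\circ(x, \rho_x)$ gives a Poisson sub-process of rate $\mu^+ := \lambda(B^\circ(x, \rho_x))$, of which $r_i$ is one arrival. Embedding this sub-process in a bi-infinite Poisson process, the memoryless property applied forward and backward from $t(r_i)$ yields independent $\expdistr{\mu^+}$ random variables $\tilde T^+$ and $\tilde T^-$. The coupling satisfies $T^+ \ge \tilde T^+$ and $T^- \ge \tilde T^-$ pointwise, because extending the finite sequence can only add more in-$B^\circ(x, \rho_x)$ arrivals; this subsumes the boundary cases where the finite sequence contains no such arrival on one side of $r_i$. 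By Proposition \ref{proposition:minimum}, $\min(\tilde T^+, \tilde T^-) \sim \expdistr{2\mu^+}$.

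Third, combining the two steps and applying Claim \ref{claim:exponential-2} with $a = \rho_x$ and $\mu = 2\mu^+$ gives
\begin{equation*}
\E_\sigma^m\bigl[c(\sigma, r_i) \mid \ell(r_i) = x\bigr] \;\ge\; \E\bigl[\min\{\rho_x,\, \min(\tilde T^+, \tilde T^-)\}\bigr] \;=\; \rho_x \cdot \frac{1 - e^{-2\mu^+ \rho_x}}{2\mu^+ \rho_x}.
\end{equation*}
By Observation \ref{observation:lowerboundingscheme}, $\mu^+ \rho_x \le 1$, so $u := 2\mu^+ \rho_x \in (0, 2]$. A short calculus check shows that $u \mapsto (1 - e^{-u})/u$ is non-increasing on $(0, \infty)$, so evaluating at $u = 2$ yields the desired bound $\rho_x \cdot (1 - e^{-2})/2$.

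The step I expect to require the most care is the distributional claim used in the second paragraph: namely, that conditioning on $\ell(r_i) = x$ preserves the independence and exponential nature of the forward and backward thinned waiting times even though the sequence is finite. The bi-infinite embedding makes this rigorous, since the pointwise domination $T^\pm \ge \tilde T^\pm$ transfers through the monotone map $z \mapsto \min\{\rho_x, z\}$, and the rest of the computation only uses the exact exponential distribution of the $\tilde T^\pm$.
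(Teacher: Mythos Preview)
Your proof is correct and follows essentially the same approach as the paper's. The paper explicitly constructs an extended bi-infinite sequence $\overline{\sigma}$, observes that $c(\overline{\sigma}, r_i) \le c(\sigma, r_i)$, reduces by stationarity to the case $i=0$, and then defines $W^+, W^-$ exactly as your $\tilde T^+, \tilde T^-$; the remainder (Proposition~\ref{proposition:minimum}, Claim~\ref{claim:exponential-2}, Observation~\ref{observation:lowerboundingscheme}, and the monotonicity of $u\mapsto (1-e^{-u})/u$) is identical.
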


\begin{proof}
To facilitate the proof, we first extend every random request sequence $\sigma=(r_1, \dots, r_m)$ by some dummy random requests $r_j$ for $j \le 0$ and $j \ge m$ to get an \emph{extended} random sequence
\begin{equation*}
    \overline{\sigma} = (\dots, r_{-2},r_{-1},r_0,r_1,\dots, r_{m-1},r_{m},r_{m+1},\dots).
\end{equation*}
To generate requests before $r_1$ and after $r_m$ we use the centralized Poisson arrival model (i.e., for every integer $j$, $(t(r_{j+1}) - t(r_j)) \sim \expdistr{\lambda(\setofpoints)}$ and $\prob(\ell(r_j) = y) = \lambda_y / \lambda(\setofpoints)$ for all $y \in \setofpoints$). 
This implies that in the extended sequence of requests, given any point $y \in \setofpoints$, with probability one there exist indexes $j \le 0$ and $j' \ge m + 1$ satisfying $\ell(r_j) = \ell(r_{j'}) = y$.

For an extended random sequence $\overline{\sigma}$, we define its truncation $\overline{\sigma}_m := (r_1,\dots, r_m)$. 
Remark that the probability distribution of the truncated extended random request is identical to the original distribution of the random sequences of $m$ requests. Moreover, the minimum cost $c(\sigma, r_j)$ of any request $r_j$, with $j \in \{1, \dots, m\}$, can only decrease, i.e., $c(\overline{\sigma}, r_j) \le c({\sigma}, r_j)$ where $\sigma = \overline{\sigma}_m$.
Hence, we have
\begin{equation*}
    \E_\sigma^m[c(\sigma, r_i) \mid \ell(r_i) = \anypoint] \ge \E_{\overline{\sigma}}[c(\overline{\sigma}, r_i) \mid \ell(r_i) = \anypoint].
\end{equation*}
Notice that the conditional expected minimum cost of each $r_j$ arriving at any point $\anypoint$ in an extended random sequence is now the same for all request of the sequence, i.e., for any $j, j'\in \{0, 1, \dots, m\}$,  
\begin{equation*}
    \E_{\overline{\sigma}}[c(\overline{\sigma}, r_j) \mid \ell(r_j) = \anypoint] = \E_{\overline{\sigma}}[c(\overline{\sigma}, r_{j'}) \mid \ell(r_{j'}) = \anypoint].
\end{equation*}
In particular, to prove the claim, we only need to establish a bound on the conditional expected cost of serving the request zero, i.e., we need to show that
\begin{equation} \label{eq:to_prove}
    \E_{\overline{\sigma}}[c(\overline{\sigma}, r_0) \mid \ell(r_0) = \anypoint] \ge \frac{1-{e}^{-2}}{2} \cdot \radius{\anypoint}.
\end{equation}
To prove this bound, consider an extended sequence $\overline{\sigma}$ with $\ell(r_0) = \anypoint$. 
W.l.o.g., we also assume that $t(r_0) = 0$ as it can be achieved by shifting all arrival times by the same constant. 
Define $W^-$ (resp.\ $W^+$) as the (random) time duration between the arrival of the last request before $r_0$ (resp.\ first request after $r_0$) arriving at any point $y \in B^\circ(x,\rho_x)$ and the arrival of $r_0$.
Formally,
\begin{equation*}
    W^- := \min_{j < 0}\left\{-t(r_j): d(\ell(r_j),\anypoint) < \radius{\anypoint} \right\} \ge 0 \quad \text{ and } \quad W^+ := \min_{j > 0}\left\{t(r_j): d(\ell(r_j),x) < \radius{\anypoint}\right\} \ge 0.
\end{equation*}
Since we work with extended sequences, both $W^+$ and $W^-$ are finite. Hence, we can use them to lower bound the value of $c(\overline{\sigma},r_0) < c(\sigma, r_0)$.\footnote{Recalling from the definition of function $c(\sigma, r)$ (which takes minimum value over all the requests in $\sigma$) as well as $\sigma \subseteq \overline{\sigma}$, we can conclude that $c(\overline{\sigma}, r_0)$ can only become smaller than $c(\sigma, r_0)$.} 
Intuitively, we lower bound the (connection + delay) cost of the requests far from $x$ by their connection cost, and the total cost of the requests close to $x$ by their delay cost. More precisely, 
\begin{eqnarray} \label{lower-bound-cost-request}
    c(\overline{\sigma},r_0)
    &=&\min_{j\neq 0} \big\{d(x,\ell(r_j)) + |t(r_j)|\big\} \nonumber\\
    &\ge& \min\Big\{\min_{j\neq 0}\left\{|t(r_j)|:  d(\ell(r_j), \anypoint) < \rho_x \right\}, \radius{\anypoint}\Big\} \nonumber \\
    &=& \min\Big\{\min_{j < 0} \left\{-t(r_j): d(\ell(r_j), \anypoint) < \rho_x\right\},\min_{j> 0} \left\{t(r_j): d(\ell(r_j), \anypoint) < \rho_x\right\}, \radius{\anypoint} \Big\} \nonumber \\
    &=& \min\big\{W^-, W^+, \radius{\anypoint}\big\} \nonumber \\
    &=& \min\big\{\min\{W^-, W^+\}, \radius{\anypoint}\big\},
\end{eqnarray}
where the first inequality is obtained by lower bounding $d(x,\ell(r_j)) + |t(r_j)|$ by $\rho_x$ when $d(x,\ell(r_j))\ge \rho_x$, and by $|t(r_j)|$ otherwise. 

We claim that $W^-$ and $W^+$ are mutually independent and follow the same exponential distribution $\expdistr{\lambda(B^{\circ}(\anypoint, \radius{\anypoint}))}$. 
To prove it, let us recall that we work with the centralized Poisson arrival model. 
Thus, we can say that $W^+$ (resp.\ $W^-$) depends on the waiting times and arrival location choices after (resp.\ before) $t = 0$. 
Since the variables representing history before $t = 0$ in our model are independent from those representing history after $t = 0$, so are $W^-$ and $W^+$. 
It is also easy to prove that $W^+$ has an exponential distribution with parameter $\lambda(B^{\circ}(\anypoint, \radius{\anypoint}))$ --- it follows straightforward from Proposition \ref{proposition:subset_exponential_waiting_time}. 
However, if we look at the timeline for $\overline{\sigma}$ and go from $t = 0$ to $-\infty$, we can once again use Proposition \ref{proposition:subset_exponential_waiting_time} for the sequence $(r_0, r_{-1}, \dots)$.
In this way, we prove that the distribution of $W^-$ is also exponential with parameter $\lambda(B^{\circ}(\anypoint, \radius{\anypoint}))$.

Now, thanks to Proposition \ref{proposition:minimum}, we immediately have
\begin{equation*}
    \min\{W^-, W^+\} \sim \expdistr{2\lambda(B^\circ(x,\rho_x))}. 
\end{equation*}

\noindent By inequality \eqref{lower-bound-cost-request} and Claim \ref{claim:exponential-2} (with $a = \radius{\anypoint}$, $\mu = 2\lambda(B^\circ(\anypoint, \radius{\anypoint}))$), it follows that
\begin{equation*}
    \E_{\overline{\sigma}}[c(\overline{\sigma}, r_0) \mid \ell(r_0) = \anypoint] 
    \ge \E_{\overline{\sigma}}\Big[\min\big\{\min\{W^-, W^+\}, \radius{\anypoint}\big\}\Big] = \frac{1 - e^{- 2\lambda(B^\circ(\anypoint, \radius{\anypoint})) \cdot \radius{\anypoint}}}{2\lambda(B^\circ(\anypoint, \radius{\anypoint})) \cdot \radius{\anypoint}} \cdot \radius{\anypoint}.
\end{equation*}
It is easy to check that function $t\mapsto\frac{1 - e^{-t}}{t}$ is strictly decreasing. 
Together with Observation \ref{observation:lowerboundingscheme} it guarantees that $\lambda(B^\circ(\anypoint, \radius{\anypoint})) \cdot \radius{\anypoint} \le 1$ and thus 
\begin{equation*}
    \frac{1 - e^{- 2\lambda(B^\circ(\anypoint, \radius{\anypoint})) \cdot \radius{\anypoint}}}{2\lambda(B^\circ(\anypoint, \radius{\anypoint})) \cdot \radius{\anypoint}} \ge \frac{1 - e^{-2}}{2}.
\end{equation*}
Hence,
\begin{equation*}
    \E_{\overline{\sigma}}[c(\overline{\sigma}, r_0) \mid \ell(r_0) = \anypoint] \ge \frac{1 - e^{-2}}{2} \cdot \radius{\anypoint},
\end{equation*}
which concludes the proof of inequality \eqref{eq:to_prove} and, what follows, the proof of Claim \ref{claim:expected_cost_x}. 
\end{proof}

Finally, we prove Lemma \ref{lemma:lowerboundingscheme}. 
\begin{proof}[Proof of Lemma \ref{lemma:lowerboundingscheme}]
Let $\sigma = (r_1, \dots, r_m)$ be a sequence of request sorted in an increasing order of their arrival times. We have
\begin{align*}
    \E_\sigma^m[\optim(\sigma)] &\ge \E_\sigma^m\left[\frac{1}{2}\sum_{i=1}^m c(\sigma,r_i)\right] &&\text{(Claim \ref{claim:opt_tilde})}\\
    &= \frac{1}{2} \sum_{i=1}^m \E_\sigma^m[c(\sigma,r_i)] &&\text{(linearity of expectation)}\\
    &= \frac{1}{2} \sum_{i=1}^m \sum_{x\in \setofpoints}\prob_\sigma(\ell(r_i)=x)\cdot\E_\sigma^m[c(\sigma,r_i)\mid \ell(r_i)=x]\\
    &\ge \frac{1}{2} \sum_{i=1}^m \sum_{x\in \setofpoints} \frac{\lambda_x}{\lambda(\setofpoints)}\cdot\frac{1-{e}^{-2}}{2} \cdot \radius{\anypoint} &&\text{(Claim \ref{claim:expected_cost_x})}\\
    &= m \cdot \frac{1-e^{-2}}{4} \sum_{\anypoint \in \setofpoints}\frac{\lambda_x}{\lambda(\setofpoints)} \cdot \rho_x.
\end{align*}
This concludes the proof.
\end{proof}

\section{Upper bound on the Greedy solution}
\label{section:proof_greedy}

In this section, we prove Lemma \ref{lemma:greedy_upperbound} (restated below) that establishes an upper bound on the expected cost of the Greedy algorithm for stochastic inputs. 

\lemmagreedy*

\noindent To prove this upper bound, we first observe that the total connection cost of the Greedy solution is at most equal to its total delay cost, and then we bound the latter. 

Given any input sequence $\sigma$, let $(M,s)$ denote the solution output by the Greedy algorithm, where $M$ is the set of matched pairs of request, and $s$ is the service times of the requests. 
The waiting time of a request $r \in \sigma$ is denoted by $w(r) := s(r) - t(r)$.  
Greedy matches two requests $r$ and $r'$ when the sum of their delay cost $w(r) + w(r')$ is at least equal to their distance  $d(\ell(r), \ell(r'))$. 
In particular, when summing over all requests we obtain:
\begin{claim}
\label{claim:greedy_connected_at_most_delay}
For any input sequence $\sigma$, the cost of the solution returned by the Greedy algorithm is at most twice its total delay cost, i.e., 
\begin{equation*}
    \greedy(\sigma) \le 2\sum_{r \in \sigma} w(r).
\end{equation*}
\end{claim}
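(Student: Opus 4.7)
The proof is essentially immediate from the matching rule of the Greedy algorithm, so I will be brief about the plan.

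First, I would decompose the total cost of Greedy into its connection and delay parts:
\[
    \greedy(\sigma) \;=\; \sum_{(r,r') \in M} d(\ell(r),\ell(r')) \;+\; \sum_{r \in \sigma} w(r).
\]
Hence it suffices to prove that the connection cost is bounded by the total delay cost; the factor $2$ in the claim then follows immediately.

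Second, I would exploit the matching rule of Algorithm~\ref{pseudocode:greedy_original}. For every pair $(r,r') \in M$ produced by Greedy, at the time $s(r) = s(r')$ at which the pair is matched, the algorithm's triggering condition guarantees
\[
    (s(r) - t(r)) + (s(r') - t(r')) \;\ge\; d(\ell(r), \ell(r')),
\]
which is exactly $w(r) + w(r') \ge d(\ell(r), \ell(r'))$. Here I should also briefly justify that Greedy is well-defined, i.e.\ that every request in $\sigma$ is eventually matched; this follows from the fact that $|\sigma| = m$ is even and that, as noted after the algorithm is introduced, the delay-based ball around any pending request grows without bound, so its ball must eventually intersect that of some other pending request.

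Finally, I would sum the pairwise inequality over all pairs in $M$ and use the fact that each request of $\sigma$ appears in exactly one matched pair:
\[
    \sum_{(r,r') \in M} d(\ell(r),\ell(r')) \;\le\; \sum_{(r,r') \in M}\bigl(w(r) + w(r')\bigr) \;=\; \sum_{r \in \sigma} w(r).
\]
Plugging this into the decomposition of $\greedy(\sigma)$ yields $\greedy(\sigma) \le 2\sum_{r \in \sigma} w(r)$, as desired. There is no real obstacle here; the only subtlety worth flagging is the well-definedness of Greedy on an arbitrary input sequence of even length, which was already argued informally in the paragraph introducing the algorithm.
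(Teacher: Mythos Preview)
Your proof is correct and follows exactly the same approach as the paper: use the Greedy matching condition $w(r)+w(r')\ge d(\ell(r),\ell(r'))$ for each matched pair and sum over all pairs to bound the connection cost by the total delay cost. The paper states this in one sentence, while you spell out the decomposition and the summation explicitly, but there is no substantive difference.
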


We now focus on bounding the waiting time of each request. 
To do this, we distinguish two types of requests. 
For each request $r$, define $t'(r) := t(r) + \rho_{\ell(r)}$.
We say that $r$ is a \emph{late} request if 
\begin{itemize}
    \item[-] $r$ is still pending at time $t'(r)$ and
    \item[-] there is no request $r'$ arriving within the closed ball of $r$'s location (i.e., $d(\ell(r), \ell(r')) \le \rho_{\ell(r)}$) after time $t'(r)$.
\end{itemize}
Otherwise, we say that $r$ is a \emph{nice} request, and we define
\begin{equation*}
    Y^\text{nice}_r :=
    \begin{cases}
    0 & \textit{ if } r \textit{ is matched at time } t'(r);\\
    \displaystyle\min_{r'\in\sigma} \left\{t(r') - t'(r) \mid t(r') > t'(r) \text{ and } d(\ell(r'),\ell(r)) \le \rho_{\ell(r)}\right\} & \textit{ otherwise.}
    \end{cases}
\end{equation*}
We bound the waiting time of nice requests as follows:
\begin{claim}
    For each nice request $r \in \sigma$, we have $w(r) \le \rho_{\ell(r)} + Y^\text{nice}_r$.
\end{claim}

\begin{proof}
Let $r$ be any nice request in $\sigma$. 
If $r$ has already been matched at time $t'(r)$ then we have $w(r) \le \rho_{\ell(r)}$.  
Otherwise, let $r' \in \sigma$ be the first request satisfying $t(r') > t'(r)$ and $d(\ell(r'),\ell(r)) \le \rho_{\ell(r)}$, i.e.,  $t(r') - t(r) = \rho_{\ell(r)} + Y^\text{nice}_r$. 
By definition of nice request, such a request $r'$ exists.   

In the case that $r$ has already been matched at time $t'(r)$, we have $w(r) \le \rho_{\ell(r)}$;
otherwise, we claim that $r$ and $r'$ are matched together by the Greedy algorithm at time $t = t(r')$. 
Indeed, we have 
\begin{equation*}
    (t - t(r)) + (t - t(r')) = \rho_{\ell(r)} + Y^\text{nice}_r > \rho_{\ell(r)} \ge d(\ell(r'),\ell(r)),
\end{equation*} 
so the greedy criteria is satisfied by the pair $(r,r')$. 
Suppose for a contradiction that $r'$ is matched at time $t$ with another pending request $r''$. 
It is necessary because $t - t(r'') \ge d(\ell(r'),\ell(r''))$. 
With the triangle inequality, we obtain
\begin{eqnarray}
    (t - t(r)) + (t - t(r'')) &>& d(\ell(r),\ell(r')) + d(\ell(r'),\ell(r'')) \nonumber\\
    &\ge& d(\ell(r),\ell(r'')). \nonumber
\end{eqnarray}
This means that $r$ and $r''$ should have been matched together before the arrival of $r'$, which leads to a contradiction.
\end{proof}

We now bound the total delay time induced by late requests. 
Unfortunately, the waiting time of a late request can possibly be as large as the diameter $d_{\max} = \max_{x, y \in \setofpoints} d(x,y)$ of the metric space. 
However, we show that there are only constantly many such requests. 
Let $t(r_m)$ denote the arrival time of the last request in $\sigma$. 
For any late request $r$, define 
\begin{equation*}
    Y^\text{late}_r :=
    \begin{cases}
    0 & \textit{ if } t(r) + d_{\max} \ge t(r_m);\\
    \displaystyle\min_{r' \in \sigma} \left\{t(r') - (t(r) + d_{\max}) \mid t(r') > t(r) + d_{\max}\right\} & \textit{ otherwise.}
    \end{cases}
\end{equation*}

\begin{claim} \label{claim:few_late_requests_greedy}
    For any point $x\in \setofpoints$, there is at most one late request located on $x$. 
    In particular, there are at most $|\setofpoints|$ late requests. 
    Moreover, for each late request $r$, we have $w(r) \le d_{\max} + Y^\text{late}_r$. 
\end{claim}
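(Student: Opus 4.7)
The plan is to address the three assertions in turn: uniqueness of late requests per point, the overall cardinality bound, and the waiting-time upper bound for each late request.

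First, I would prove that no two late requests share a location by contradiction. Suppose $r_1$ and $r_2$ are both late with $\ell(r_1) = \ell(r_2) = x$ and $t(r_1) < t(r_2)$. Since $d(\ell(r_1),\ell(r_2)) = 0 \le \rho_x$, the request $r_2$ lies in the closed ball around $r_1$'s location. If $t(r_2) > t'(r_1)$ then the second condition in the definition of lateness for $r_1$ is directly violated. Otherwise $t(r_2) \le t'(r_1)$, and $r_1$ is still pending at $t(r_2)$ by the first condition applied to $r_1$. At that moment the Greedy criterion $(t(r_2) - t(r_1)) + 0 \ge 0 = d(\ell(r_1),\ell(r_2))$ holds, so Greedy matches $r_1$ with some pending request at time $t(r_2) \le t'(r_1)$, contradicting that $r_1$ is still pending at $t'(r_1)$. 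The cardinality bound $|\text{late}| \le |\setofpoints|$ is then immediate.

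For the waiting-time bound, the key observation is: whenever $r$ is pending at a time $t \ge t(r) + d_{\max}$ together with some other pending request $r''$, the Greedy criterion $(t - t(r)) + (t - t(r'')) \ge d_{\max} \ge d(\ell(r),\ell(r''))$ holds automatically. Consequently, either $r$ has already been matched by time $t(r) + d_{\max}$, in which case $w(r) \le d_{\max}$, or $r$ is alone in the pending set at that time. In the latter case, I would split according to the definition of $Y^\text{late}_r$. If $t(r) + d_{\max} < t(r_m)$, let $r'$ be the first request arriving strictly after $t(r) + d_{\max}$; no request arrives during the interval $(t(r) + d_{\max}, t(r'))$, so $r$ remains alone, and at time $t(r')$ the pair $(r,r')$ satisfies the Greedy criterion, yielding $w(r) = d_{\max} + Y^\text{late}_r$.

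The remaining subcase $t(r) + d_{\max} \ge t(r_m)$ is the main subtlety, and I would resolve it by a parity argument. No requests arrive after $t(r_m)$, and matched requests always come in pairs, so after all $m$ arrivals the number of pending requests has the same parity as $m$, which is even. Hence $r$ cannot be the unique pending request at time $t(r) + d_{\max}$, so the first branch of the observation above applies and gives $w(r) \le d_{\max} = d_{\max} + Y^\text{late}_r$, completing the proof.
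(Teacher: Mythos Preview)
Your proposal is correct and follows essentially the same approach as the paper: a Greedy-criterion contradiction for uniqueness at each point, and the $d_{\max}$-plus-next-arrival argument (with the parity observation when $t(r)+d_{\max}\ge t(r_m)$) for the waiting-time bound. Your treatment is in fact slightly more explicit than the paper's, since you spell out that $r$ must be \emph{alone} in the pending set after time $t(r)+d_{\max}$ before the next arrival, whereas the paper leaves this implicit.
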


\begin{proof}
Let $r$ and $r'$ be the two requests such that $\ell(r)=\ell(r')$ and $t(r) < t(r')$. 
Suppose for the sake of a contradiction that both $r$ and $r'$ are late. 
By definition of late request, this implies that $t(r') \le t(r) + \rho_{\ell(r)}$, and in particular, $r$ is still pending at time $t(r')$. 
Since $r'$ is late, it is not matched by the Greedy algorithm at its arrival, and in particular, $r$ and $r'$ are not matched together by Greedy. 
This is a contradiction with the greedy criteria since $d(\ell(r),\ell(r')) = 0 \le \rho_{\ell(r)} + \rho_{\ell(r')}$ (the Greedy algorithm should match $r$ and $r'$ together). 

We now show the second part of the statement. 
Let $r$ be any late request. 
If $t(r) + d_{\max} \ge t(r_m)$, then at time $t = t(r) + d_{\max}$, all the requests in $\sigma$ already arrived. 
Thus, either $r$ has already been matched, or there exists at least one other pending request $r'$ (as we assume that the total number of requests is even). 
Since $t - t(r) = d_{\max} \ge d(\ell(r),\ell(r'))$, the Greedy algorithm matches $r$ and $r'$ at time $t$, which implies that $w(r)=d_{\max}$. 
Otherwise, we have $t(r) + d_{\max} < t(r_m)$. 
By the definition of $Y^\text{late}_r$, the next request $r'$ in $\setofpoints$ arrives at time $t = t(r) + d_{\max} + Y^\text{late}_r$. 
Once again, either $r$ has already been matched at time $t$ (and its waiting time is at most $d_{\max} + Y^\text{late}_r$), or $r$ is matched with $r'$ by the same argument. 
In any case, we have proved that $w(r) \le d_{\max} + Y^\text{late}_r$. 
\end{proof}

We now use stochastic assumptions to upper bound the expected cost of the solution. 

\begin{proof}[Proof of Lemma \ref{lemma:greedy_upperbound}]
Let $\sigma = (r_1, \dots, r_m)$ be a random sequence of $m$ requests, where requests are ordered with increasing arrival times. 

We first bound the expected delay cost induced by late requests. 
Suppose that the $i$-th request of the sequence, $r_i$, is late and is located on a point $x\in \setofpoints$. 
Using Definition \ref{definition:centralized_poisson_arrival}, we know that when $\sigma$ is a random sequence generated by the Poisson arrival process, the (conditional) random variable $Y^\text{late}_r$ follows an exponential distribution of parameter $\lambda(\setofpoints)$. 
In particular, we obtain
\begin{equation*}
    \E_\sigma^m[w(r_i) \mid r_i \text{ is late and }\ell(r_i) = x] \le d_{\max} + \frac{1}{\lambda(\setofpoints)}. 
\end{equation*}
Since the expectation does not depend on point $x$, and since there are at most $|\setofpoints|$ late requests, the total delay cost induced by the late requests is in expectation:
\begin{equation*}
    \E_\sigma^m\left[\sum_{\substack{i=1 \\r_i\text{ is late}}}^m w({r_i})\right]\le |\setofpoints| \cdot \left(d_{\max} + \frac{1}{\lambda(\setofpoints)}\right).
\end{equation*}
When $r_i$ is a nice request located on $x \in \setofpoints$, by Proposition \ref{proposition:subset_exponential_waiting_time}, the (conditional) random variable $Y^\text{nice}_r$ follows an exponential distribution of parameter $\sum_{y \in \setofpoints: d(x,y) \le \rho_x} \lambda_y$. 
Using Observation \ref{observation:lowerboundingscheme} we obtain:
\begin{equation*}
    \E_\sigma^m[w(r_i) \mid r_i \text{ is nice and } \ell(r_i) = x] = \rho_x + 1 / \lambda(\overline{B}(x,\rho_x)) \le \rho_x + \rho_x = 2\rho_x.
\end{equation*}
As we observed in Definition \ref{definition:centralized_poisson_arrival}, the probability the the $i$-th request of the (random) sequence is located on $x$ is equal to $\lambda_x / \lambda(\setofpoints)$. 
Thus, the total delay cost induced by nice requests is 
\begin{align*}
    \E_\sigma^m\left[\sum_{\substack{i=1 \\r_i\text{ is nice}}}^m w({r_i})\right]
    &\le \sum_{i=1}^m \sum_{x \in \setofpoints} \prob(r_i \text{ is nice and }\ell(r_i) = x) \cdot \E_\sigma^m\left[w({r_i}) \mid r_i \text{ is nice and }\ell(r_i) = x \right]\\
    &\le \sum_{i=1}^m \sum_{x \in \setofpoints} \prob(\ell(r_i)=x) \cdot 2\rho_x\\
    &\le \sum_{i=1}^m \sum_{x \in \setofpoints} \frac{\lambda_x}{\lambda(\setofpoints)} \cdot 2\rho_x\\
    &= m \sum_{x \in \setofpoints} \frac{\lambda_x}{\lambda(\setofpoints)} \cdot 2\rho_x.
\end{align*}
Finally, putting everything together, we obtain the expected bound:
\begin{align*}
    \E_\sigma^m\left[\greedy(\sigma)\right]
    &\le \E_\sigma^m\left[2\sum_{i=1}^m w({r_i})\right] &&\text{(Claim \ref{claim:greedy_connected_at_most_delay})}\\
    &= 2 \cdot \E_\sigma^m\left[\sum_{\substack{i=1 \\r_i\text{ is nice}}}^m w({r_i})\right] + 2 \cdot \E_\sigma^m\left[\sum_{\substack{i=1 \\r_i\text{ is late}}}^m w({r_i})\right] \\
    &\le \left(4m \sum_{x\in\setofpoints}\frac{\lambda_x}{\lambda(\setofpoints)}\cdot\rho_x \right) + 2 |\setofpoints| \cdot \left(d_{\max} + \frac{1}{\lambda(\setofpoints)}\right).
\end{align*}
This concludes the proof. 
\end{proof}

\section{Upper bound on the Radius solution}
\label{section:proof_radius}
In this section, we prove Lemma \ref{lemma:radius_upperbound} (restated below) that establishes an upper bound on the expected cost of the Radius algorithm.

\lemmaradius*

\noindent To bound the total cost of the solution produced by the Radius, we separately analyze the delay cost and the connection cost. 
To bound the connection cost we differentiate two types of edges (pairs matched by the algorithm).
Let $M$ denote the matching produced by the Radius algorithm on the input sequence $\sigma$. 
Let us call $e \in M$ a \emph{nice} edge\footnote{Notice that the current definition differs from the previous section.} if the corresponding matched pair was created during the main loop of the algorithm, i.e., before the arrival time of the last request in $\sigma$. 
Otherwise, we call this edge \emph{late}. 
Similarly, a request is called \emph{nice} if it is an endpoint of a nice edge, and \emph{late} otherwise.
Intuitively, since the late requests are matched arbitrarily by the Radius algorithm, the connection cost induced by each of these late edge can possibly be as large as the diameter of the metric space. 
Fortunately, we show that there are only a constant number of them (i.e., independent from $m$).
 
\begin{claim} \label{claim:few_late_edges}
    For any point $x \in \setofpoints$, there is at most one late request located on $x$. 
    In particular, there are at most $|\setofpoints| / 2$ late edges. 
\end{claim}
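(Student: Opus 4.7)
The plan is to mimic the argument of Claim \ref{claim:few_late_requests_greedy}, exploiting the fact that the first matching criterion of the Radius algorithm (the ``if'' branch before the ``else if'') fires as soon as a new request arrives at the same location as a pending one.

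Concretely, I would argue by contradiction. Suppose toward a contradiction that some point $x \in \setofpoints$ hosts two distinct late requests $r$ and $r'$, and without loss of generality assume $t(r) < t(r')$. Since $r$ is late, it is never matched during the main loop, so in particular $r$ is still pending in $P$ at the moment $r'$ arrives. When the algorithm processes $r'$, it checks whether there is a pending request $r'' \in P$ with $d(\ell(r'), \ell(r'')) \le \rho_{\ell(r'')}$. Taking $r'' = r$, we have $d(\ell(r'), \ell(r)) = d(x,x) = 0 \le \rho_x = \rho_{\ell(r)}$, so the first matching branch of the algorithm triggers and pairs $r$ with $r'$ at time $t(r')$. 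This contradicts the assumption that $r'$ is late.

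From here the counting is immediate: since each point hosts at most one late request, there are at most $|\setofpoints|$ late requests in total, and since each late edge consumes two such requests, the number of late edges is bounded by $|\setofpoints|/2$.

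I do not anticipate any real obstacle; the key observation is simply that two copies of the same point are trivially within each other's closed radius balls, so the algorithm's very first matching rule prevents two pending requests from ever coexisting at the same location. The only subtlety to be careful about is that the argument uses the correct matching criterion (distance $\le \rho_{\ell(r)}$, not $\le \rho_{\ell(r)} + \rho_{\ell(r')}$), but since the former is weaker and sufficient here, this causes no difficulty.
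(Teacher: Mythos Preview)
The proposal is correct and takes essentially the same approach as the paper. Both argue that if two requests at the same point were late, the earlier one would still be pending when the later arrives, forcing an immediate match since the distance is $0$; you invoke the first ``if'' branch ($d \le \rho_{\ell(r)}$) explicitly while the paper phrases it via the weaker bound $\rho_{\ell(r)} + \rho_{\ell(r')}$, but the logic is identical.
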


\begin{proof}
Let $r$ and $r'$ be two requests such that $\ell(r) = \ell(r')$ and $t(r) < t(r')$. 
Suppose that when the request $r'$ is processed by the radius algorithm, $r$ is still unmatched. 
Then, $0 = d(\ell(r),\ell(r')) \le \rho_{\ell(r)} + \rho_{\ell(r')}$, and the algorithm matches these requests together. 
This shows that $r$ cannot become a pending request, and in particular, it is not a late request. 
\end{proof}

We now bound the connection cost of the solution induced by nice edges. 
Two nice requests are matched together by the Radius algorithm if and only if their distance is at most the sum of their radii. 
By summing over all the nice edges, we obtain: 
\begin{claim} \label{claim:connection_cost_nice}
    For any input sequence $\sigma$, the connection cost induced by all nice edges is at most $\sum_{r \in \sigma} \rho_{\ell(r)}$. 
\end{claim}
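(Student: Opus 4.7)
The plan is to observe directly from the matching rule of the Radius algorithm that whenever two nice requests $r,r'$ are paired, their distance is bounded by $\rho_{\ell(r)} + \rho_{\ell(r')}$, and then to sum this inequality over all nice edges. Since a nice edge is, by definition, created during the main loop, the pair $(r,r')$ was matched either at step~5--6 of Algorithm~\ref{pseudocode-radius-original} (in which case $d(\ell(r),\ell(r')) \le \rho_{\ell(r')} \le \rho_{\ell(r)} + \rho_{\ell(r')}$, using non-negativity of the radius), or at step~7--8 (in which case the bound $d(\ell(r),\ell(r')) \le \rho_{\ell(r)} + \rho_{\ell(r')}$ is precisely the triggering condition). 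In either case the same bound holds.

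I would then sum over all nice edges. Writing $M_{\text{nice}}$ for the set of nice edges produced by the algorithm, the connection cost of the nice part of the matching equals
\begin{equation*}
\sum_{(r,r') \in M_{\text{nice}}} d(\ell(r), \ell(r'))
\ \le\ \sum_{(r,r') \in M_{\text{nice}}} \bigl(\rho_{\ell(r)} + \rho_{\ell(r')}\bigr)
\ =\ \sum_{r \text{ nice}} \rho_{\ell(r)}
\ \le\ \sum_{r \in \sigma} \rho_{\ell(r)},
\end{equation*}
where the equality uses that each nice request appears in exactly one nice edge (so each $\rho_{\ell(r)}$ is counted exactly once when we expand the sum over pairs into a sum over endpoints), and the final inequality uses that radii are non-negative and that some requests in $\sigma$ may instead be late.

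There is no real obstacle here; the statement is essentially a bookkeeping consequence of the matching criterion, and the only thing to be careful about is to handle both branches of the algorithm (steps~5--6 and~7--8) uniformly, and to make sure the pair-to-endpoint rewriting of the sum is clean (each nice request is the endpoint of exactly one nice edge, so no double counting occurs).
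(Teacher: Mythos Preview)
Your proof is correct and follows essentially the same approach as the paper: the paper simply notes that two nice requests are matched only if their distance is at most the sum of their radii, and then sums over all nice edges. Your write-up is actually more explicit, separately verifying the bound for each of the two matching branches in the algorithm before summing.
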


We now bound the total delay cost. 
Let $t(r_m)$ denote the arrival time of the last request of $\sigma$, which correspond to the time at which all remaining pending (late) requests are matched together by the Radius algorithm. 
Let $r$ be any request in $\sigma$. 
We define $Y_r$ as the duration between the arrivals of $r$ and the first request $r'$ that appears on a point of $\overline{B}(\ell(r), \rho_{\ell(r)})$ after $r$. If there is no such request $r' \in \sigma$, then we set $Y_r := t(r_m) - t(r)$. 
Formally: 
\begin{equation*}
    Y_r := \min\left\{t(r_m) - t(r), \min_{r' \in \sigma}\left\{t(r') - t(r) \mid t(r') > t(r) \text{ and } d(\ell(r'), \ell(r))\le \rho_{\ell(r)}\right\}\right\}
\end{equation*}

\begin{claim} \label{claim:delay_cost_radius}
    Each request $r$ in $\sigma$ is delayed by the Radius algorithm for a time at most equal to $Y_r$. 
\end{claim}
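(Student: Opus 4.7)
The plan is to proceed by case analysis on whether any request arrives in the closed ball $\overline{B}(\ell(r), \rho_{\ell(r)})$ after $r$. In the easy case where no such request exists, we have $Y_r = t(r_m) - t(r)$, and the bound $w(r) \le Y_r$ holds trivially since $r$ is matched at some time no later than $t(r_m)$: either during the main loop, or with another late request in the arbitrary matching performed once all of $\sigma$ has been processed.

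The heart of the proof is the other case. Let $r'$ be the first request arriving after $r$ at a point of $\overline{B}(\ell(r), \rho_{\ell(r)})$, so that $Y_r = t(r') - t(r)$ and $d(\ell(r'), \ell(r)) \le \rho_{\ell(r)}$. If $r$ is matched strictly before $t(r')$, then $w(r) < Y_r$ and we are done; otherwise $r$ is still pending at $t(r')$, and I plan to show that $r$ must be matched exactly at $t(r')$. When the algorithm processes $r'$, the first if-condition is satisfied with $r$ playing the role of the candidate pending request (since $d(\ell(r'), \ell(r)) \le \rho_{\ell(r)}$), so $r'$ gets paired with some pending $\tilde{r}$ satisfying $d(\ell(r'), \ell(\tilde{r})) \le \rho_{\ell(\tilde{r})}$.

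The main obstacle is ruling out the possibility $\tilde{r} \neq r$, since the algorithm may break ties arbitrarily. This is where the triangle inequality comes in: assuming $\tilde{r} \neq r$, combining $d(\ell(r'), \ell(r)) \le \rho_{\ell(r)}$ with $d(\ell(r'), \ell(\tilde{r})) \le \rho_{\ell(\tilde{r})}$ yields $d(\ell(r), \ell(\tilde{r})) \le \rho_{\ell(r)} + \rho_{\ell(\tilde{r})}$. Now consider whichever of $r$ and $\tilde{r}$ arrived later; its arrival time is strictly before $t(r')$, and since both requests are pending at $t(r')$, the earlier one was still pending when the later one arrived. At that earlier moment the algorithm would therefore match the later request through either the first or the second if-condition (the second one is guaranteed by the bound just derived), removing the later request from the pending set strictly before $t(r')$. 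This contradicts both $r$ and $\tilde{r}$ being pending at $t(r')$, so $\tilde{r} = r$, giving $w(r) = Y_r$ and completing the proof.
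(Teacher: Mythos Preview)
Your proof is correct and follows essentially the same approach as the paper's: both argue by contradiction that if $r$ is still pending when $r'$ arrives and $r'$ gets matched with some other pending $\tilde r$ (the paper calls it $r''$), then the triangle inequality yields $d(\ell(r),\ell(\tilde r))\le \rho_{\ell(r)}+\rho_{\ell(\tilde r)}$, so $r$ and $\tilde r$ could not both have survived in $P$ until time $t(r')$. Your version is in fact slightly more careful than the paper's: the paper concludes that Radius ``would have matched $r$ and $r''$ together'' before $r'$ arrived, whereas you correctly observe only that whichever of the two arrived later would have been matched (with \emph{some} pending request) at its arrival time, which is all that is needed for the contradiction.
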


\begin{proof}
Let $r$ be any request in $\sigma$, and assume that $r$ is located on point $x = \ell(r)$. 
When $r$ is processed by the Radius algorithm, then either $r$ is matched immediately with a pending request, and in that case $r$ has not been delayed, or it becomes a pending request. 
For the second case, let $r' \in \sigma$ be the first request arriving after $r$ at a point $y$ at distance at most $\rho_x$ from $x$. 
If such a request does not exist, then in the worst case, $r$ is matched at the very end and thus is delayed for a time $t(r_m) - t(r) = Y_r$. 
Otherwise, either $r$ was already matched when $r'$ arrived --- and then $r$ was delayed for a duration at most $t(r') - t(r) = Y_r$ --- or, we claim that $r$ and $r'$ are matched together by the Radius algorithm. 
Indeed, assume for a contradiction that $r$ and $r'$ are not matched together. 
Since $d(x,y) \le \rho_x$, it is necessarily because $r$ is matched with another pending request $r''$ located on a point $z$, such that $d(y,z) \le \rho_z$. 
But then, by the triangle inequality, we have $d(x,z) \le d(x,y) + d(y,z) \le \rho_x + \rho_z$ (see Figure \ref{figure:radius_intersection} for an example). 
This is impossible: the Radius algorithm would have matched $r$ and $r''$ together before the arrival of $r'$. 
Thus, in any of these cases, $r$ is delayed for a time at most $Y_r$.
\end{proof}

\begin{figure}[ht]
\input{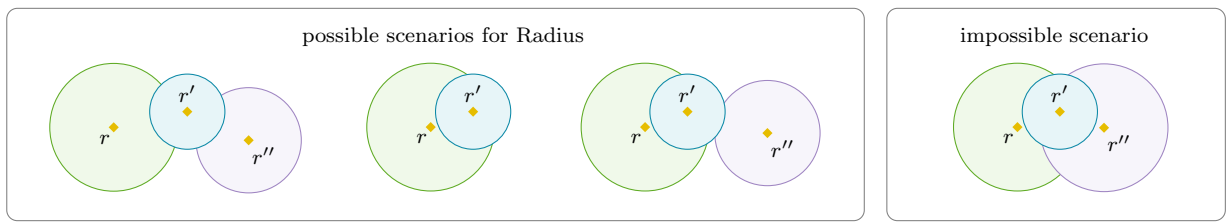}
\vspace{-16pt}
\caption{Example scenarios for the arrival of a new request $r'$ during the Radius execution. Firstly, $\location{r'}$ can belong to some pending request's ball. Secondly, the ball defined for $r'$ can intersect other requests' balls. However, by the definition of Radius, $\location{r'}$ cannot belong to the ball of more than one pending request.}
\label{figure:radius_intersection}
\end{figure}

\noindent Finally, we prove Lemma \ref{lemma:radius_upperbound}. 
\begin{proof}[Proof of Lemma \ref{lemma:radius_upperbound}]
Let $\sigma$ be a sequence of $m$ requests, and let $M $ denote the perfect matching output by the Radius algorithm. 
We split it into two sets $M_\text{nice}$ and $M_\text{late}$ of nice and late edges, respectively. 
The total cost $\rad(\sigma)$ of the solution is equal to $\cc(M_\text{nice}) + \cc(M_\text{late}) + \dc$, the sum of the connection cost $\cc(M_\text{nice})$ induced by the nice edges, the connection cost $\cc(M_\text{late})$ induced by the late edges and the total delay cost $\dc$. 

By Claim \ref{claim:few_late_edges}, there are at most $|\setofpoints| / 2$ late edges. 
\begin{equation} \label{eq:connection_cost_late}
    \cc(M_\text{late}) \le \frac{1}{2} \cdot |\setofpoints| \cdot d_{\max}.
\end{equation}
This bound is valid for any sequence, so, in particular, it is also valid in expectation. 

By Claim \ref{claim:connection_cost_nice}, we have $\cc(M_\text{nice}) \le \sum_{r \in \sigma} \rho_{\ell(r)}$. 
Thus, taking the average over all random sequences $\sigma = (r_1, \dots, r_m)$ consisting of $m$ requests, we obtain
\begin{align} \label{eq:connection_cost_nice}
    \E_\sigma^m\left[\cc(M_\text{nice})\right]
    &\le \E_\sigma^m\left[\sum_{r\in \sigma} \rho_{\ell(r)} \right]
    = \sum_{i = 1}^m \sum_{\anypoint \in \setofpoints} \radius{\anypoint} \cdot \prob(\ell(r_i) = x)
    = \sum_{i = 1}^m \sum_{\anypoint \in \setofpoints} \frac{\lambda_x}{\lambda(\setofpoints)} \cdot \rho_x \nonumber\\
    &\le m \sum_{x \in \setofpoints} \frac{\lambda_x}{\lambda(\setofpoints)} \cdot \rho_x.
\end{align}
We now show that total delay cost is in expectation at most 
\begin{equation} \label{eq:delay_cost}
    \E_\sigma^m[\dc]\le m \sum_{\anypoint \in \setofpoints} \frac{\lambda_x}{\lambda(\setofpoints)} \cdot \rho_x.
\end{equation}
For each point $x \in \setofpoints$, let $W_x \sim \expdistr{\lambda(\overline{B}(x,\rho_x))}$ be an exponential variable with parameter $\lambda(\overline{B}(x,\rho_x)) = \sum_{y \in \overline{B}(x,\rho_x)} \lambda_y$. 
Let us consider a random sequence $\sigma = (r_1, \dots, r_m)$ ordered by increasing arrival times, and let us assume that the $i$-th request $r_i$ is located at $x$. 
Then, by Proposition \ref{proposition:subset_exponential_waiting_time}, the (random) variable $Y_r$, as defined before Claim \ref{claim:delay_cost_radius}, follows the same distribution as $\min\{W_x, t(r_m) - t(r_i)\}$. 
Thus, using Claim \ref{claim:delay_cost_radius}, we prove inequality \eqref{eq:delay_cost} as follows:
\begin{align*}
    \E_\sigma^m\left[\dc\right]
    &\le \E_\sigma^m\left[\sum_{i=1}^m Y_{r_i} \right] &&\text{(Claim \ref{claim:delay_cost_radius})}\\
    &= \sum_{i=1}^m \sum_{x \in \setofpoints}  \prob(\ell(r_i)=x) \cdot \E_\sigma^m\left[Y_{r_i}\mid \ell(r_i) = x\right] &&\text{(linearity of expectation)}\\
    &= \sum_{i=1}^m \sum_{x \in \setofpoints}  \frac{\lambda_x}{\lambda(\setofpoints)} \cdot \E\left[\min\{W_x, t(r_m) - t(r_i)\}\right]\\
    &\le \sum_{i=1}^m \sum_{x \in \setofpoints}  \frac{\lambda_x}{\lambda(\setofpoints)} \cdot \E\left[W_x\right]\\
    &\le \sum_{i=1}^m \sum_{x \in \setofpoints}  \frac{\lambda_x}{\lambda(\setofpoints)} \cdot \rho_x &&\text{(Observation \ref{observation:lowerboundingscheme})}\\
    &= m \sum_{x\in\setofpoints}  \frac{\lambda_x}{\lambda(\setofpoints)} \cdot \rho_x.
\end{align*}
Finally, we obtain the bound claimed in Lemma \ref{lemma:radius_upperbound}, by summing equations \eqref{eq:connection_cost_late}, \eqref{eq:connection_cost_nice} and \eqref{eq:delay_cost}. This concludes the proof. 
\end{proof}

\section{Extension to general delay costs}
\label{sec:general_delay}

\noindent In this section, we study a generalization of the MPMD problem where the delay cost function is not required to be linear. 
In this version of the problem, referred to as $f$-MPMD, the decision of matching a request can be postponed for
time $t$ at a delay cost of $f(t)$, where $f$ is the \emph{delay cost function}. 
We require this function to be \emph{positive} (otherwise some solutions may have negative value), and \emph{non-decreasing}. 
In Appendix \ref{app:general_delay}, we show that w.l.o.g., we can assume $f(0) = 0$, i.e., if a request is directly matched at its arrival time, no delay cost is incurred. 

This more general version of MPMD have been investigated for the classic online adversarial model among others by Azar et al.\ \cite{azar2021min} and Liu et al.\ \cite{liu2018impatient}. 
Their works suggests that in general, the $f$-MPDM problem is even more challenging than the original MPMD problem. 
For instance, in \cite{azar2021min}, Azar et al.\ considered a special type of concave delay cost function, and showed that even for the single-point metric, obtaining a constant competitive algorithm is a non-trivial task (whereas for the linear case, the optimal algorithm simply matches two consecutive requests). 
Liu et al.\ \cite{liu2018impatient} showed that under some natural requirements for function $f$, any deterministic online algorithm for $f$-MPMD on a $k$-points metric must have a competitive ratio $\Omega(k)$.  

In the online stochastic Poisson arrival model, we show in Theorem \ref{theorem:general_delay} that our Greedy and Radius algorithms for MPMD can be adapted to $f$-MPMD, and that their corresponding ratios-of-expectations remain a constant, which depends on $f$. 

The adaptation of the Greedy algorithm for $f$-MPMD is quite straightforward: when the sum of the delay cost of two pending requests exceeds their distance, match them together, i.e., match pending requests $r$ and $r'$ at time $t$ whenever $f(t - t(r)) + f(t - t(r')) \ge d(\ell(r), \ell(r'))$. 

The Radius algorithm works in the general case exactly as in the linear case, but using the following generalized definition of radius.  

\begin{definition} \label{definition:general_radius}
    Given the positive and non-decreasing delay cost function $f$, for any point $x \in \setofpoints$, define its radius $\rho_x$ as the smallest value $u \in \mathbb{R}^+ \cup \{\infty\}$ such that
    \begin{equation*}
        u \ge \E[f(\overline{W}(x,u))],
    \end{equation*}
    where $\overline{W}(x,u)$ is an exponential variable of parameter $\lambda(\overline{B}(x,u)) := \sum_{y \in \setofpoints: d(x,y) \le u} \lambda_y$.
\end{definition}
\noindent Since functions $u \mapsto \lambda(\overline{B}(x,u))$ and $f$ are both non-decreasing, the function $u \mapsto \E[f(\overline{W}(x,u))]$ is non-increasing. 
This implies that the radius of each point is well-defined and unique. 
Moreover, since $\E[\overline{W}(x,u))] = 1 / \lambda(\overline{B}(x,u))$, in the case when $f(t) = t$, this definition coincides with our initial Definition \ref{definition:radius}. 
Similarly as presented in Observation \ref{observation:lowerboundingscheme}, it is easy to see that $\E[f({W}^\circ(x,\rho_x))] \ge \rho_x$, where ${W}^\circ(x,\rho_x)$ is a random variable of parameter $\lambda({B}^\circ(x,\rho_x)) := \sum_{y \in  \setofpoints: d(x,y) < \rho_x} \lambda_y$. 

We now give the main result of this section. 
\begin{theorem} \label{theorem:general_delay}
    Consider an instance of the $f$-MPMD problem such that $\E[f(X)] < \infty$, where $X \sim \expdistr{\lambda(\setofpoints)}$ is an exponential variable of parameter $\lambda(\setofpoints) := \sum_{x \in \setofpoints} \lambda_x$.
    Then, both the Greedy and Radius algorithms achieve ratio of expectations of $O(K_f)$, where
    \begin{equation*}
        K_f := \max_{\mu > 0} \left\{\dfrac{\E[f(X)]}{\E[\min(f(X'), \E[f(X)])]},\text{ where } X \sim \expdistr{\mu} \text{ and } X' \sim \expdistr{2\mu}\right\}.
    \end{equation*}
\end{theorem}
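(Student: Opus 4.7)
The plan is to extend the three main lemmas of the linear case to the $f$-MPMD setting using the generalized radius of Definition~\ref{definition:general_radius}: a lower bound $\E_\sigma^m[\optim(\sigma)] \ge (m/(2K_f)) \sum_{x \in \setofpoints} (\lambda_x/\lambda(\setofpoints)) \rho_x$ on the optimal offline cost, and matching upper bounds of the form $O(m) \sum_{x \in \setofpoints} (\lambda_x/\lambda(\setofpoints)) \rho_x + O(1)$ on $\E_\sigma^m[\rad(\sigma)]$ and $\E_\sigma^m[\greedy(\sigma)]$. Dividing, the $m$-linear terms and the weighted radii cancel, and both algorithms achieve a ratio of expectations $O(K_f)$.

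For the lower bound I follow the structure of Lemma~\ref{lemma:lowerboundingscheme}. Since $f$ is non-decreasing with $f(0)=0$, the optimal cost of pairing two requests is $d(\ell(r),\ell(r')) + f(|t(r)-t(r')|)$, and the same case split on whether the neighbor of $r_0$ lies in $B^\circ(x, \rho_x)$ gives, for the zeroth request of the extended bi-infinite sequence,
\[
c(\overline{\sigma}, r_0) \ \ge\ \min\bigl\{f(\min\{W^-, W^+\}),\ \rho_x\bigr\},
\]
where $W^\pm$ are the independent $\expdistr{\mu}$-distributed waiting times to the nearest arrivals before/after $r_0$ within $B^\circ(x, \rho_x)$, and $\mu := \lambda(B^\circ(x, \rho_x))$. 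By Proposition~\ref{proposition:minimum}, $\min\{W^-, W^+\} \sim \expdistr{2\mu}$, so everything reduces to the key inequality $\E[\min\{f(X'), \rho_x\}] \ge \rho_x/K_f$ for $X' \sim \expdistr{2\mu}$.

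To establish this, I pick $\mu^* > 0$ to be the unique parameter satisfying $\E[f(X_{\mu^*})] = \rho_x$, where $X_\nu \sim \expdistr{\nu}$; its existence is guaranteed by the continuous strict monotonicity of $\nu \mapsto \E[f(X_\nu)]$ (which ranges from $f(\infty)$ down to $0$) together with $\rho_x \in (0, f(\infty)]$, the latter following from the theorem's finiteness hypothesis and the inequality $\E[f(W^\circ(x, \rho_x))] \ge \rho_x$ noted right after Definition~\ref{definition:general_radius}. That same inequality gives $\mu \le \mu^*$ by monotonicity in $\nu$, so $X' \sim \expdistr{2\mu}$ stochastically dominates $Y \sim \expdistr{2\mu^*}$. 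Because $f$ is non-decreasing,
\[
\E[\min\{f(X'), \rho_x\}] \ \ge\ \E[\min\{f(Y), \rho_x\}] \ =\ \E\bigl[\min\bigl\{f(Y), \E[f(X_{\mu^*})]\bigr\}\bigr] \ \ge\ \rho_x / K_f,
\]
the final step being the defining property of $K_f$ applied at $\mu = \mu^*$. Summing over $m$ requests via Claim~\ref{claim:opt_tilde} yields the claimed lower bound on $\E_\sigma^m[\optim(\sigma)]$.

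For the upper bounds, Lemma~\ref{lemma:radius_upperbound} transfers to Radius with only cosmetic changes: the connection-cost analysis is purely metric (nice edges satisfy $d(\ell(r),\ell(r')) \le \rho_{\ell(r)}+\rho_{\ell(r')}$), and the delay cost of a nice request at $x$ is bounded by $\E[f(Y_r)] \le \rho_x$ directly via the generalized radius definition, with $Y_r \sim \expdistr{\lambda(\overline{B}(x, \rho_x))}$ the waiting time to the next arrival in $\overline{B}(x, \rho_x)$. Greedy's matching criterion still implies its connection cost is at most its delay cost, so only the latter requires attention; a nice request $r$ at $x$ waits at most $T_x + Y_r$, with $T_x := \inf\{t : f(t) \ge \rho_x\}$. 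The main obstacle is bounding $\E[f(T_x + Y_r)]$ by $O(K_f)\rho_x$: by the memoryless property, $T_x + Y_r$ has the law of $Z$ conditioned on $Z \ge T_x$ for $Z \sim \expdistr{\lambda(\overline{B}(x, \rho_x))}$, and an extremal analysis paralleling the lower-bound step---tying $\E[f(Z) \mid Z \ge T_x]$ to the capped expectation $\E[\min(f(X'), \E[f(X)])]$ that defines $K_f$---produces the desired bound. Combined with the lower bound, both algorithms achieve ratio of expectations $O(K_f)$.
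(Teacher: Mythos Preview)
Your lower bound and your treatment of Radius track the paper's proof closely. In fact your lower-bound argument is cleaner than the paper's: the appendix declares $X\sim\expdistr{1/\rho_x}$ and asserts $\rho_x=\E[f(X)]$, which is only literally true for linear $f$; your introduction of the parameter $\mu^*$ with $\E[f(X_{\mu^*})]=\rho_x$, followed by $\mu=\lambda(B^\circ(x,\rho_x))\le\mu^*$ and stochastic domination, is the correct way to invoke the definition of $K_f$ and yields exactly the bound $\E_\sigma^m[\optim(\sigma)]\ge (m/(2K_f))\sum_x(\lambda_x/\lambda(\setofpoints))\rho_x$. The Radius upper bound is identical to the paper's.

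The real gap is your Greedy upper bound. Here you diverge from the paper: the paper asserts (inequality~\eqref{eq:greedy_general_delay}) that the linear-case bound $\E_\sigma^m[\greedy(\sigma)]\le 4m\sum_x(\lambda_x/\lambda(\setofpoints))\rho_x+O(1)$ carries over verbatim, with no $K_f$ factor in the upper bound, so that the $O(K_f)$ in the theorem comes entirely from the lower bound. You instead claim the obstacle is bounding $\E[f(T_x+Y_r)]$ by $O(K_f)\rho_x$ and offer only the phrase ``an extremal analysis paralleling the lower-bound step.'' That is not a proof: the quantity defining $K_f$ is a \emph{capped} expectation $\E[\min(f(X'),\E[f(X)])]$, whereas the conditional tail $\E[f(Z)\mid Z\ge T_x]$ you need to control is an \emph{uncapped} expectation over the upper tail, and you give no mechanism connecting the two. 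Moreover, your framing bounds the wrong object: the Greedy waiting time satisfies $w(r)\le T_x+Y_r^{\text{nice}}$ with $Y_r^{\text{nice}}=0$ whenever $r$ is already matched at time $t(r)+T_x$, so one should bound $P(\text{Case 2})\cdot\E[f(T_x+Z)]$ jointly rather than $\E[f(T_x+Z)]$ alone. To complete the argument you must either supply the missing extremal analysis in full, or follow the paper and show directly that the delay cost of a nice request at $x$ is in expectation $O(\rho_x)$ with a constant independent of $f$.
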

The condition $\E[f(X)] < \infty$ means that the expected delay cost corresponding to the duration between any two consecutive requests in the random sequence is finite. 
Without this assumption, even the expected cost of the optimal offline matching, over all random sequence of length $m=2$, is infinite. 
Notice that under this assumption, we have that the radius (as defined in Definition \ref{definition:general_radius}) for each point, is finite. 

The full proof of Theorem \ref{theorem:general_delay} is presented in Appendix \ref{app:general_delay} and follows the same framework as for the linear case. 
The upper bounds on the expected costs of Greedy and Radius are the same as the upper bounds in Lemmas \ref{lemma:greedy_upperbound} and \ref{lemma:radius_upperbound}. 
The factor $O(K_f)$ in the ratio of expectations comes from the analysis of the expected cost of the optimal offline solution. 
Essentially, we show that in a random sequence $\sigma$, for any request located at point $x$, its (generalized) minimum total cost in $\sigma$ is in expectation $\rho_x / K_f$.\footnote{The original definition for linear delays is given before Claim \ref{claim:opt_tilde}. Here, we need to adapt this notion to the new delay cost function $f$.}

There are several natural functions $f$ for which we can give an explicit value of $K_f$. 
For example, in the linear case, that is when $f(t) = t$, it follows directly from Claim \ref{claim:exponential-2} that $K_f = 2 / (1 - e^{-2})$.
More generally, in the case when $f(t) = t^\alpha$, for some positive constant $\alpha \ge 0$, which has been studied in \cite{liu2018impatient}, we show in Appendix \ref{app:general_delay} that 
\begin{proposition} \label{proposition:delay_alpha}
    $K_f = e^{O({\alpha})}$ when $f(t) = t^\alpha$, with $\alpha \ge 0$. 
\end{proposition}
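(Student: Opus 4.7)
The plan is to exploit the scale-invariance of $K_f$ when $f(t) = t^\alpha$. After a change of variables the ratio inside $K_f$ collapses to a single universal constant depending only on $\alpha$, which I can then control via a Markov-type inequality together with a standard bound on the gamma function.

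First I would compute the two quantities appearing in the definition of $K_f$ explicitly. For $X \sim \expdistr{\mu}$, the substitution $u = \mu t$ yields $\E[X^\alpha] = \Gamma(\alpha+1)/\mu^\alpha$. Since $X' \sim \expdistr{2\mu}$ has the same law as $Z/(2\mu)$ with $Z \sim \expdistr{1}$, we have $f(X') = Z^\alpha/(2\mu)^\alpha$. Pulling the factor $(2\mu)^{-\alpha}$ out of the minimum, the ratio inside $K_f$ simplifies to
\[
    \frac{2^\alpha\,\Gamma(\alpha+1)}{\E\bigl[\min\bigl(Z^\alpha,\; 2^\alpha\,\Gamma(\alpha+1)\bigr)\bigr]},
\]
which is independent of $\mu$. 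Hence the supremum over $\mu > 0$ in the definition of $K_f$ is trivially attained, and $K_f$ equals this single expression.

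Setting $M := 2^\alpha\,\Gamma(\alpha+1)$, the pointwise inequality $\min(Z^\alpha, M) \geq M \cdot \mathbf{1}_{\{Z^\alpha \geq M\}}$ combined with $\prob(Z \geq M^{1/\alpha}) = e^{-M^{1/\alpha}}$ gives $\E[\min(Z^\alpha, M)] \geq M e^{-M^{1/\alpha}}$, and therefore
\[
    K_f \;\leq\; e^{M^{1/\alpha}} \;=\; e^{2\,\Gamma(\alpha+1)^{1/\alpha}}.
\]

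It remains to show $\Gamma(\alpha+1)^{1/\alpha} = O(\alpha + 1)$. For $\alpha \in (0, 1]$ this is immediate, since $\Gamma(\alpha+1) \leq 1$ on $[0,1]$ and hence $\Gamma(\alpha+1)^{1/\alpha} \leq 1$. For $\alpha \geq 1$, Stirling's approximation gives $\Gamma(\alpha+1) \leq C\sqrt{\alpha}(\alpha/e)^\alpha$ for an absolute constant $C$, so
\[
    \Gamma(\alpha+1)^{1/\alpha} \;\leq\; \bigl(C\sqrt{\alpha}\bigr)^{1/\alpha} \cdot \frac{\alpha}{e} \;=\; O(\alpha),
\]
the prefactor $(C\sqrt{\alpha})^{1/\alpha}$ being bounded. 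Combining the two regimes gives $K_f \leq e^{O(\alpha)}$, as claimed. The only non-routine step is the scaling reduction that eliminates the dependence on $\mu$; everything else is book-keeping, and I do not anticipate any substantial obstacle.
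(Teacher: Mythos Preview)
Your proof is correct and follows essentially the same approach as the paper: both lower bound $\E[\min(f(X'),\E[f(X)])]$ by $\E[f(X)]\cdot\prob(f(X')\ge \E[f(X)])$, compute this tail probability explicitly, and finish with Stirling to show $\Gamma(\alpha+1)^{1/\alpha}=O(\alpha)$. Your explicit scale-invariance reduction (eliminating $\mu$ up front) and your separate treatment of $\alpha\in(0,1]$ make the argument slightly cleaner and more complete than the paper's version, which only records the asymptotic Stirling estimate, but the core idea is identical.
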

Notice that for any polynomial function $f$ (so in particular for all functions $t \mapsto t^\alpha$), and any exponential variable $X$, we have $\E[f(X)] < \infty$. 
Thus, we obtain the following corollary. 
\begin{corollary}
    Both the Greedy and the Radius algorithms achieve a ratios-of-expectations of $e^{O({\alpha})}$ for the $t^\alpha$-MPMD problem. 
\end{corollary}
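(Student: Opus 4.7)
The plan is to obtain the corollary directly by combining Theorem~\ref{theorem:general_delay} with Proposition~\ref{proposition:delay_alpha}, so the argument reduces to verifying a single integrability hypothesis and then composing two bounds.

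First I would check that $f(t) = t^\alpha$ satisfies the hypothesis of Theorem~\ref{theorem:general_delay}, namely $\E[f(X)] < \infty$ for $X \sim \expdistr{\lambda(\setofpoints)}$. This is the standard moment identity for the exponential distribution,
\begin{equation*}
    \E[X^\alpha] \;=\; \int_0^\infty t^\alpha \, \lambda(\setofpoints)\, e^{-\lambda(\setofpoints)\, t} \, dt \;=\; \frac{\Gamma(\alpha + 1)}{\lambda(\setofpoints)^\alpha},
\end{equation*}
which is finite for every $\alpha \ge 0$. This is consistent with the comment preceding the corollary, which notes that any polynomial function of an exponential variable has finite expectation.

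Once the hypothesis is in place, Theorem~\ref{theorem:general_delay} yields a ratio of expectations of $O(K_f)$ for both Greedy and Radius, and Proposition~\ref{proposition:delay_alpha} supplies $K_f = e^{O(\alpha)}$. Concretely, I would extract absolute constants $A, B > 0$ such that $K_f \le A\, e^{B \alpha}$, together with an absolute constant $C > 0$ bounding the ratio by $C \cdot K_f$; composing gives a ratio at most $A C\, e^{B \alpha}$, which is $e^{O(\alpha)}$ after absorbing the constant factor into the exponent (valid uniformly for all $\alpha \ge 0$). This is the desired bound for both algorithms.

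There is essentially no obstacle here beyond the hypothesis check: the corollary is a pure consequence of the two stated results, and its only content is recording that monomial delay functions automatically meet the finite-expectation assumption of Theorem~\ref{theorem:general_delay}, so the general guarantee specializes to $e^{O(\alpha)}$ for the $t^\alpha$-MPMD problem.
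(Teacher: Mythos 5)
Your proposal is correct and matches the paper's own route exactly: the paper derives the corollary by noting that $\E[f(X)]<\infty$ holds for any polynomial $f$ (hence for $t\mapsto t^\alpha$), and then combining Theorem~\ref{theorem:general_delay} with Proposition~\ref{proposition:delay_alpha}. Your explicit Gamma-function computation of $\E[X^\alpha]$ and the remark about absorbing the $O(\cdot)$ constants into $e^{O(\alpha)}$ are just slightly more detailed versions of the same one-line argument.
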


\section{Paying penalties to clear pending requests}
\label{section:mpmdfp}
In this section, we consider a variant of MPMD called MPMDfp \cite{emek2016online}, where it is allowed to clear any request by paying a fixed penalty $p > 0$.
For this problem, we propose the following algorithm $\algor$, that works similarly to Radius, obtaining a constant ratio of expectations. 

Given the metric space ($\setofpoints, d$), define $\setofpoints^{(1)} = \{x \in \setofpoints: \radius{x} < p\}$ and $\setofpoints^{(2)} = \{x \in \setofpoints: \radius{x} \ge p\}$ (where $\radius{x}$ is the radius of point $x \in \setofpoints$ as defined in Definition \ref{definition:radius}). Suppose that at time $t$, a new request $r$ arrives. Then, our algorithm performs the following actions depending on whether $\ell(r)\in \setofpoints^{(2)}$ or $\ell(r)\in \setofpoints^{(1)}$:
\begin{itemize}
    \item[-] Suppose $\ell(r) \in \setofpoints^{(2)}$. If there exists a pending request $r'$ located at point $y \in \setofpoints^{(1)}$ and $x \in \overline{B}(y, \radius{y})$, then match $r$ with $r'$. Otherwise, clear $r$. 
    \item[-] Suppose $\ell(r) \in \setofpoints^{(1)}$. Apply the Radius algorithm to match this request. 
\end{itemize}
Notice that there possibly exists an odd number of late requests (due to clearing an odd number of requests arriving at points $\setofpoints^{(2)}$). In that case, $\algor$ has to clear the last request $r_m$ even when $\ell(r_m)\in\setofpoints^{(1)}$. 

\begin{theorem}
    For MPMDfp in the Poisson arrival model, $\algor$ achieves a ratio of expectations of $8 / (1 - e^{-2})$. 
\end{theorem}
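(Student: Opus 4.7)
The plan is to lift the Radius analysis of Theorem \ref{main:radius} to MPMDfp by replacing $\rho_x$ everywhere with the \emph{effective radius} $h_x := \min\{\rho_x, p\}$, which equals $\rho_x$ on $\setofpoints^{(1)}$ and $p$ on $\setofpoints^{(2)}$. The goal is to establish the two companion bounds
\begin{align*}
    \E_\sigma^m[\algor(\sigma)] &\le 2m \sum_{\anypoint \in \setofpoints} \frac{\lambda_x}{\lambda(\setofpoints)} \cdot h_x \;+\; O\bigl(|\setofpoints|\cdot(d_{\max}+p)\bigr), \\
    \E_\sigma^m[\optim(\sigma)] &\ge \frac{m(1-e^{-2})}{4} \sum_{\anypoint \in \setofpoints} \frac{\lambda_x}{\lambda(\setofpoints)} \cdot h_x,
\end{align*}
which together yield the claimed ratio $8/(1-e^{-2})$ after letting $m \to \infty$, exactly as in the proof of Theorem \ref{main:radius}.

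For the lower bound on $\optim$, I would first adapt Claim \ref{claim:opt_tilde}: because $\optim$ may now either pair two requests or clear a request at penalty $p$, we obtain $\optim(\sigma) \ge \tfrac{1}{2}\sum_{r \in \sigma} c_p(\sigma, r)$, where $c_p(\sigma, r) := \min\{c(\sigma, r), 2p\}$ (a cleared request contributes $p = 2p/2$, a matched request at least $c(\sigma, r)/2$). I would then mimic the proof of Claim \ref{claim:expected_cost_x}, but define $W^+, W^-$ using the open ball $B^\circ(x, h_x)$ instead of $B^\circ(x, \rho_x)$. The triangle-inequality argument of \eqref{lower-bound-cost-request} then gives $c_p(\overline{\sigma}, r_0) \ge \min\{\min\{W^-, W^+\}, h_x\}$. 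The key technical check is that $\lambda(B^\circ(x, h_x))\cdot h_x \le 1$ holds in both regimes: on $\setofpoints^{(1)}$ this is exactly Observation \ref{observation:lowerboundingscheme}, while on $\setofpoints^{(2)}$ it follows either from $p < \rho_x$ together with the minimality in Definition \ref{definition:radius} (which forces $1/\lambda(\overline{B}(x, p)) > p$), or from $p = \rho_x$ combined with Observation \ref{observation:lowerboundingscheme}. Consequently Claim \ref{claim:exponential-2} continues to supply the $(1-e^{-2})/2$ factor, giving $\E[c_p(\overline{\sigma}, r_0) \mid \ell(r_0) = x] \ge \tfrac{1-e^{-2}}{2} h_x$ uniformly in $x$, and the remainder of the proof of Lemma \ref{lemma:lowerboundingscheme} carries over unchanged.

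For the upper bound on $\algor$, I would split the cost request by request according to whether the request is located in $\setofpoints^{(1)}$ or $\setofpoints^{(2)}$. A request $r$ at $x \in \setofpoints^{(1)}$ is handled essentially as by Radius, with the important additional trigger that any later $\setofpoints^{(2)}$-arrival at $z$ with $d(z, x) \le \rho_x$ also matches $r$ (by the algorithm's first rule for $\setofpoints^{(2)}$-arrivals). Since this condition coincides with the Radius matching rule, Claim \ref{claim:delay_cost_radius} extends verbatim to yield expected delay $\le \rho_x = h_x$ via Proposition \ref{proposition:subset_exponential_waiting_time} and Observation \ref{observation:lowerboundingscheme}; the connection cost of the resulting pair is at most $\rho_{\ell(r)} + \rho_{\ell(r')}$, amortizing to $\rho_x$ per endpoint exactly as in Claim \ref{claim:connection_cost_nice}. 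A request $r$ at $x \in \setofpoints^{(2)}$ is decided instantly: it is either cleared at cost $p = h_x$, or matched with a pending $r'$ at $y \in \setofpoints^{(1)}$ for connection cost $d(\ell(r), y) \le \rho_y < p = h_x$ (with no delay). Summing as in Lemma \ref{lemma:radius_upperbound} produces the claimed bound up to an $O(|\setofpoints|\cdot(d_{\max} + p))$ additive correction that absorbs the leftover pending $\setofpoints^{(1)}$-requests (bounded via Claim \ref{claim:few_late_edges}) and the possible last-request clearing of $r_m$.

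The main obstacle is the uniform treatment of the two regimes in the lower bound, namely verifying that $\lambda(B^\circ(x, p))\cdot p \le 1$ still holds for every $x \in \setofpoints^{(2)}$, since this is where the $(1-e^{-2})/2$ constant ultimately comes from. Beyond that, the proof is essentially a careful bookkeeping exercise following Sections \ref{section:lowerbound} and \ref{section:proof_radius}.
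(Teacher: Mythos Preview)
Your proposal is correct and mirrors the paper's argument: both prove $\E[\optim]\ge \frac{m(1-e^{-2})}{4}\sum_x\frac{\lambda_x}{\lambda(\setofpoints)}\min\{\rho_x,p\}$ and $\E[\algor]\le 2m\sum_x\frac{\lambda_x}{\lambda(\setofpoints)}\min\{\rho_x,p\}+O(1)$, then divide. One minor simplification worth noting: the paper keeps the \emph{original} ball $B^\circ(x,\rho_x)$ for $W^\pm$ and truncates the per-request cost at $p$ rather than $2p$, so the lower bound reads $c'(\overline\sigma,r_0)\ge\min\{W^-,W^+,\rho_x,p\}$ and the key product is simply $\lambda(B^\circ(x,\rho_x))\cdot\min\{\rho_x,p\}\le\lambda(B^\circ(x,\rho_x))\cdot\rho_x\le 1$ directly from Observation~\ref{observation:lowerboundingscheme}; this sidesteps the case analysis on $\setofpoints^{(2)}$ that you flag as the main obstacle.
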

The full proof of this theorem is presented in Appendix \ref{app:mpmdfp}, and follows the same framework as sketched in Section \ref{section:algorithms}. 
On the one hand, we lower bound $\E_{\sigma}^m[\optim(\sigma)]$ by
\begin{equation*}
    m \cdot \frac{1 - e^{-2}}{4} \sum_{\anypoint \in \setofpoints} \frac{\lambda_x}{\lambda(\setofpoints)} \cdot \min\{\radius{\anypoint}, p\};
\end{equation*}
on the other hand, we upper bound the expected cost produced by $\algor$ by
\begin{equation*}
    \left(2m \sum_{\anypoint \in \setofpoints} \frac{\lambda_x}{\lambda(\setofpoints)} \cdot \min\{\radius{\anypoint}, p\}\right) + \frac{1}{2} \cdot |\setofpoints| \cdot d_{max}.
\end{equation*}

\paragraph{Remark.} In \cite{emek2016online}, Emek et al.\ showed a reduction from MPMD to MPMDfp in the case where $\smash{p < 2 \cdot d_{\max}}$. They argue that any online algorithm for the instance of MPMD obtained by this reduction can be turned into an algorithm for the original instance of MPMDfp, while only loosing a factor 2 in the competitive ratio. 

\section{Extension to asymmetric distance costs}
\label{section:asymmetric}
In this section, we consider a generalized version of this problem where the given metric space $\metricspace = (\setofpoints, d)$ is asymmetric, i.e., it is not necessary $d(x, y) = d(y, x)$ for any two different points $x, y \in \setofpoints$. 
However, the distance function still satisfies the triangle inequality, i.e., for any three points $x, y, z \in \setofpoints$, we have $d(x, y) + d(y, z) \ge d(x, z)$. 
Under such assumption, the distance cost of a pair $(r, r')$ is defined as
\begin{equation*}
    \frac{d(\ell(r),\ell(r')) + d(\ell(r'), \ell(r))}{2}.
\end{equation*}

For this asymmetric distance version of MPMD in stochastic model, the Greedy algorithm, matching any two pending requests $r, r'$ into a pair when their total delay cost is at least $$\frac{d(\ell(r), \ell(r')) + d(\ell(r'), \ell(r))}{2},$$ still achieves a constant ratio of expectations. 
\begin{theorem}
    For asymmetric MPMD in the Poisson arrival model, $\greedy$ achieves a ratio of expectations of $16 / (1 - e^{-2})$. 
\end{theorem}
See Appendix \ref{app:asymmetric} for a full proof of this theorem.

\section{Conclusion}
\label{section:conclusion}
In this paper, we considered the online problem of Min-cost Perfect Matching with Delays (MPMD) with additional stochastic assumption on the sequence of the input requests. 
In the case where the requests follow a Poisson arrival process, we presented two simple deterministic online algorithms with constant ratio of expectations.
In particular, we observed that the cost of the optimal offline solution is proportional to the number of requests in the sequence, and gave a tight (up to a constant factor independent from the instance) estimation of the constant of proportionality.

In the following text, we briefly discuss some potential future directions. 

\subsection{The bipartite case in the Poisson arrival model}
Previously, the bipartite version of MPMD (i.e., MBPMD) has been considered in the adversarial model \cite{ashlagi2017min} where each request has a color, either red or blue, and only requests of different colors can be matched into a pair\footnote{For an application, imagine that the red requests come from customers and the blue ones represent the suppliers.}. 
In an equivalent definition, given the metric space $\metricspace = (\setofpoints, \distancesymbol)$, the points of $\setofpoints$ are partitioned into two subsets $A$ and $B$, such that the requests arriving at points $A$ can only be matched with requests from points $B$. 
Ashlagi et al.\ \cite{ashlagi2017min} proposed two O($\log n$)-competitive randomized online algorithms for this problem.
Besides, they established a lower bound of $\Omega(\sqrt{\log n / \log \log n})$ on the competitive ratio of any online algorithm. 
Note that the MBPMD problem can be seen as a special case of the \emph{non-metric} perfect matching problem with delays, where the connection cost function $d: \setofpoints \times \setofpoints \rightarrow \R_+ \cup \{\infty\}$ can have infinite values and is no longer assumed to satisfy the triangle inequality. 

A natural direction would be to explore MBPMD in the Poisson arrival model. 
Unfortunately, the following observation establishes an initial difficulty: the expected cost of the offline optimal algorithm, on random sequence of length $m$, cannot be upper bounded by $O(m)$. 

\begin{lemma}
    Let $\setofpoints = \{a, b\}$, and assume that the connection cost $d$ satisfies $d(a, a) = d(b, b) = \infty$ and $d(a, b) = d(b, a) = 0$. 
    In the Poisson arrival model, the red (resp.\ blue) requests arrive at point $a$ (resp.\ $b$) with a Poisson arrival rate $\lambda_a = 1 / 2$ (resp.\ $\lambda_b = 1 / 2$).
    Let $\sigma$ be a random request sequence of length $m$. 
    Consider the algorithm $\algor$ that matches any two pending requests from $a$ and $b$ greedily. 
    Then, $\E_\sigma^m[\algor(\sigma)] = \Omega(m\sqrt{m})$.
\end{lemma}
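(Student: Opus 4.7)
The plan is to exploit the zero-cost edge between $a$ and $b$ to express the delay cost of $\algor$ in terms of a simple symmetric random walk, and then invoke the classical estimate that the absolute value of such a walk is of order $\sqrt{n}$ after $n$ steps. Since $d(a,b)=0$, the greedy criterion ``delay $\ge$ distance'' is always satisfied between two pending requests of opposite colors, so $\algor$ matches such a pair the instant they coexist. Consequently, the set of pending requests is monochromatic at every moment, and the number of pending requests just after the $i$-th arrival equals $|S_i|$, where $S_i := c_1 + \dots + c_i$ and $c_j := +1$ if the $j$-th request is located at $a$ and $c_j := -1$ otherwise. Under the centralized Poisson model (Definition~\ref{definition:centralized_poisson_arrival}), the marks $c_j$ are i.i.d.\ symmetric $\pm 1$ variables, and they are independent from the inter-arrival times $\tau_2, \tau_3, \dots$, which are themselves i.i.d.\ $\expdistr{\lambda_a+\lambda_b} = \expdistr{1}$ variables.

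During the interval between the $i$-th and $(i+1)$-th arrivals, each of the $|S_i|$ pending requests accumulates one unit of delay per unit of time. Since the connection cost of $\algor$ is nonnegative, we obtain the pointwise bound
\[
\algor(\sigma)\ \ge\ \sum_{i=1}^{m-1}|S_i|\,\tau_{i+1}.
\]
The independence of $\tau_{i+1}$ from $S_i$ together with $\E[\tau_{i+1}]=1$ then gives, after taking expectations,
\[
\E_\sigma^m[\algor(\sigma)]\ \ge\ \sum_{i=1}^{m-1}\E[|S_i|].
\]

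It remains to show that $\sum_{i=1}^{m-1}\E[|S_i|]=\Theta(m^{3/2})$, which reduces to the classical estimate $\E[|S_i|]=\Theta(\sqrt i)$ for the simple symmetric random walk. One derivation goes through the central limit theorem together with the uniform integrability of $\{S_i^2/i\}_i$ (supplied by the second-moment identity $\E[S_i^2]=i$), which yields $\E[|S_i|]=(1+o(1))\sqrt{2i/\pi}$; a direct computation via Stirling's formula for central binomial coefficients gives the same asymptotics. Summing over $i$ produces $\Theta(m^{3/2})$, establishing $\E_\sigma^m[\algor(\sigma)]=\Omega(m\sqrt{m})$. The main conceptual point is to recognize that the inter-arrival times are independent of the color marks in a marked Poisson process, which is precisely what lets the expectation factorize and reduces the whole statement to the standard random-walk lower bound; beyond this observation, all ingredients are elementary.
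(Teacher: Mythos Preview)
Your proposal is correct and follows essentially the same route as the paper: both express the delay cost as $\sum_{i=1}^{m-1}(\text{pending count after }r_i)\cdot(\text{inter-arrival time})$, factor the expectation using the independence of locations and inter-arrival times in the centralized Poisson model, and then invoke the classical estimate $\E[|S_i|]\sim\sqrt{2i/\pi}$ for the simple symmetric random walk. The only cosmetic difference is that you write the bound as an inequality (connection cost $\ge 0$) whereas the paper uses equality (since $d(a,b)=0$ forces the connection cost to vanish), which is immaterial for the $\Omega(m\sqrt m)$ conclusion.
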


\begin{proof}
First, notice that $\algor$ is the optimal algorithm in both online and offline settings. 
Given any $i \in \{1, \dots, m\}$, let $P_i$ denote the (random) number of requests still pending right after the arrival of $i$-th request. 
Let $W_i$ denote the time duration between the arrival times of $i$-th and ($i+1$)-th requests. 
Since $\lambda_a = \lambda_b = 1 / 2$, we have $W_i \sim \expdistr{1}$. 
The total cost of $\algor$, which corresponds to its total delay cost, is
\begin{equation*}
    \algor(\sigma)=\sum_{i=1}^{m-1} P_i \cdot W_i.
\end{equation*}
Since $W_i$ is independent from $P_i$ according to Definition \ref{definition:centralized_poisson_arrival}, for any $i \in \{1, \dots, m\}$, we have
\begin{equation*}
    \E_\sigma^m[\algor(\sigma)] = \sum_{i=1}^{m-1} \E_\sigma^m[P_i] \cdot \E_\sigma^m[W_i] = \sum_{i=1}^{m-1} \E_\sigma^m[P_i].
\end{equation*}
Note that $\{P_i\}_i$ is the translation distance of a (uniform) one-dimensional random walk. 
It is known that: 
\begin{equation*}
    \E[P_i] \sim \sqrt{\frac{2}{\pi}}\cdot\sqrt{i}. 
\end{equation*}
This implies that 
\begin{equation*}
    \E_\sigma^m[\algor(\sigma)] = \sum_{i=1}^{m-1} \E_\sigma^m[P_i] = \Omega\left(\sum_{i=1}^{m-1}\sqrt{i}\right) = \Omega(m\sqrt{m}).
\end{equation*}
\end{proof}
In the bipartite case, since two requests on the same point cannot be matched together, pending requests on the same point will accumulate over time, forming queues. 
In particular, the waiting time of a request depends on the size of the queue at the time of its arrival. By the previous lemma, we know that the size of the queues will grow to infinite, and in particular, as time passes by, the delay cost will become arbitrarily larger than the connection cost. 
This suggests that a simple algorithm that matches any two pending requests, no matter how large their connection cost is, is essentially the best possible online algorithm for this stochastic version of MBPMD. 

\subsection{$k$-way min-cost perfect matching with delays}
Another direction, that was introduced by \cite{melnyk2021online} for the online adversarial model, would be to consider a generalized $k$-way min-cost perfect matching with delays ($k$-MPMD) in the stochastic input model, where each pair (a.k.a., $k$-tuple) consists of $k$ different requests ($k \ge 2$ is an arbitrary integer).
Note that such $k$-MPMD problem indeed has real applications from ride-sharing taxi platforms (when a taxi picks up $k$ passengers from different locations for one ride) and online gaming platforms (when a gaming session consists of $k$ different players). 
To attack this version of the MPMD problem, one should first come out with a suitable notion of ``connection cost'' of a $k$-set. 
This might be for instance measured by the maximum distance between any two requests of that set, the average distance, the weight of a minimum spanning tree, etc.  

\subsection{Online network design problems with delays in the Poisson arrival model}
Besides online matching with delays, another online network design problem called multi-level aggregation is also considered in this Poisson arrival model \cite{mari2024online}.
There also exists online algorithm with constant ratio of expectations. 
For the other online network design problems, such as service with delays (and its generalization called $k$-services with delays), facility location with delays, Steiner tree/forest with delays etc, does there also exist online algorithm with constant ratio of expectations in the Poisson arrival model?

\section*{Acknowledgement}
This work was partially supported by the ERC CoG grant TUgbOAT no 772346, Polish NCN grant no 2020/37/B/ST6/04179, no 2022/45/B/ST6/00559 and no 2024/53/N/ST6/04119. 
For the purpose of Open Access, the author has applied CC-BY public copyright license to any Author Accepted Manuscript (AAM) version arising from this submission.

\bibliographystyle{siam}
\bibliography{bibliography}

\appendix 

\section{Elementary proof of Proposition \ref{proposition:subset_exponential_waiting_time}}
\label{app:subset_distribution_elementary}
In this section, we present a proof of Proposition \ref{proposition:subset_exponential_waiting_time} that is more elementary than the one included in the main part of this paper. 
First, let us restate this proposition.

\subsetdistribution*

It is known that we can look at a Poisson process as the limit of a Bernoulli process. 
In our case, it means that we can proceed in two steps. 
First, we have to divide the timeline $[0,\infty)$ into tiny sub-intervals of length $\delta$. 
Then, for each of them, toss a coin with success probability $\lambda(\setofpoints) \cdot \delta$ to decide whether it contains a request arrival. 
Finally, in our process, we also have to draw the locations. 
By Definition \ref{definition:centralized_poisson_arrival}, we do it independently for each request with the probability of it appearing at point $\anypoint$ equal to $\waitingparam[\anypoint] / \waitingparam(\setofpoints)$ for each $x \in \setofpoints$.

Now, w.l.o.g., let us assume that $i = 1$ and $\arrival{r_1} = 0$, as we only look at the future requests here. 
Let $W$ denote the waiting time for the first request arriving after $r_1$ at any point of $\S$. 
Our aim is to find the tail distribution $\bar{F}_W(t)$ of $W$. 
For this purpose, we notice that at any given time $t > 0$, it holds that a request can arrive between $t$ and $t + \delta$ with probability $\lambda(\setofpoints) \cdot \delta$. 
Moreover, the chances of its location belonging to $\S$ are equal to $\lambda(\S) / \lambda(\setofpoints)$. Thus, we can write that
\begin{equation*}
    \bar{F}_W(t + \delta) = \bar{F}_W(t) - \bar{F}_W(t) \cdot \lambda(\setofpoints) \cdot \delta \cdot \dfrac{\lambda(\S)}{\lambda(\setofpoints)}
\end{equation*}
which is equivalent to
\begin{equation*}
    \bar{F}_W(t+\delta) - \bar{F}_W(t) = -\bar{F}_W(t) \cdot \delta \cdot \lambda(\S).
\end{equation*}
After dividing both sides by $\delta$ and taking the limit $\delta \to 0$, we obtain
\begin{equation*}
    \bar{F}_W'(t) = -\lambda(\S) \cdot \bar{F}_W(t).
\end{equation*}
Together with condition $\bar{F}_W(0) = 1$, it implies that $\bar{F}_W(t) = e^{-\lambda(\S)t}$. 
Thus, the cumulative distribution function of $W$ equals $1 - e^{-\lambda(\S)t}$, which means that $W$ has an exponential distribution with parameter $\lambda(\S)$.

\section{Missing proofs in Section \ref{sec:general_delay}}
\label{app:general_delay}
We show that w.l.o.g., we can assume that the delay cost function $f$ satisfies $f(0) = 0$.

\begin{claim}
    Let $f$ be any positive and non-decreasing function, and define a new function $\hat{f}$ with $\hat{f}(t) = f(t) - f(0)$.
    Suppose that there exists a $C$-competitive algorithm $\widehat{\algor}$ for $\hat{f}$-MPMD, for some constant $C > 0$. 
    Then, $\widehat{\algor}$ is also $C$-competitive for $f$-MPMD.  
\end{claim}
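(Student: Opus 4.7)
The plan is to exploit two simple observations: (i) the functions $f$ and $\hat{f}$ differ by the constant $f(0) \ge 0$ (positivity of $f$), and (ii) every feasible matching on a sequence of $m$ requests incurs exactly $m$ delay terms, one per request, while the connection cost depends only on the matching and not on the choice of delay function.

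First, I would fix an arbitrary input sequence $\sigma$ consisting of $m$ requests and an arbitrary feasible solution $(M, s)$ for it. The decomposition
\[
\sum_{(r,r') \in M} d(\ell(r),\ell(r')) + \sum_{r \in \sigma} f(s(r)-t(r)) \;=\; \Big(\text{$\hat f$-cost of $(M,s)$}\Big) + m \cdot f(0)
\]
shows that the $f$-cost and the $\hat f$-cost of any feasible solution differ by the deterministic additive shift $m f(0)$, independent of the solution. In particular, a matching is optimal under $f$ if and only if it is optimal under $\hat f$, which gives $\optim_f(\sigma) = \optim_{\hat f}(\sigma) + m f(0)$ and, for the solution returned by $\widehat{\algor}$, the analogous identity $\widehat{\algor}_f(\sigma) = \widehat{\algor}_{\hat f}(\sigma) + m f(0)$.

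Then I would chain the competitive inequality with this identity. Using $C \ge 1$ (a competitive ratio is always at least one) together with $f(0) \ge 0$, I get
\[
\widehat{\algor}_f(\sigma) \;=\; \widehat{\algor}_{\hat f}(\sigma) + m f(0) \;\le\; C \cdot \optim_{\hat f}(\sigma) + m f(0) \;\le\; C \cdot \optim_{\hat f}(\sigma) + C \cdot m f(0) \;=\; C \cdot \optim_f(\sigma).
\]
For the ratio-of-expectations version (Definition \ref{def:roe}), the same shift $m f(0)$ is deterministic, hence taking expectations over the random sequence $\sigma$ preserves the inequalities, and dividing yields the desired bound in the limit $m \to \infty$.

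There is no genuine obstacle here: the whole argument is a one-line bookkeeping step enabled by the fact that the length $m$ of the sequence is fixed by the problem. The only points worth flagging are that we genuinely need $C \ge 1$ in the last chain of inequalities, and that ``$f$ positive'' must be read as $f(0) \ge 0$ so that the shift $m f(0)$ is non-negative; both are automatic in the setting considered.
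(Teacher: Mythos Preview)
Your proof is correct and follows essentially the same approach as the paper: both observe that on any $m$-request sequence the $f$-cost of every feasible solution equals its $\hat f$-cost plus the fixed shift $m f(0)$, and then use $C\ge 1$ and $f(0)\ge 0$ to conclude. The paper phrases the last step as the ratio inequality $\frac{a+c}{b+c}\le \frac{a}{b}$ rather than your additive chain, but the content is identical.
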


\begin{proof}
For any input sequence $\sigma$ of length $m$, we have $\algor(\sigma) = \widehat{\algor}(\sigma) + m \cdot f(0)$, where $\algor(\sigma)$ and $\widehat{\algor}(\sigma)$ respectively denote the cost of the algorithm with delay cost functions $f$ and $\hat{f}$. 
Similarly, it is easy to see that $\optim(\sigma) = \widehat{\optim}(\sigma) + m \cdot f(0)$, where 
$\optim(\sigma)$ and $\widehat{\optim}(\sigma)$ denote the total cost of the optimal offline solutions, with respective delay cost functions $f$ and $\hat{f}$. 
In particular, 
\begin{equation*}
    \frac{\algor(\sigma)}{\optim(\sigma)} = \frac{\widehat{\algor}(\sigma) + m \cdot f(0)}{\widehat{\optim}(\sigma) + m \cdot f(0)} \le \frac{\widehat{\algor}(\sigma)}{\widehat{\optim}(\sigma)}
\end{equation*}
which proves the claim. 
\end{proof} 

\subsection{Proof of Theorem \ref{theorem:general_delay}}
The proof is similar as the proof of Theorems \ref{main:greedy} and \ref{main:radius}. We first prove the following upper bounds on the expected cost of the solutions of Radius and Greedy:
\begin{equation} \label{eq:radius_general_delay}
    \E_{\sigma}^m[\rad(\sigma)] \le \left(2m \sum_{\anypoint \in \setofpoints} \frac{\lambda_x}{\lambda(\setofpoints)} \cdot \radius{\anypoint}\right) + \frac{1}{2} \cdot |\setofpoints| \cdot d_{\max},
\end{equation}
\begin{equation} \label{eq:greedy_general_delay}
    \E_{\sigma}^m[\greedy(\sigma)] \le \left(4m \sum_{\anypoint \in \setofpoints} \frac{\lambda_x}{\lambda(\setofpoints)} \cdot \radius{\anypoint}\right) + 2|\setofpoints| \cdot \left(d_{\max}+\frac{1}{\lambda(\setofpoints)}\right),
\end{equation}
where the radii $\{\rho_x\}_{x \in \setofpoints}$ are defined in Definition \ref{definition:general_radius}. 
On the other hand, we establish the following lower bound on the expected cost of the optimal offline solution: 
\begin{equation} \label{eq:opt_general_delay}
    \E_\sigma^m[\optim(\sigma)] \ge \frac{m}{2K_f} \sum_{\anypoint \in \setofpoints} \frac{\lambda_x}{\lambda(\setofpoints)} \cdot \rho_x.
\end{equation}
The proofs of inequalities \eqref{eq:radius_general_delay} and \eqref{eq:greedy_general_delay} are almost identical as the proofs of Lemmas \ref{lemma:radius_upperbound} and \ref{lemma:greedy_upperbound}. 
Here we only show how to adapt the framework developed in Section \ref{section:proof_radius} to obtain inequality \eqref{eq:radius_general_delay}. 

We use the same definition of nice and late requests. In particular, Claim \ref{claim:few_late_edges} still holds, i.e., the connection cost induced by late edges is at most $|\setofpoints| \cdot d_{\max} / 2$. 
Since the algorithm is the same, Claim \ref{claim:connection_cost_nice} still holds: for any sequence $\sigma$, the connection cost induced by all nice edges is at most $\sum_{\request \in \sigma} \radius{\location{\request}}$. 
To bound the delay cost of nice and late requests, we define $Y_r$, for each request $\request \in \sigma$ exactly as in Section \ref{section:proof_radius}:
\begin{equation*}
    Y_r := \min\left\{t(r_m) - t(r), \min_{r' \in \sigma} \left\{t(r') - t(r) \mid t(r') > t(r) \text{ and } d(\ell(r'), \ell(r)) \le \radius{\location{\request}} \right\} \right\}.
\end{equation*}
Again, Claim \ref{claim:delay_cost_radius} holds: each request is delayed by the Radius algorithm for a duration at most $Y_r$. 
The only slight difference with the proof from Section \ref{section:proof_radius} comes when we establish an upper bound on the expected total delay cost of the solution. 

Similarly, for each point $x \in \setofpoints$, let $\overline{W}(x,\radius{x}) \sim \expdistr{\lambda(\overline{B}(x,\rho_x))}$ be an exponential variable of parameter $\lambda(\overline{B}(x,\rho_x)) = \sum_{y \in \overline{B}(x,\rho_x)} \lambda_y$. 
Let us consider a random sequence $\sigma = (r_1, \dots, r_m)$ ordered with increasing arrival times, and let us assume that the $i$-th request $r_i$ is located at point $x$. 
By Proposition \ref{proposition:subset_exponential_waiting_time}, the (random) variable $Y_r$, follows the same distribution as $\min\{\overline{W}(x, \rho_x), t(r_m) - t(r_i)\}$. 
Using the general definition of radius (Definition \ref{definition:general_radius}), instead of Observation \ref{observation:lowerboundingscheme}, we obtain:
\begin{align*}
    \E_\sigma^m\left[\dc\right]
    &\le \E_\sigma^m\left[\sum_{i=1}^m f(Y_{r_i})\right] \\
    &= \sum_{i = 1}^m \sum_{x \in \setofpoints} \prob(\ell(r_i) = x) \cdot \E_\sigma^m\left[f(Y_{r_i})\mid \ell(r_i) = x\right] \\
    &= \sum_{i = 1}^m \sum_{x \in \setofpoints}  \frac{\lambda_x}{\lambda(\setofpoints)} \cdot \E\left[f(\min\{\overline{W}(x,\rho_x), t(r_m) - t(r_i))\}\right]\\
    &\le \sum_{i = 1}^m \sum_{x \in \setofpoints}  \frac{\lambda_x}{\lambda(\setofpoints)} \cdot \E\left[f(\overline{W}(x,\rho_x)) \right] &&\text{($f$ is non-decreasing)}\\
    &\le \sum_{i = 1}^m \sum_{x \in \setofpoints}  \frac{\lambda_x}{\lambda(\setofpoints)} \cdot \rho_x &&\text{(Definition \ref{definition:general_radius})}\\
    &= m \sum_{x \in \setofpoints}  \frac{\lambda_x}{\lambda(\setofpoints)} \cdot \rho_x,
\end{align*}
which completes the proof of inequality \eqref{eq:radius_general_delay}.

\paragraph{Proof of inequality \eqref{eq:opt_general_delay}.} 
\noindent The proof is similar as the proof of Lemma \ref{lemma:lowerboundingscheme}. 
The main difference is that instead of using Claim \ref{claim:exponential-2}, we use the definition of $K_f$. 

Similarly, given a request sequence $\sigma$ and any request $r \in \sigma$, we define the \emph{minimum total cost of $r$ in $\sigma$} as 
\begin{equation*}
    c(\sigma, r) := \min_{r' \in \sigma, r' \neq r} \left\{d(\ell(r),\ell(r')) + f(t(r) - t(r'))\right\}.
\end{equation*}
With the same proof as Claim \ref{claim:opt_tilde}, we have $\optim(\sigma) \ge \frac{1}{2}\sum_{r \in \sigma} c(\sigma, r)$. 
\begin{claim}
\label{claim:expected_cost_x_general}
Given any sequence $\sigma$, we order the requests in $\sigma = (r_1, \dots, r_m)$ according to their arrival times. 
Then, for any point $x \in \setofpoints$ and any index $i \in \{1, \dots, m\}$, the expected minimum cost of the $i$-th request $r_i$ in a random sequence $\sigma$, assuming that $r_i$ is located at point $x$, is
\begin{equation*}
    \E_\sigma^m[c(\sigma, r_i) \mid \ell(r_i) = \anypoint] \ge  \frac{\rho_x}{K_f}.
\end{equation*}
\end{claim}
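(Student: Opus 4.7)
The plan is to mirror the proof of Claim \ref{claim:expected_cost_x}, replacing the explicit exponential-tail computation by a reduction to the quantity defining $K_f$. First, I would extend every random sequence $\sigma$ to a doubly-infinite sequence $\overline{\sigma}$ by appending past and future requests generated by the same centralised Poisson process, so that truncating $\overline{\sigma}$ to $m$ consecutive requests recovers the law of $\sigma$ and satisfies $c(\overline{\sigma}, r_i) \le c(\sigma, r_i)$. By stationarity of the extended process, the conditional expectation $\E_{\overline{\sigma}}[c(\overline{\sigma}, r_i) \mid \ell(r_i) = x]$ does not depend on $i$, so it suffices to bound it for $i = 0$, and I may assume $t(r_0) = 0$ by shifting time.

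Next, define $W^- := \min_{j < 0}\{-t(r_j):\ d(\ell(r_j), x) < \rho_x\}$ and $W^+ := \min_{j > 0}\{t(r_j):\ d(\ell(r_j), x) < \rho_x\}$, the waiting times to the nearest request before and after $r_0$ landing in the open ball $B^\circ(x, \rho_x)$. Splitting the minimum defining $c(\overline{\sigma}, r_0)$ according to whether the witness request lies inside or outside that ball, lower-bounding the connection-plus-delay cost by the connection cost $\rho_x$ outside and by $f(|t(r_j)|)$ inside (using monotonicity of $f$), I obtain
\[
c(\overline{\sigma}, r_0) \ \ge\ \min\bigl\{\, f\bigl(\min\{W^-, W^+\}\bigr),\ \rho_x \,\bigr\}.
\]
Applying Proposition \ref{proposition:subset_exponential_waiting_time} forward and backward from $t = 0$ together with Proposition \ref{proposition:minimum}, the variables $W^-, W^+$ are mutually independent and each distributed as $\expdistr{\mu}$ with $\mu := \lambda(B^\circ(x, \rho_x))$, so $Z := \min\{W^-, W^+\} \sim \expdistr{2\mu}$.

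The main obstacle is the remaining step: lower-bounding $\E[\min\{f(Z), \rho_x\}]$ by $\rho_x / K_f$. The generalised radius definition together with the stated analogue of Observation \ref{observation:lowerboundingscheme} gives $\rho_x \le \E[f(X)]$ for $X \sim \expdistr{\mu}$, but naively replacing $\rho_x$ by $\E[f(X)]$ inside the minimum goes the wrong way. To bridge this gap I would use that, for any fixed $a \ge 0$, the map $t \mapsto \min\{a, t\}$ is non-negative, concave and vanishes at $0$, hence satisfies the chord inequality $\min\{a, t\} \ge (t/T)\min\{a, T\}$ for $0 \le t \le T$. Applying this pointwise with $a = f(Z)$, $t = \rho_x$, $T = \E[f(X)]$ and taking expectation yields
\[
\E[\min\{f(Z), \rho_x\}] \ \ge\ \frac{\rho_x}{\E[f(X)]} \cdot \E\bigl[\min\{f(Z), \E[f(X)]\}\bigr] \ \ge\ \frac{\rho_x}{\E[f(X)]} \cdot \frac{\E[f(X)]}{K_f} \ =\ \frac{\rho_x}{K_f},
\]
the second inequality being exactly the definition of $K_f$ specialised to the parameter $\mu$. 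Combining this with the pointwise bound on $c(\overline{\sigma}, r_0)$ concludes the proof.
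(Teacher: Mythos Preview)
Your proof is correct. The overall scaffolding matches the paper's (extension to a bi-infinite sequence, reduction to $r_0$, the pointwise lower bound $c(\overline\sigma,r_0)\ge\min\{f(Z),\rho_x\}$ with $Z=\min\{W^-,W^+\}\sim\expdistr{2\mu}$ and $\mu=\lambda(B^\circ(x,\rho_x))$), but the final step is handled differently. The paper's route is to pass to an auxiliary exponential parameter for which $\E[f(X)]=\rho_x$ holds with \emph{equality}, replace $Z$ by the corresponding $X'$ via stochastic dominance, and then read off the factor $1/K_f$ directly from the defining ratio. You instead stay at the original parameter $\mu$, where only the inequality $\rho_x\le\E[f(X)]$ is available, and close the gap with the chord inequality $\min\{a,t\}\ge(t/T)\min\{a,T\}$ for the concave map $s\mapsto\min\{a,s\}$. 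Your device is more elementary and avoids having to introduce and justify the auxiliary parameter; the paper's device makes the appeal to $K_f$ slightly cleaner because it lands exactly on the ratio in its definition. As a side remark, the specific parameters ``$1/\rho_x$, $2/\rho_x$'' and the inequality ``$2/\rho_x\ge 2\lambda(B^\circ(x,\rho_x))$'' that appear in the paper's write-up are artifacts of the linear case and are not valid for general $f$; your concavity argument is one correct way to repair this step.
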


\begin{proof}
The proof is similar as the proof of Claim \ref{claim:expected_cost_x}. 
In particular, using the same argumentation, it is enough to establish the bound of Claim \ref{claim:expected_cost_x_general} for extended random request sequences $\overline{\sigma}$. 
Exactly as before, we can prove that 
\begin{equation*}
    \E_\sigma^m[c(\sigma, r_i) \mid \ell(r_i) = x] \ge \E\Big[\min\big\{f(\min\{W^+, W^-\}),\rho_x\big\}\Big],
\end{equation*}
where $W^+$ and $W^-$ are two independent random exponential variables of parameter $\lambda(B^\circ(x,\rho_x))$. 
In particular, the random variable $W = \min\{W^+, W^-\}$ is an exponential variable of parameter $2\lambda(B^\circ(x, \rho_x))$. 
As we observed after Definition \ref{definition:general_radius}, we know that 
$\E[f(W^+)] \ge \rho_x$. 
Let us define $X$ and $X'$ two independent exponential variables of parameters $1 / \rho_x$ and $2 / \rho_x$ respectively. 
We have $\rho_x = \E[f(X)]$.
Using the definition of $K_f$, we have:
\begin{align*}
    \E_\sigma^m[c(\sigma, r_i) \mid \ell(r_i) = x]
    &\ge \E[\min\{f(W),\rho_x\}] \ge \E\big[\min\{f(X'),\rho_x\}\big]\\
    &= \frac{\E\Big[\min\big\{f(X'),\E[f(X)]\big\}\Big]}{\E[f(X)]} \cdot \E[f(X)] \ge \frac{\rho_x}{K_f},
\end{align*}
where the second inequality holds since ${2}/{\rho_x} \ge 2\lambda(B^\circ(x,\rho_x)))$ and $t \mapsto \min\{f(t), \rho_x\}$ is non-decreasing. 
This concludes the proof of Claim \ref{claim:expected_cost_x_general}. 
We obtain inequality \eqref{eq:opt_general_delay} from this claim, similarly as in the proof of Lemma \ref{lemma:lowerboundingscheme}. 
This concludes the proof of Theorem \ref{theorem:general_delay}.
\end{proof}

\subsection{Proof of Proposition \ref{proposition:delay_alpha}}
In this section, we prove that $K_f = e^{O({\alpha})}$ when $f(t) = t^\alpha$ and $\alpha > 0$. 
Let $\mu > 0$ and define $X$ and $X'$ two independent exponential variables of parameters $\mu$ and $2\mu$ respectively. 
We have
\begin{equation*}
    \E[f(X)]=\int_0^\infty t^\alpha \mu e^{-\mu t}dt=\frac{\Gamma(\alpha+1)}{\mu^\alpha}
\end{equation*}
where $\Gamma(\cdot)$ is the Gamma function. We obtain the follows expression.  
\begin{align*}
    \E[\min\{f(X'),\E[f(X)]\}] = \int_0^{\sqrt[\alpha]{\E[f(X)]}} t^\alpha \cdot 2\mu e^{-2\mu t}dt + {\E[f(X)]} \cdot \prob\left(f(X') > \E[f(X)]\right) 
\end{align*}
One can express the first term using the incomplete Gamma function. 
However for simplicity, we only lower this term by $0$, and we obtain
\begin{align*}
    \frac{\E[\min\{f(X'),\E[f(X)]\}]}{{\E[f(X)]}}
    &\ge \prob\left(f(X') > \E[f(X)]\right) 
    = \prob\left(X'>f^{-1}\left(\E[f(X)]\right)\right) \\
    &= \prob\left(X' > \sqrt[\alpha]{\frac{\Gamma(\alpha+1)}{\mu^\alpha}}\right)
    = \text{exp}\left(-2\mu\sqrt[\alpha]{\frac{\Gamma(\alpha+1)}{\mu^\alpha}}\right).
\end{align*}
Thanks to Stirling formula, when $\alpha$ goes to infinity, we have
\begin{equation*}
    \sqrt[\alpha]{\Gamma(\alpha+1)} \sim \frac{\alpha}{e} \cdot (1 + o(1)).
\end{equation*}
This implies that 
\begin{equation*}
    K_f \ge \frac{\E[\min\{f(X'),\E[f(X)]\}]}{{\E[f(X)]}} \ge \prob\left(f(X')>\E[f(X)]\right) \ge e^{2\alpha(1+o(1))/e} = e^{O(\alpha)}, 
\end{equation*}
which concludes the proof of the proposition. 

\section{Missing proofs in Section \ref{section:mpmdfp}}
\label{app:mpmdfp}

\paragraph{Lower bounding $\E_{\sigma}^m[\optim(\sigma)]$. }
Given any input sequence $\sigma$ and any request $\request \in \sigma$, define 
\begin{equation*}
    c'(\sigma, r) = \min\{c(\sigma, r), p\},
\end{equation*}
where $c(\sigma, r) := \min_{r' \in \sigma, r' \neq r} \left\{d(\ell(r),\ell(r')) + |t(r) - t(r')|\right\}$ is the minimum total cost of $r$, defined in the proof of Lemma \ref{lemma:lowerboundingscheme}. 
Note that if $\request$ is cleared in $\optim(\sigma)$, then a total cost of $p \ge c'(\sigma, r)$ is incurred for dealing with such request; 
otherwise, $\request$ is matched with another request $r'$ in $\optim(\sigma)$, and hence a cost of $d(\location{r}, \location{r'}) + |\arrival{r} - \arrival{r'}| \ge c(\sigma, r) \ge c'(\sigma, r)$ is incurred for producing such pair.
Therefore, we have $\optim(\sigma) \ge \frac{1}{2}\sum_{\request \in \sigma} c'(\sigma, r)$ for any input sequence $\sigma$.
Similar to the proof of Lemma \ref{lemma:lowerboundingscheme}, now we only need to prove 
\begin{equation*}
\E_{\sigma}^m[c'(\sigma, r)] \ge \frac{1 - e^{-2}}{2} \sum_{\anypoint \in \setofpoints} \frac{\lambda_x}{\lambda(\setofpoints)} \cdot \min\{\radius{\anypoint}, p\}.
\end{equation*}
Thanks to the definitions of extended sequence $\overline{\sigma}$, $W^-$ and $W^+$ introduced in the proof of Claim \ref{claim:expected_cost_x}, we further have
\begin{equation*}
    c'(\overline{\sigma}, r_0) \ge \min\Big\{\min\{W^-, W^+\}, \radius{\anypoint}, p\Big\} = \min\Big\{\min\{W^-, W^+\}, \min\{\radius{\anypoint}, p\}\Big\}.
\end{equation*}
Note that given any metrical point $x \in \setofpoints^{(1)}$, i.e., $\radius{x} < p$ and hence $\min\{\radius{x}, p\} = \radius{x}$, by the definition of radius (Definition \ref{definition:radius}), 
\begin{equation*}
    \frac{1}{\lambda(B^\circ(x, \radius{x}))} \ge \radius{x}, \text{ i.e., } \lambda(B^\circ(x, \radius{x})) \cdot \radius{x} \le 1;
\end{equation*}
given any metrical point $x \in \setofpoints^{(2)}$, i.e., $\radius{x} \ge p$ and hence $\min\{\radius{x}, p\} = p$,
\begin{equation*}
    \frac{1}{\lambda(B^\circ(x, \radius{x}))} \ge \radius{x} \ge p, \text{ i.e., } \lambda(B^\circ(x, \radius{x})) \cdot p \le 1.
\end{equation*}
In summary, for any metrical point $x \in X$, we have
\begin{equation*}
    A := \lambda(B^\circ(x, \radius{x})) \cdot \min\{\radius{x}, p\} \le 1.
\end{equation*}
By Claim \ref{claim:exponential-2}, with $a = \min\{\radius{x}, p\}$, $\mu = 2\lambda(B^\circ(x, \radius{x}))$, we thus have
\begin{equation*}
    \E_{\overline{\sigma}}[c'(\overline{\sigma}, r_0) \mid \ell(r_0) = \anypoint] 
    \ge \E_{\overline{\sigma}}\Big[\min\big\{\min\{W^-, W^+\}, \min\{\radius{\anypoint}, p\}\big\}\Big] 
    = \frac{1 - e^{-2A}}{2A} \cdot \min\{\radius{\anypoint}, p\}.
\end{equation*}
Recall that $x \mapsto \frac{1 - e^{-x}}{x}$ is a strictly decreasing function of $x > 0$ and $A \le 1$.
We have $\frac{1 - e^{-2A}}{2A} \ge \frac{1 - e^{-2}}{2}$ and hence
\begin{equation*}
    \E_{\overline{\sigma}}[c'(\overline{\sigma}, r_0) \mid \ell(r_0) = \anypoint] 
    \ge \frac{1 - e^{-2}}{2} \cdot \min\{\radius{\anypoint}, p\}.
\end{equation*}
Thanks to Proposition \ref{proposition:minimum}, we immediately have
\begin{equation*}
    \E_{\overline{\sigma}}[c'(\overline{\sigma}, r_0)] \ge \frac{1 - e^{-2}}{2} \sum_{\anypoint \in \setofpoints} \frac{\lambda_x}{\lambda(\setofpoints)} \cdot \min\{\radius{\anypoint}, p\}.
\end{equation*}

\paragraph{Upper bounding $\E_{\sigma}^m[\algor(\sigma)]$. } Given any input sequence $\sigma$, we classify the requests into two groups according to their locations
\begin{equation*}
    \sigma^{(1)} = \{r \in \sigma: \location{r} \in \setofpoints^{(1)}\} \text{ and } \sigma^{(2)} = \{r \in \sigma: \location{r} \in \setofpoints^{(2)}\}.
\end{equation*}
By definition of our algorithm, we rewrite $\algor(\sigma) = \algor_m(\sigma) + \algor_p(\sigma)$, where $\algor_m(\sigma)$ denotes the total connection + delay cost, and $\algor_p(\sigma)$ denotes the total penalty cost. 

We first upper bound $\algor_p(\sigma)$.
Note that the number of requests being cleared is at most $2 |\sigma^{(2)}|$. 
This is because, if an odd number of requests from $\sigma^{(2)}$ is cleared, then there exists one late request from $\sigma^{(1)}$ which is also cleared (i.e., the number of requests being cleared is $|\sigma^{(2)}| + 1 \le 2 |\sigma^{(2)}|$); otherwise, all the cleared requests are from $\sigma^{(2)}$. 
Since each request in $\sigma^{(2)}$ has its location's radius at least $p$, we have
\begin{equation*}
    \algor_p(\sigma) \le 2  |\sigma^{(2)}| \cdot p \le 2\sum_{r \in \sigma^{(2)}} \min\{\radius{\location{r}},p\}.
\end{equation*}
Thanks to Proposition \ref{proposition:minimum}, we have
\begin{equation} \label{penalty-ineq-1}
    \E_{\sigma}^m[\algor_p(\sigma)] \le 2m \sum_{x \in \setofpoints^{(2)}} \frac{\lambda_x}{\lambda(\setofpoints)} \cdot \min\{\radius{\location{r}}, p\},
\end{equation}
since each of the $m$ requests has a probability of $\sum_{x \in \setofpoints^{(2)}} \frac{\lambda_x}{\lambda(\setofpoints)}$ to appear in $\setofpoints^{(2)}$ (i.e., in $\sigma^{(2)}$).

Next, we upper bound $\algor_m(\sigma)$. 
By definition, when a request $r \in \sigma^{(2)}$ arrives, either it is matched with a pending request $r' \in \sigma^{(1)}$ or it is cleared immediately. 
This means no delay cost is incurred for $\sigma^{(2)}$. 
Besides, if $r \in \sigma^{(2)}$ is matched with $r' \in \sigma^{(1)}$ into a pair, we know that $\location{r} \in \overline{B}(\location{r'}, \radius{\location{r'}})$ (i.e., the connection cost of this pair is bounded by $\radius{\location{r'}}$) and $r'$ is a nice request.   
As a result, $\algor_m(\sigma)$ is produced from $\sigma^{(1)}$. 
Note that requests $\sigma^{(1)}$ are matched by applying our Radius algorithm. 
Same as the proofs of Claim \ref{claim:few_late_edges}, Claim \ref{claim:connection_cost_nice} and Claim \ref{claim:delay_cost_radius}, we have 
\begin{itemize}
    \item[-] the connection costs for matching all the late requests in $\sigma^{(1)}$ is bounded by $\frac{1}{2} \cdot |\setofpoints| \cdot d_{\max}$;
    \item[-] the connection cost for matching all the nice requests is bounded by $\sum_{r \in \sigma^{(1)}} \radius{\location{r}}$;
    \item[-] each request $r \in \sigma^{(1)}$ is delayed for a duration at most $Y_r$ (see $Y_r$'s definition above Claim \ref{claim:delay_cost_radius}).
\end{itemize}
As a result, thanks to the proof of Lemma \ref{lemma:radius_upperbound}, here we have
\begin{equation} \label{penalty-ineq-2}
    \E_{\sigma}^m[\algor_m(\sigma)] \le \left(2m \sum_{\anypoint \in \setofpoints^{(1)}} \frac{\lambda_x}{\lambda(\setofpoints)} \cdot \radius{\anypoint}\right) + \frac{1}{2} \cdot |\setofpoints| \cdot d_{\max}.
\end{equation}
By combining inequalities \eqref{penalty-ineq-1} with \eqref{penalty-ineq-2} together, we have
\begin{eqnarray}
    \E_{\sigma}^m[\algor(\sigma)] &=& \E_{\sigma}^m[\algor_m(\sigma)] + \E_{\sigma}^m[\algor_p(\sigma)] \nonumber \\
    &\le& \left(2m \sum_{\anypoint \in X} \frac{\lambda_x}{\lambda(\setofpoints)} \cdot \min\{\radius{\anypoint}, p\}\right) + \frac{1}{2} \cdot |\setofpoints| \cdot d_{\max}, \nonumber
\end{eqnarray}
which concludes the proof. 

\section{Missing proofs in Section \ref{section:asymmetric}}
\label{app:asymmetric}

\paragraph{Lower bounding $\E^m_\sigma[\optim(\sigma)]$.} In order to establish $\roe(\greedy) = 16/ (1 - e^{-2})$, we need to first re-define the radius for this asymmetric MPMD. 
\begin{definition}
    For each point $x \in \setofpoints$, define 
    \begin{eqnarray*}
        \overline{B}(x, u) &:=& \left\{y \in \setofpoints: \frac{d(x, y) + d(y, x)}{2} \le u \right\}, \\
        B^o(x, u) &:=& \left\{y \in \setofpoints: \frac{d(x, y) + d(y, x)}{2} < u\right\}.
    \end{eqnarray*}
\end{definition}

\begin{definition}
    For each point $x \in \setofpoints$, define its radius $\radius{x}$ as the smallest value $u \in \R^+ \cup \{\infty\}$ s.t.
    \begin{equation*}
        u \ge \frac{1}{\lambda(\overline{B}(x, u))}.
    \end{equation*}
\end{definition}

\begin{observation}
    For each point $x \in \setofpoints$, we have
    \begin{equation*}
        \frac{1}{\lambda(B^o(x, \rho_x))} \ge \rho_x \ge \frac{1}{\lambda(\overline{B}(x, \rho_x))}.
    \end{equation*}
\end{observation}

With the help of the radius definition, we again have
\begin{lemma}
    For asymmetric MPMD in the Poisson arrival model, the expected cost of the optimal offline solution, over all random sequences consisting of $m$ requests, satisfies
    \begin{equation*}
        \E_\sigma^m[\optim(\sigma)] \ge m \cdot \frac{1 - e^{-2}}{4} \sum_{x \in \setofpoints} \frac{\lambda_x}{\lambda(\setofpoints)} \cdot \radius{x}.
    \end{equation*}
\end{lemma}
Here comes the proof. 
Given any input sequence $\sigma$ and any request $r \in \sigma$, we define the minimum total cost of $r$ as
\begin{equation*}
    c(\sigma, r) := \min_{r' \in \sigma: r' \neq r} \left\{\frac{d(\ell(r), \ell(r')) + d(\ell(r'), \ell(r))}{2} + |t(r) - t(r')|\right\}.
\end{equation*}

\begin{claim}
    For any input sequence $\sigma$ it holds that $\optim(\sigma) \ge \frac{1}{2} \cdot \sum_{r \in \sigma} c(\sigma, r)$. 
\end{claim}
\begin{proof}
    For any pair $(r, r') \in \optim(\sigma)$, a cost of $$\frac{d(\ell(r), \ell(r')) + d(\ell(r'), \ell(r))}{2} + |t(r) - t(r')|$$ is incurred. 
    By definition of $c(\sigma, r)$, we have $$\frac{d(\ell(r), \ell(r')) + d(\ell(r'), \ell(r))}{2} + |t(r) - t(r')| \ge c(\sigma, r).$$ 
    Therefore, $2 \cdot \optim(\sigma) \ge \sum_{r \in \sigma} c(\sigma, r)$ and hence the claim.
\end{proof}

\begin{claim}
    Given a random sequence $\sigma$, we order the requests in $\sigma$ according to their arrivals. For any point $x \in \setofpoints$, the expected minimum cost of the $i$-th request $r_i$ in $\sigma$, assuming that $r_i$ is located at $x$, is
    \begin{equation*}
        \E_\sigma^m[c(\sigma, r_i) \mid \ell(r_i) = x] \ge \frac{1 - e^{-2}}{2} \cdot \radius{x}.
    \end{equation*}
\end{claim}
\begin{proof}
    We extend the random sequence $\sigma = (r_1, \dots, r_m)$ by infinite many random requests $r_j$ ($j \le 0$ and $j > m$) as
    \begin{equation*}
        \overline{\sigma} = (\dots, r_{-1}, r_0, r_1, \dots, r_m, \dots).
    \end{equation*}
    Note that for each integer $j$, $(t(r_{j+1}) - t(r_j)) \sim \expdistr{\lambda(\setofpoints)}$ and $\prob(\ell(r_j) = x) = \lambda_x / \lambda(\setofpoints)$. 
    This implies that in $\overline{\sigma}$, for any point $x \in \setofpoints$, with probability 1 there exist indexes $j \le 0$ and $j' > m$ satisfying $\ell(r_j) = \ell(r_{j'}) = x$.
    Furthermore, we have
    \begin{equation*}
        \E_\sigma^m[c(\sigma, r_i) \mid \ell(r_i) = x] \ge \E_{\overline{\sigma}}[c(\overline{\sigma}, r_i) \mid \ell(r_i) = x]. 
    \end{equation*}
    Note that for any $j, j' \in \{1, \dots, m\}$,
    \begin{equation*}
        \E_{\overline{\sigma}}[c(\overline{\sigma}, r_j) \mid \ell(r_j) = x] = \E_{\overline{\sigma}}[c(\overline{\sigma}, r_{j'}) \mid \ell(r_{j'}) = x] = \E_{\overline{\sigma}}[c(\overline{\sigma}, r_0) \mid \ell(r_0) = x].
    \end{equation*}
    As a result, now we only need to prove 
    \begin{equation*}
        \E_{\overline{\sigma}}[c(\overline{\sigma}, r_0) \mid \ell(r_0) = x] \ge \frac{1 - e^{-2}}{2} \cdot \rho_x.
    \end{equation*}
    Define 
    \begin{eqnarray*}
        W^- &:=& \min_{j < 0} \{-t(r_j): \frac{d(x, \ell(r_j)) + d(\ell(r_j), x)}{2} < \rho_x \} \ge 0 \\
        W^+ &:=& \max_{j > 0} \{t(r_j): \frac{d(x, \ell(r_j)) + d(\ell(r_j), x)}{2} < \rho_x\} \ge 0.
    \end{eqnarray*}
    We have
    \begin{eqnarray*}
        \lefteqn{c(\overline{\sigma}, r_0)} \\ 
        &=& \min_{j \neq 0} \left\{\frac{d(x, \ell(r_j)) + d(\ell(r_j), x)}{2} + |t(r_j)|\right\} \\
        &\ge& \min\left\{ \min_{j \neq 0} \Big\{\frac{d(x, \ell(r_j)) + d(\ell(r_j), x)}{2} + |t(r_j)|\Big\}, \rho_x \right\} \\
        &=& \min\left\{ \min_{j < 0} \Big\{\frac{d(x, \ell(r_j)) + d(\ell(r_j), x)}{2} + |t(r_j)|\Big\}, \min_{j > 0} \Big\{\frac{d(x, \ell(r_j)) + d(\ell(r_j), x)}{2} + |t(r_j)|\Big\}, \rho_x \right\} \\
        &=& \min\{\min\{W^-, W^+\}, \rho_x\}.
    \end{eqnarray*}
    Again, $W^-$ and $W^+$ are mutually independent and follow the same exponential distribution $\expdistr{\lambda(B^o(x, \rho_x))}$. 
    We thus have $\min\{W^-, W^+\} \sim \expdistr{2\lambda(B^o(x, \rho_x))}$.
    Thanks to Claim \ref{claim:exponential-2}, we have this claim and hence the lower bounding scheme.
\end{proof}

\paragraph{Upper bounding the cost of Greedy algorithm. } For this asymmetric distance case, the greedy algorithm shall match two requests $r, r'$ into a pair when their total delay cost is at least $\frac{d(\ell(r), \ell(r')) + d(\ell(r'), \ell(r))}{2}$. 

\RestyleAlgo{boxruled}
\LinesNumbered
\SetAlgoVlined
\begin{algorithm}
\caption{Greedy}
\label{pseudocode:greedy_asymmetric}
\KwIn{A sequence $\sigma$ of requests.}
\KwOut{A perfect matching of the requests.}
\For{any time $t$}{
    \If{
        there exist pending requests $\request, \request'$ such that $(t - t(\request)) + (t - t(\request')) \ge \Big(d(\ell(r), \ell(r')) + d(\ell(r'), \ell(r))\Big) /2$
    }{
        match them into a pair with ties broken arbitrarily.
    }
}
\end{algorithm}
\vspace{-6pt}

Again, denoting by $w(r)$ the waiting time of a request $r \in \sigma$, we have
\begin{equation*}
    \greedy(\sigma) \le 2 \sum_{r \in \sigma} w(r).
\end{equation*}

We now focus on bounding the waiting time of each request. 
To do this, we distinguish two types of requests. 
For each request $r$, define $t'(r) := t(r) + \rho_{\ell(r)}$.
We say that $r$ is a \emph{late} request if 
\begin{itemize}
    \item[-] $r$ is still pending at time $t'(r)$ and
    \item[-] there is no request $r'$ arriving within the closed ball of $r$'s location (i.e., $d(\ell(r), \ell(r')) \le \rho_{\ell(r)}$) after time $t'(r)$.
\end{itemize}
Otherwise, we say that $r$ is a \emph{nice} request, and we define
\begin{equation*}
    Y^\text{nice}_r :=
    \begin{cases}
    0 & \textit{ if } r \textit{ is matched at time } t'(r);\\
    \displaystyle\min_{r'\in\sigma} \left\{t(r') - t'(r) \mid t(r') > t'(r) \text{ and } d(\ell(r'),\ell(r)) \le \rho_{\ell(r)}\right\} & \textit{ otherwise.}
    \end{cases}
\end{equation*}
We bound the waiting time of nice requests as follows:
\begin{claim}
    For each nice request $r \in \sigma$, we have $w(r) \le \rho_{\ell(r)} + Y^\text{nice}_r$.
\end{claim}

We now bound the total delay time induced by late requests. 
Unfortunately, the waiting time of a late request can possibly be as large as the diameter $d_{\max} = \max_{x, y \in \setofpoints} d(x,y)$ of the metric space. 
However, we show that there are only constantly many such requests. 
Let $t(r_m)$ denote the arrival time of the last request in $\sigma$. 
For any late request $r$, define 
\begin{equation*}
    Y^\text{late}_r :=
    \begin{cases}
    0 & \textit{ if } t(r) + d_{\max} \ge t(r_m);\\
    \displaystyle\min_{r' \in \sigma} \left\{t(r') - (t(r) + d_{\max}) \mid t(r') > t(r) + d_{\max}\right\} & \textit{ otherwise.}
    \end{cases}
\end{equation*}

\begin{claim} 
    For any point $x\in \setofpoints$, there is at most one late request located on $x$. 
    In particular, there are at most $|\setofpoints|$ late requests. 
    Moreover, for each late request $r$, we have $w(r) \le d_{\max} + Y^\text{late}_r$. 
\end{claim}

We now use stochastic assumptions to upper bound the expected cost of the solution. 
\begin{lemma}
For asymmetric MPMD in the Poisson arrival model, the expected cost produced by the Greedy algorithm, over all random sequences consisting of $m$ requests, satisfies
\begin{equation*}
    \E_{\sigma}^m[\greedy(\sigma)] \le \left( 4m \sum_{\anypoint \in \setofpoints} \frac{\lambda_x}{\lambda(\setofpoints)} \cdot \radius{\anypoint}\right)+2|\setofpoints| \cdot \left(d_{\max}+\frac{1}{\lambda(\setofpoints)}\right).
\end{equation*}
where $d_{\max} := \displaystyle\max_{x,y \in \setofpoints} d(x,y)$ is the diameter of the metric space. 
\end{lemma}

\begin{proof}
Let $\sigma = (r_1, \dots, r_m)$ be a random sequence of $m$ requests, where requests are ordered with increasing arrival times. 

We first bound the expected delay cost induced by late requests. 
Suppose that the $i$-th request of the sequence, $r_i$, is late and is located on a point $x\in \setofpoints$. 
Using Definition \ref{definition:centralized_poisson_arrival}, we know that when $\sigma$ is a random sequence generated by the Poisson arrival process, the (conditional) random variable $Y^\text{late}_r$ follows an exponential distribution of parameter $\lambda(\setofpoints)$. 
In particular, we obtain
\begin{equation*}
    \E_\sigma^m[w(r_i) \mid r_i \text{ is late and }\ell(r_i) = x] \le d_{\max} + \frac{1}{\lambda(\setofpoints)}. 
\end{equation*}
Since the expectation does not depend on point $x$, and since there are at most $|\setofpoints|$ late requests, the total delay cost induced by the late requests is in expectation:
\begin{equation*}
    \E_\sigma^m\left[\sum_{\substack{i=1 \\r_i\text{ is late}}}^m w({r_i})\right]\le |\setofpoints| \cdot \left(d_{\max} + \frac{1}{\lambda(\setofpoints)}\right).
\end{equation*}
When $r_i$ is a nice request located on $x \in \setofpoints$, by Proposition \ref{proposition:subset_exponential_waiting_time}, the (conditional) random variable $Y^\text{nice}_r$ follows an exponential distribution of parameter 
\begin{equation*}
    \sum_{y \in \setofpoints: \frac{d(x,y)+d(y,x)}{2} \le \rho_x} \lambda_y = \lambda(\overline{B}(x,\rho_x)).
\end{equation*}
Using Observation \ref{observation:lowerboundingscheme} we obtain:
\begin{equation*}
    \E_\sigma^m[w(r_i) \mid r_i \text{ is nice and } \ell(r_i) = x] = \rho_x + 1 / \lambda(\overline{B}(x,\rho_x)) \le \rho_x + \rho_x = 2\rho_x.
\end{equation*}
As we observed in Definition \ref{definition:centralized_poisson_arrival}, the probability the the $i$-th request of the (random) sequence is located on $x$ is equal to $\lambda_x / \lambda(\setofpoints)$. 
Thus, the total delay cost induced by nice requests is 
\begin{align*}
    \E_\sigma^m\left[\sum_{\substack{i=1 \\r_i\text{ is nice}}}^m w({r_i})\right]
    &\le \sum_{i=1}^m \sum_{x \in \setofpoints} \prob(r_i \text{ is nice and }\ell(r_i) = x) \cdot \E_\sigma^m\left[w({r_i}) \mid r_i \text{ is nice and }\ell(r_i) = x \right]\\
    &\le \sum_{i=1}^m \sum_{x \in \setofpoints} \prob(\ell(r_i)=x) \cdot 2\rho_x\\
    &\le \sum_{i=1}^m \sum_{x \in \setofpoints} \frac{\lambda_x}{\lambda(\setofpoints)} \cdot 2\rho_x\\
    &= m \sum_{x \in \setofpoints} \frac{\lambda_x}{\lambda(\setofpoints)} \cdot 2\rho_x.
\end{align*}
Finally, putting everything together, we obtain the expected bound:
\begin{align*}
    \E_\sigma^m\left[\greedy(\sigma)\right]
    &\le \E_\sigma^m\left[2\sum_{i=1}^m w({r_i})\right] &&\text{(Claim \ref{claim:greedy_connected_at_most_delay})}\\
    &= 2 \cdot \E_\sigma^m\left[\sum_{\substack{i=1 \\r_i\text{ is nice}}}^m w({r_i})\right] + 2 \cdot \E_\sigma^m\left[\sum_{\substack{i=1 \\r_i\text{ is late}}}^m w({r_i})\right] \\
    &\le \left(4m \sum_{x\in\setofpoints}\frac{\lambda_x}{\lambda(\setofpoints)}\cdot\rho_x \right) + 2 |\setofpoints| \cdot \left(d_{\max} + \frac{1}{\lambda(\setofpoints)}\right).
\end{align*}
This concludes the proof. 
\end{proof}

\end{document}